\crefname{figure}{Figure}{Figure}
\theoremstyle{plain}
\newtheorem{remark}{Remark}
\DeclareMathAlphabet{\mathpzc}{OT1}{pzc}{m}{it}
\DeclareRobustCommand{\amgis}{\text{\reflectbox{$\sigma$}}}
\newcommand{\Paragraph}[1]{\smallskip\noindent{\bf #1}}
\newcommand{\SubParagraph}[1]{\smallskip\noindent{\em #1}}
\mathchardef\mhyphen="2D 
\newcommand{\ov}{\overline}
\newcommand{\Globals}{\mathcal{G}}
\newcommand{\Locks}{\mathcal{L}}
\newcommand{\Read}{\mathsf{r}}
\newcommand{\Write}{\mathsf{w}}
\newcommand{\Acquire}{\mathsf{acq}}
\newcommand{\Release}{\mathsf{rel}}
\newcommand{\lk}{\ell}
\newcommand{\Sequence}[1]{#1}
\newcommand{\Confl}[2]{#1 \Join #2}
\newcommand{\Trace}{t}
\newcommand{\Tree}{T}
\newcommand{\TreeNodes}{\mathcal{I}}
\newcommand{\TreeEdges}{\mathcal{R}}
\newcommand{\SysAcquires}{\mathcal{L}^A}
\newcommand{\SysReleases}{\mathcal{L}^R}
\newcommand{\SysReads}{\mathcal{R}}
\newcommand{\SysWrites}{\mathcal{W}}
\newcommand{\ReadsReleases}[1]{\SysReads\mathcal{L}({#1})}
\newcommand{\WritesAcquires}[1]{\SysWrites\mathcal{L}({#1})}
\newcommand{\WritesReads}[1]{\SysWrites\SysReads({#1})}
\newcommand{\Prec}{\ll}
\newcommand{\Vars}{\mathcal{V}}
\newcommand{\OPoset}{\mathcal{P}}
\newcommand{\OPosetQ}{\mathcal{Q}}
\newcommand{\OPosetS}{\mathcal{S}}
\newcommand{\OPosetK}{\mathcal{K}}
\newcommand{\Reversals}{\operatorname{Rv}}
\newcommand{\Distance}{\delta}
\newcommand{\Cycle}{\mathscr{C}}
\newcommand{\Event}{e}
\newcommand{\Frontier}[2]{\mathsf{Frontier_{#1}(#2)}}
\newcommand{\Past}{\mathsf{Cone}}
\newcommand{\LPast}{\mathsf{LCone}}
\newcommand{\CandidateSet}{\mathsf{CIS}}
\newcommand{\Concat}{\circ}
\newcommand{\Events}[1]{\SysEvents(#1)}
\newcommand{\Reads}[1]{\SysReads(#1)}
\newcommand{\Writes}[1]{\SysWrites(#1)}
\newcommand{\Acquires}[1]{\SysAcquires(#1)}
\newcommand{\Releases}[1]{\SysReleases(#1)}
\newcommand{\Project}{|}
\newcommand{\Domain}{\mathsf{dom}}
\newcommand{\Image}{\mathsf{img}}
\newcommand{\SeqTrace}{\tau}
\newcommand{\SeqSubTrace}{\sigma}
\newcommand{\SeqSubTraceRefl}{\amgis}
\newcommand{\True}{\mathsf{True}}
\newcommand{\False}{\mathsf{False}}
\newcommand{\System}{\mathcal{P}}
\newcommand{\Process}{p}
\newcommand{\Proc}[1]{\mathsf{p}(#1)}
\newcommand{\INDSET}{\operatorname{INDEPENDENT-SET}}
\newcommand{\Match}[2]{\mathsf{match}_{#1}(#2)}
\newcommand{\CS}[2]{\operatorname{CS}_{#1}({#2})}
\newcommand{\NestingDepth}{\gamma}
\newcommand{\LockFactor}{\zeta}
\newcommand{\OrthVec}{\mathsf{OV}}
\newcommand{\Implies}{\Rightarrow}
\newcommand{\Unordered}[3]{#1\parallel_{#2} #3}
\newcommand{\Ordered}[3]{#1 \not \parallel_{#2} #3}
\newcommand{\OpenAcquires}{\operatorname{OpenAcqs}}
\newcommand{\Refines}{\sqsubseteq}
\newcommand{\ReadPairs}[1]{\mathsf{Pairs}(#1)}
\newcommand{\ReadTriplets}[1]{\mathsf{Triplets}(#1)}
\newcommand{\StrictRefines}{\sqsubset}
\newcommand{\TO}{\mathsf{TO}}
\newcommand{\TOO}{\mathsf{TRF}}
\newcommand{\Location}[1]{\mathsf{loc}(#1)}
\newcommand{\SysEvents}{\mathcal{E}}
\newcommand{\Path}{\rightsquigarrow}
\newcommand{\Observation}{\operatorname{RF}}
\newcommand\numberthis{\addtocounter{equation}{1}\tag{\theequation}}
\newcommand{\set}[1]{\{#1\}}
\newcommand{\setpred}[2]{\{#1 \,|\, #2\}}
\preto\tabular{\setcounter{magicrownumbers}{0}}
\newcounter{magicrownumbers}
\newcommand*{\centerfloat}{%
  \parindent \z@
  \leftskip \z@ \@plus 1fil \@minus \textwidth
  \rightskip\leftskip
  \parfillskip \z@skip}
\renewcommand{\smallskip}{}
\def\thmt@rst@storecounters#1{%
\vspace{-1ex}%
  \bgroup
  \def\@currentlabel{}%
  \@for\thmt@ctr:=\thmt@innercounters\do{%
    \thmt@sanitizethe{\thmt@ctr}%
    \protected@edef\@currentlabel{%
      \@currentlabel
      \protect\def\@xa\protect\csname the\thmt@ctr\endcsname{%
        \csname the\thmt@ctr\endcsname}%
      \ifcsname theH\thmt@ctr\endcsname
        \protect\def\@xa\protect\csname theH\thmt@ctr\endcsname{%
          (restate \protect\theHthmt@dummyctr)\csname theH\thmt@ctr\endcsname}%
      \fi
      \protect\setcounter{\thmt@ctr}{\number\csname c@\thmt@ctr\endcsname}%
    }%
  }%
  \label{thmt@@#1@data}%
  \egroup
  

}%
\begin{document}

\title[]{The Complexity of Dynamic Data Race Prediction}


\author{Umang Mathur}
\affiliation{
\institution{University of Illinois, Urbana Champaign}            
\country{USA}                    
}
\email{umathur3@illinois.edu}          

\author{Andreas Pavlogiannis}
\affiliation{
\institution{Aarhus University}            
\country{Denmark}                    
}
\email{pavlogiannis@cs.au.dk}          

\author{Mahesh Viswanathan}
\affiliation{
\institution{University of Illinois, Urbana Champaign}            
\country{USA}                    
}
\email{vmahesh@illinois.edu}          


\begin{abstract}

Writing concurrent programs is notoriously hard due to scheduling
non-determinism.  The most common concurrency bugs are data races,
which are accesses to a shared resource that can be executed
concurrently.  Dynamic data-race prediction is the most standard
technique for detecting data races: given an observed, data-race-free
trace $\Trace$, the task is to determine whether $\Trace$ can be
reordered to a trace $\Trace^*$ that exposes a data-race.  Although
the problem has received significant practical attention for over
three decades, its complexity has remained elusive.  In this work, we
address this lacuna, identifying sources of intractability and
conditions under which the problem is efficiently solvable. Given a
trace $\Trace$ of size $n$ over $k$ threads, our main results are as
follows.

First, we establish a general $O(k\cdot n^{2\cdot (k-1)})$
upper-bound, as well as an $O(n^k)$ upper-bound when certain
parameters of $\Trace$ are constant.  In addition, we show that the
problem is NP-hard and even W[1]-hard parameterized by $k$, and
thus unlikely to be fixed-parameter tractable.  
Second, we study the
problem over acyclic communication topologies, such as server-clients hierarchies.  
We establish an
$O(k^2\cdot d\cdot n^2\cdot \log n)$ upper-bound, where $d$ is the
number of shared variables accessed in $\Trace$.  In addition, we show
that even for traces with $k=2$ threads, the problem has no
$O(n^{2-\epsilon})$ algorithm under the Orthogonal
Vectors conjecture.  Since any trace with 2 threads defines an acyclic
topology, our upper-bound for this case is 
optimal wrt polynomial improvements for up to moderate values of $k$ and
$d$.  Finally, motivated by existing heuristics, we study a
distance-bounded version of the problem, where the task is to expose a
data race by a witness trace that is similar to $\Trace$.  We develop
an algorithm that works in $O(n)$ time when certain parameters of
$\Trace$ are constant.
\end{abstract}

\begin{CCSXML}
<ccs2012>
   <concept>
       <concept_id>10003752.10003809.10010052</concept_id>
       <concept_desc>Theory of computation~Parameterized complexity and exact algorithms</concept_desc>
       <concept_significance>500</concept_significance>
       </concept>
   <concept>
       <concept_id>10011007.10011074.10011099.10011102.10011103</concept_id>
       <concept_desc>Software and its engineering~Software testing and debugging</concept_desc>
       <concept_significance>500</concept_significance>
       </concept>
 </ccs2012>
\end{CCSXML}

\ccsdesc[500]{Theory of computation~Parameterized complexity and exact algorithms}
\ccsdesc[500]{Software and its engineering~Software testing and debugging}

\keywords{Data Race Prediction, Complexity}


\maketitle


\section{Introduction}\label{sec:intro}

A concurrent program is said to have a data race if it
can exhibit an execution in which two conflicting accesses\footnote{Two accesses
are conflicting if they access the same memory location, with one of
them being a write access.} to the same memory
location are ``concurrent''.  Data races in concurrent programs are
often symptomatic of bugs in software like data
corruption~\cite{boehmbenign2011,racemob2013,Narayanasamy2007}, pose
challenges in defining the semantics of programming languages, and
have led to serious problems in the past~\cite{SoftwareErrors2009}; it
is no surprise that data races have been deemed \emph{pure
  evil}~\cite{evil2012}.  Automatically finding data races in programs
remains a widely studied problem because of its critical importance in
building correct concurrent software. Data-race detection techniques
can broadly be classified into static and dynamic. Given
that the race-detection problem in programs is undecidable, static
race detection approaches~\cite{Pratikakis11,Naik06} are typically
conservative, produce false alarms, and do not scale to large
software. On the other hand, since dynamic
approaches~\cite{Savage97,Mattern89,Pozniansky03,Flanagan09} have the
more modest goal of discovering data races by analyzing a
\emph{single} trace, they are lightweight, and can often scale to
production-level software. Moreover, many dynamic approaches are
\emph{sound}, i.e., do not raise false race reports. The effectiveness
and scalability of dynamic approaches has lead to many practical advances on the topic.
Despite a wide-spread interest on the problem, characterizing its complexity has remained elusive.

Informally, the \emph{dynamic race prediction} problem is the
following: given an observed trace $\Trace$ of a multi-threaded
program, determine if $\Trace$ demonstrates the presence of a data
race in the program that generates $\Trace$. 
This means that
either $\Trace$ has two conflicting data accesses that are
concurrent, or a different trace resulting from scheduling the threads
of $\Trace$ in a different order, witnesses such a race. Additional
traces that result from alternate thread schedules are captured by the
notion of a \emph{correct reordering} of $\Trace$, that characterizes
a set of traces that can be exhibited by \emph{any} program that
can generate $\Trace$; a precise definition of correct reordering is
given in \cref{subsec:model}. So formally, the data race prediction
problem is, given a trace $\Trace$, determine if there is a correct
reordering of $\Trace$ in which a pair of conflicting data accesses
are concurrent.

While the data race prediction problem is clearly in NP --- guess a
correct reordering and check if it demonstrates a data race --- its
precise complexity has not been identified. Evidence based on prior work,
suggests a belief that the problem might be NP-complete. First,
related problems, like data-race detection for programs with
strong synchronization
primitives~\cite{NetzerMiller90,NetzerMiller92,NetzerMiller89}, or
verifying sequential consistency~\cite{Gibbons97}, are known
to be NP-hard. Second, all known ``complete'' algorithms run in
worst-case exponential time. These approaches either rely on an
explicit enumeration of all correct reorderings~\cite{sen2005,Chen07},
or they are symbolic approaches that reduce the race prediction
problem to a constraint satisfaction
problem~\cite{Wang2009,Said11,Huang14}. On the other hand, a slew of
``partial order''-based methods have been proposed, whose goal is to
predict data races in polynomial time, but at the cost of being
incomplete and failing to detect data races in some traces. These
include algorithms based on the classical \emph{happens-before}
partial order~\cite{Lamport78,Mattern89,tsan2009,Flanagan09,Mathur18}, and
those based on newer partial orders that improve the prediction of
data races over
happens-before~\cite{Smaragdakis12,Kini17,Roemer18,Pavlogiannis19}.

In this paper we study the problem of data-race prediction from a complexity-theoretic perspective. 
Our goal is to understand whether the problem is intractable, the causes for intractability, and
conditions under it can can be solved efficiently. 
We provide partial answers to all these questions, 
and in some cases characterize the tractability/intractability landscape precisely in the form of optimality results.


\Paragraph{Contributions.}
Consider an input trace $\Trace$ of size $n$ over $k$ threads.
Our main contributions are as follows.
We refer to \cref{subsec:summary} for a formal summary.

Our first result shows that the data-race prediction problem is solvable in $O(k\cdot n^{2\cdot (k-1)})$ time,
and can be improved to $O(n^k)$ when certain additional parameters of $\Trace$ are constant.
We note that most benchmarks used in practice have a constant number of threads~\cite{Flanagan09,Smaragdakis12,Kini17,Mathur18,Roemer18,Pavlogiannis19} ,
and in such cases our upper-bound is polynomial.

The observation that data race predication is in polynomial
time for constantly many threads naturally leads to two follow-up
questions. Does the problem remain tractable for any $k$?
And if not, is it fixed parameter tractable (FPT) wrt
$k$, i.e., is there an algorithm with running time of the form
$O(f(k)\cdot n^{O(1)})$? 
Our second result answers both these questions in the negative, by showing that the problem is W[1]-hard. 
This formally establishes the NP-hardness of the problem, 
and excludes efficient algorithms when $k$ is even moderately large (e.g., $k=\Omega(\log n)$).

We then investigate whether there are practically relevant contexts
where data-race prediction is more efficiently solvable,
i.e., the degree of the polynomial is fixed and independent of $k$.
We consider the case of traces over acyclic communication topologies, such as pipelines, server-clients hierarchies and divide-and-conquer parallelism. 
Our third result shows that, perhaps surprisingly, over such topologies
data-race prediction can be solved in $O(k^2\cdot d\cdot n^2\cdot \log n)$ time, where 
$d$ is the total number of synchronization variables (locks) and global memory locations. 

In practice, the size $n$ of the trace is by far the dominating parameter, while $k$ and $d$ are many orders of magnitude smaller.
Hence, given the above upper-bound, the relevant question is whether the complexity on $n$ can be improved further.
Our fourth result shows that this is unlikely: we show that, under the Orthogonal Vectors conjecture, there is no $O(n^{2-\epsilon})$ algorithm even for traces restricted to only $2$ threads.
As any trace with $2$ threads induces an acyclic topology, our upper-bound is (conditionally) optimal wrt polynomial improvements.

Finally, the majority of practical data-race prediction heuristics search for a data race witness among correct reorderings that are very similar to the observed trace $\Trace$, i.e., by only attempting a few event reorderings on $\Trace$.
Motivated by these approaches, we investigate the complexity of a distance-bounded version
of data-race predication, where the goal is to expose a
data race by only looking at correct reorderings of $\Trace$ that are a small distance away.
Here, distance between traces is measured by the number of critical sections and write events whose
order is reversed from $\Trace$. 
Our fifth result is a linear-time (and thus, optimal) algorithm for this problem, when certain parameters of the trace $\Trace$ are constant.
This result gives a solid basis for the principled development of fast heuristics for dynamic data-race prediction.

\SubParagraph{Technical contributions.}
Towards our main results, we make several technical contributions that might be of independent interest.
We summarize some of them below.
\begin{compactenum}
\item We improve the lower-bound of the well-known problem on verifying sequential consistency with read-mapping (VSC-rm)~\cite{Gibbons97}  from the long-lasting NP-hardness to W[1]-hard.
\item We show that VSC-rm can be solved efficiently on tree communication topologies of any number of threads, which improves a recent result of~\cite{Pavlogiannis19} for only $2$ threads, as well as a result of~\cite{Chalupa18} for more than $2$ threads.
\item The first challenge in data-race prediction given a trace $\Trace$ is to choose the set $X$ of events of $\Trace$ over which to attempt to construct a correct reordering. 
Identifying such choices for $X$ is a significant challenge~\cite{Huang14,Roemer18,Pavlogiannis19}.
We establish non-trivial upper-bounds on the number of choices for $X$, 
and show that they are constantly many when certain parameters of $\Trace$ are constant.
\item Particularly for tree communication topologies, we show that a single choice for such $X$ suffices.
\end{compactenum}

Finally, we note that our notion of a predictable data race in a trace $\Trace$ requires as a witness a reordering $\Trace^*$ of $\Trace$ in which every read event reads from the same write event as in $\Trace$.
This guarantees that $\Trace^*$ is valid in any program that produced $\Trace$.
More permissive reorderings, e.g., requiring that every read event reads the same value, are also possible, and can capture potentially more races.
Our notion of witness reflects the most common practice in race-detection literature, where trace logging typically does not track the values.


\Paragraph{Related Work.}
Antoni Mazurkiewicz~\cite{Mazurkiewicz87,Aalbersberg88} used the notion of 
\emph{traces} to mathematically
model executions of concurrent programs. 
Bertoni et. al.~\cite{Bertoni89} studied various language-theoretic
questions about Mazurkiewicz traces.
The folklore results about the NP-hardness of race detection
are often attributed to Netzer and Miller~\cite{NetzerMiller90,NetzerMiller92,NetzerMiller89}.
However, the problem considered in their work differs in significant ways
from the problem of data-race prediction.
First, the notion of feasible executions in~\cite{NetzerMiller90}
(the counterpart of the notion of correct reorderings) requires that
any two conflicting events be ordered in the same way as the observed execution, and hence, are less permissive.
Next, the NP-hardness arises from the use of complex synchronization primitives
like \texttt{wait} and \texttt{signal}, which are more powerful 
than the primitives we study here (release/acquire of locks and read/write of registers).
The results due to Netzer and Miller, thus, do not apply to the problem of data-race prediction.
Gibbons and Korach~\cite{Gibbons97} establish NP-hardness for
a closely related problem in distributed computing --- verifying sequential consistency with a read mapping (VSC-rm).
Yet again, the problem is different than the problem of race prediction.
Complexity theoretic investigations have also been undertaken for
other problems in distributed computing like linearizability~\cite{HerlihyWing90,emmi17,Gibbons97}, serializability~\cite{Papadimitriou79}
and transactional consistency~\cite{biswas19}.
Hence, although there have been many theoretical results on related problems in concurrency,
none of them addresses dynamic data-race prediction.
Our work fills this gap.

Some proofs are relegated to the appendix.


\section{Preliminaries}\label{sec:preliminaries}


\subsection{Model}\label{subsec:model}

\Paragraph{General notation.}
Given a natural number $k$, let $[k]=\{1,\dots, k\}$.
Given a function $f:X\to Y$, we let $\Domain(f)=X$ and $\Image(f)=Y$.
Given two functions $f, g$, we write $f\subseteq g$ to denote that $\Domain(f)\subseteq \Domain(g)$ and for every $x\in \Domain(f)$ we have $f(x)=g(x)$.
Given a set $X'\subseteq \Domain(f)$, 
we denote by $f\Project X'$ the function with $\Domain(f\Project X') = X'$ and $f\Project X'\subseteq f$.

\Paragraph{Concurrent program.}
We consider a shared-memory concurrent program $\System$ that consists of $k$ 
threads $\{\Process_i\}_{i\in [k]}$, under sequential consistency semantics~\cite{Shasha88}.
For simplicity of presentation we assume no thread is created dynamically
and the set $\{\Process_i\}_{i\in [k]}$ is known a-priori.
Communication between threads occurs over a set of global variables $\Globals$, 
and synchronization over a set of locks $\Locks$ such that $\Globals \cap \Locks = \emptyset$.
We let $\Vars=\Globals\cup \Locks$ be the set of all variables of $\System$.
Each thread is deterministic, and performs a sequence of operations.
We are only interested in the operations that access a global variable or a lock, which are called \emph{events}. 
In particular, the allowed events are the following.
\begin{compactenum}
\item Given a global variable $x\in \Globals$, a thread can either \emph{write} to $x$ via an event $\Write(x)$ or \emph{read} from $x$ via an event $\Read(x)$.
\item Given a lock $\lk\in \Locks$, a thread can either \emph{acquire} $\lk$ via an event $\Acquire(\lk)$ or \emph{release} $\lk$ via an event $\Release(\lk)$.

\end{compactenum}
Each event is atomic, represented by a tuple $(a,b,c,d)$, where
\begin{compactenum}
\item $a\in\set{\Write, \Read, \Acquire, \Release}$ represents the type of the event (i.e., write, read, lock-acquire or lock-release event),
\item $b$ represents the variable or lock that the event accesses, 
\item $c$ is the thread of the event, and
\item $d$ is a unique identifier of the event. 
\end{compactenum}
Given an event $\Event$, we let $\Location{\Event}$ denote the global variable or lock that $\Event$ accesses.
We occasionally write $\Event(x)$ to denote an event $\Event$ with $\Location{\Event}=x$, while the thread and event id is often implied by the context.
We denote by $\SysWrites_{\Process}$ 
(resp. $\SysReads_{\Process}$, $\SysAcquires_{\Process}$, $\SysReleases_{\Process}$) 
the set of all write (resp. read, lock-acquire, lock-release) events 
that can be performed by thread $\Process$.
We let $\SysEvents_{\Process}=\SysWrites_{\Process}\cup \SysReads_{\Process}\cup \SysAcquires_{\Process}\cup \SysReleases_{\Process}$. 
We denote by $\SysEvents=\bigcup_{\Process} \SysEvents_{\Process}$, $\SysWrites=\bigcup_{\Process} \SysWrites_{\Process}$, $\SysReads=\bigcup_{\Process} \SysReads_{\Process}$, $\SysAcquires=\bigcup_{\Process} \SysAcquires_{\Process}$, $\SysReleases=\bigcup_{\Process} \SysReleases_{\Process}$ the events, write, read, lock-acquire and lock-release events of the program $\System$, respectively.
Given an event $\Event\in \SysEvents$, we denote by $\Proc{\Event}$ the thread of $\Event$.
Finally, given a set of events $X\subseteq \SysEvents$, we denote by $\Reads{X}$ (resp., $\Writes{X}$, $\Acquires{X}$, $\Releases{X}$) the set of read (resp., write, lock-acquire, lock-release) events of $X$.
For succinctness, we let $\WritesReads{X}=\Writes{X}\cup\Reads{X}$, $\ReadsReleases{X}=\Reads{X}\cup \Releases{X}$ and $\WritesAcquires{X}=\Writes{X}\cup \Acquires{X}$.
The semantics of $\System$ are the standard for sequential consistency~\cite{Shasha88}. 

\Paragraph{Conflicting events.}
Given two distinct events $\Event_1, \Event_2\in \SysEvents$, we say that $\Event_1$ and $\Event_2$ are \emph{conflicting}, denoted by $\Confl{\Event_1}{\Event_2}$, if 
(i)~$\Location{\Event_1}=\Location{\Event_2}$ (i.e., both events access the same global variable or the same lock) and
(ii)~$\{\Event_1, \Event_2 \}\cap \SysWrites\neq \emptyset$ or
$\{\Event_1, \Event_2 \}\cap \SysAcquires\neq \emptyset$
i.e., at least one of them is either a write event or a lock-acquire event.
We extend the notion of conflict to sets of events in the natural way:
two sets of events $X_1, X_2\subseteq \SysEvents$ 
are called \emph{conflicting}, denoted by $\Confl{X_1}{X_2}$
if $\exists (\Event_1, \Event_2) \in (X_1 \times X_2)$ such that $\Confl{\Event_1}{\Event_2}$.

\Paragraph{Event sequences.}
Let $\Trace$ be a sequence of events.
We denote by $\Events{\Trace}$ the set of events, by $\Locks(\Trace)$ the set of locks, and by $\Globals(\Trace)$ the set of global variables in $\Trace$.
We let $\Writes{\Trace}$ (resp., $\Reads{\Trace}$, $\Acquires{\Trace}$, $\Releases{\Trace}$) denote the set $\Writes{\Events{\Trace}}$ (resp., $\Reads{\Events{\Trace}}$, $\Acquires{\Events{\Trace}}$, $\Releases{\Events{\Trace}}$), i.e., it is the set of write (resp., read, lock-acquire, lock-release) events of $\Trace$.
Given two distinct events $\Event_1, \Event_2\in \Events{\Trace}$, 
we say that $\Event_1$ \emph{is earlier than}  $\Event_2$ in $\Trace$,
denoted by $\Event_1 <_{\Trace} \Event_2$ iff $\Event_1$ appears before $\Event_2$ in $\Trace$.
We say that $\Event_1$ is \emph{thread-ordered earlier than} $\Event_2$,
denoted $\Event_1<_{\TO(\Trace)}\Event_2$, 
when $\Event_1 <_{\Trace} \Event_2$ and $\Proc{\Event_1}=\Proc{\Event_2}$.
For events $\Event_1, \Event_2 \in \Events{\Trace}$,
we say $\Event_1 \leq_\Trace \Event_2$
(resp. $\Event_1 \leq_{\TO(\Trace)} \Event_2$)
if either $\Event_1 = \Event_2$ or $\Event_1 <_\Trace \Event_2$
(resp. $\Event_1 <_{\TO(\Trace)} \Event_2$).
We will often use $<_\TO$ (resp. $\leq_\TO$) in place of $<_{\TO(\Trace)}$ (resp. $\leq_{\TO(\Trace)}$)
when the trace $\Trace$ is clear from context.
Given a set of events $X\subseteq \SysEvents$, we denote by $\Trace\Project X$ the \emph{projection} of $\Trace$ onto $X$.
Given a thread $\Process_i$, we let $\Trace\Project\Process_i=\Trace\Project\SysEvents_{\Process_i}$.
Given two sequences $\Trace_1, \Trace_2$, we denote by $\Trace_1\circ \Trace_2$ their concatenation.

\Paragraph{Lock events.}
Given a sequence of events $\Trace$ and a lock-acquire event $\Acquire\in \Acquires{\Trace}$, 
we denote by $\Match{\Trace}{\Acquire}$ the earliest lock-release event
 $\Release\in \Releases{\Trace}$ such that $\Confl{\Release}{\Acquire}$ 
 and $\Acquire<_{\TO}\Release$, and let $\Match{\Trace}{\Acquire}=\bot$ if no such lock-release event exists.
If $\Match{\Trace}{\Acquire}\neq \bot$, we require that $\Proc{\Acquire}=\Proc{\Match{\Trace}{\Acquire}}$, i.e., the two lock events belong to the same thread.
Similarly, given a lock-release event $\Release\in \Releases{\Trace}$, we denote by $\Match{\Trace}{\Release}$ the latest acquire event $\Acquire\in \Acquires{\Trace}$ such that $\Match{\Trace}{\Acquire}=\Release$ and require that such a lock-acquire event always exists.
Given a lock-acquire event $\Acquire$, the \emph{critical section} $\CS{\Trace}{\Acquire}$ is the set of events $\Event$ such that 
(i)~$\Acquire<_{\TO}\Event$ and
(ii)~if $\Match{\Trace}{\Acquire}\neq \bot$, then $\Event<_{\TO}\Match{\Trace}{\Acquire}$.
For simplicity of presentation, we assume that locks are not re-entrant.
That is, for any two lock-acquire events 
$\Acquire_1, \Acquire_2$ with $\Confl{\Acquire_1}{\Acquire_2}$ and 
$\Acquire_1<_{\TO}\Acquire_2$,
we must have $\Match{\Trace}{\Acquire_1} <_{\TO} \Acquire_2$.
The \emph{lock-nesting depth} of $\Trace$ is the maximum number $\ell$ such that there exist distinct lock-acquire events
$\{\Acquire_i\}_{i=1}^{\ell}$ with 
(i)~$\Acquire_1<_{\TO}\Acquire_2<_{\TO}\dots<_{\TO}\Acquire_{\ell}$, and
(ii)~for all $i\in [\ell]$, if $\Match{\Trace}{\Acquire_i}\in \Events{\Trace}$ then $\Acquire_{\ell}<_{\TO}\Match{\Trace}{\Acquire_i}$.

\Paragraph{Traces and reads-from functions.}
An event sequence $\Trace$ is called a \emph{trace} if 
for any two lock-acquire events 
$\Acquire_1,\Acquire_2\in \Acquires{\Trace}$, if $\Location{\Acquire_1}=\Location{\Acquire_2}$ 
and $\Acquire_1<_{\Trace}\Acquire_2$,
then $\Release_1 =\Match{\Trace}{\Acquire_1}\in \Releases{\Trace}$ and $\Release_1<_{\Trace}\Acquire_2$.
A trace therefore ensures that locks
obey mutual exclusion, i.e., critical sections over the same lock
cannot overlap.

Given a trace $\Trace$, we define its \emph{reads-from function} $\Observation_{\Trace}:\Reads{\Trace}\to \Writes{\Trace}$ as follows:
$\Observation_{\Trace}(\Read)=\Write$ iff
$\Write<_{\Trace} \Read \text{ and } \forall \Write'\in \Writes{\Trace} \text{ with } \Confl{\Write}{\Write'}$, we have $\Write'<_{\Trace} \Read \Implies \Write'<_{\Trace} \Write$.
That is, $\Observation_{\Trace}$ maps every read event $\Read$ to the write event $\Write$ that $\Read$ observes in $\Trace$.
For simplicity, we assume that $\Trace$ starts with a write event to every location, hence $\Observation_{\Trace}$ is well-defined.
For notational convenience, we extend the reads-from function $\Observation_{\Trace}$ to lock-release events, such that, for any lock-release event $\Release\in \Releases{\Trace}$, we have $\Observation_{\Trace}(\Release)=\Match{\Trace}{\Release}$, 
i.e., $\Release$ observes its matching lock acquire event.

\Paragraph{Correct reordering, enabled events and predictable data races.}
A trace $\Trace^*$ is a correct reordering of trace
$\Trace$ if 
\begin{enumerate*}[label=(\roman*)]
\item $\Events{\Trace^*} \subseteq \Events{\Trace}$,
\item for every thread $\Process_i$,
we have that $\Trace^*\Project \Process_i$ is a prefix of $\Trace\Project \Process_i$, and
\item $\Observation_{\Trace^*}\subseteq \Observation_{\Trace}$,
i.e., the reads-from functions of $\Trace^*$ and $\Trace$ agree on their 
common read and lock-release events.
\end{enumerate*}
Given a trace $\Trace$, an event $\Event \in \Events{\Trace}$ and a correct reordering
$\Trace^*$ of $\Trace$, we say that $\Event$ is \emph{enabled}
in $\Trace^*$ if $e \not\in \Events{\Trace^*}$
and for every $e' \in \Events{\Trace}$ such that
$e' <_{\TO} e$, we have that $e' \in \Events{\Trace^*}$.
Given two conflicting events $\Event_1, \Event_2 \in \Events{\Trace}$
with $\Location{\Event_1} = \Location{\Event_2} \in \Globals$,
we say the pair $(\Event_1, \Event_2)$ is a predictable data race
of trace $\Trace$ if there is a correct reordering
$\Trace^*$ of $\Trace$ such that both $\Event_1$ and $\Event_2$ are enabled
in $\Trace^*$.
Finally, we say $\Trace$ has a predictable data race if there is a pair $(\Event_1, \Event_2)$
which is a predictable data race of $\Trace$.

Note that predictability of a race is defined with respect to 
a correct reordering in which every read event observes the same write event.
This requirement guarantees that the correct reordering is a valid trace of 
any concurrent program that produced the initial trace.
Hence, every such program is racy.
More permissive notions of predictability can also be defined, e.g., by requiring that, in a correct reordering, every read event reads the same value (possibly from a different write event).
This alternative definition would capture potentially more predictable races.
Our definition of correct reorderings reflects the most common practice in race-detection literature,
where trace logging typically does not track the values~\cite{Kini17,Mathur18,Smaragdakis12,Pavlogiannis19,Roemer18}.

\Paragraph{The communication topology.}
The trace $\Trace$ naturally induces a \emph{communication topology} 
graph $G=(V,E)$ where
\begin{enumerate*}[label=(\roman*)]
\item $V = \set{\Process_i}_i$ and
\item $E = \setpred{(\Process_i, \Process_j)}{i\neq j \text{ and } \Confl{\Events{\Process_i}}{\Events{\Process_j}}}$.
\end{enumerate*}
In words, we have one node in $G$ per thread, and there is  an edge between two distinct nodes if the corresponding threads execute conflicting events
(note that $G$ is undirected).
For simplicity, we assume that $G$ is connected.
In later sections, we will make a distinction between tree topologies (i.e., that do not contain cycles) and general topologies (that might contain cycles).
Common examples of tree topologies include stars (e.g., server-clients), pipelines, divide-and-conquer parallelism, and the special case of two threads.


\subsection{Problem Statement}\label{subsec:problem}

In the dynamic data-race prediction problem, 
we are given an observed trace $\Trace$, 
and the task is to identify whether $\Trace$ has a predictable data race.
In this work we focus on the following decision problem --- given a trace $\Trace$ and two (read or write) conflicting events $\Event_1, \Event_2\in \Events{\Trace}$, the task is to decide whether $(\Event_1, \Event_2)$ is a predictable data race of $\Trace$.
Clearly, having established the complexity of the decision problem, the general problem can be solved by answering the decision problem for all $O(n^2)$ pairs of conflicting variable access events of $\Trace$.
In the other direction, as the following lemma observes, detecting whether $\Trace$ has some predictable data race is no easier than detecting whether a given event pair of $\Trace$ constitutes a predictable data race.
We refer to \cref{sec:proofs_preliminaries} for the proof.

\smallskip
\begin{restatable}{lemma}{lemdecisiongivenpair}\label{lem:decision_given_pair}
Given a trace $\Trace$ of length $n$ and two events $\Event_1, \Event_2\in \Events{\Trace}$, we can construct a trace $\Trace'$ in $O(n)$ time so that $\Trace'$ has a predictable data race iff $(\Event_1, \Event_2)$ is a predictable data race of $\Trace$ .
\end{restatable}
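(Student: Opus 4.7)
The plan is to reduce the pair-race prediction problem to the existential-race prediction problem via a local, linear-time transformation of $\Trace$. First I handle trivial cases: if $\Event_1$ and $\Event_2$ do not conflict or lie in the same thread, then $(\Event_1,\Event_2)$ cannot be a predictable race of $\Trace$, so I output any trivial race-free trace in $O(1)$ time. Otherwise, let $p_i = \Proc{\Event_i}$ for $i\in\{1,2\}$, and introduce a fresh global variable $x^*$ together with a fresh lock $\ell_x$ for every global variable $x$ accessed in $\Trace$. I construct $\Trace'$ from $\Trace$ by: (i) replacing every global-access event $e$ on variable $x$ (including $\Event_1,\Event_2$) with the critical section $\Acquire(\ell_x)\cdot e\cdot \Release(\ell_x)$; (ii) inserting a fresh write $w_i = \Write(x^*)$ in $\Process_{p_i}$ immediately before the critical section wrapping $\Event_i$, for $i\in\{1,2\}$; and (iii) keeping all original lock events intact. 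Since each event of $\Trace$ produces $O(1)$ events of $\Trace'$, the construction runs in $O(n)$ time.

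Correctness will follow from two claims. The \emph{mutex claim} asserts that no two wrapped events on the same global variable can be simultaneously enabled in any correct reordering $\SeqTrace$ of $\Trace'$. If wrapped $a,b$ on variable $x$ were both enabled in $\SeqTrace$, then both wrapping acquires $\Acquire(\ell_x)$ would lie in $\Events{\SeqTrace}$ (as immediate thread-predecessors), while $a,b$ and their matching releases would not; this would force two distinct threads to hold $\ell_x$ concurrently, violating mutual exclusion (implicit in $\SeqTrace$ being a trace). Since $x^*$ is accessed only by the unwrapped events $w_1,w_2$, the mutex claim reduces the set of candidate predictable races of $\Trace'$ to the single pair $(w_1,w_2)$.

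The \emph{equivalence claim} asserts that $(w_1,w_2)$ is a predictable race of $\Trace'$ iff $(\Event_1,\Event_2)$ is a predictable race of $\Trace$. In the forward direction, starting from a witness reordering of $\Trace$ in which $\Event_1,\Event_2$ are enabled, I map each original event to its wrapped image (excluding $w_1,w_2$) to obtain a correct reordering of $\Trace'$ in which $w_1,w_2$ are enabled, as their thread-predecessors in $\Trace'$ are precisely the wrapped images of those of $\Event_1,\Event_2$ in $\Trace$. In the backward direction, starting from a witness reordering of $\Trace'$ in which $w_1,w_2$ are enabled, I strip away the wrapping lock events and $w_1,w_2$ to obtain a correct reordering of $\Trace$. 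The absence of $w_i$ cascades along the thread order of $\Process_{p_i}$, via the thread-order prefix property, to force the wrapped $\Event_i$, and hence $\Event_i$, to be absent, while all thread-predecessors of $\Event_i$ in $\Trace$ remain present.

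The main technical obstacle is verifying that the reads-from function is preserved under the wrap/unwrap transformations. This reduces to the observation that the transformation introduces no new writes on any global variable and only adds lock events whose matchings are fully internal to each wrapping critical section. Hence the relative order of writes on each global variable is preserved, so any preserved read observes the same latest preceding write in both $\Trace$ and $\Trace'$. Moreover, the fresh locks $\ell_x$ do not overconstrain reorderings: each wrapping is contiguous in its thread, so non-overlap of $\ell_x$-critical sections in the witness reordering of $\Trace'$ is inherited directly from the linear order of the witness reordering of $\Trace$.
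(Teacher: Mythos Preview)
Your construction is correct and the argument is complete, but it takes a different route from the paper's. The paper introduces only \emph{two} fresh locks $\ell_1,\ell_2$: it wraps $\Event_i$ with $\ell_i$ alone, and wraps every other global access with \emph{both} $\ell_1$ and $\ell_2$. This directly forces $(\Event_1,\Event_2)$ to be the only pair that can be simultaneously enabled, since any other candidate pair would require two threads to hold a common lock. No fresh variable and no proxy events $w_1,w_2$ are needed; the race pair in $\Trace'$ is $(\Event_1,\Event_2)$ itself.

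Your approach instead protects each variable with its own fresh lock and introduces a fresh variable $x^*$ accessed only by the two unprotected writes $w_1,w_2$. This buys a clean separation of concerns (the mutex claim is a one-liner, and the equivalence claim is symmetric under wrap/unwrap), at the cost of introducing up to $d$ new locks and a new variable. The paper's construction is more economical in the number of new synchronization objects (just two locks), which could matter if the lemma is later invoked in a context where the number of locks or variables is a parameter; for the lemma as stated, both constructions are $O(n)$ and either suffices.
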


To make the presentation simpler, 
we assume w.l.o.g that there are no open critical sections in $\Trace$,
i.e., every lock-acquire event $\Acquire$ is followed by a matching lock-release event $\Match{\Trace}{\Acquire}$.
Motivated by practical applications, we also study the complexity of dynamic data-race prediction parameterized by a notion of distance between the input trace $\Trace$ and the witness $\Trace^*$ that reveals the data race.

\Paragraph{Trace distances.}
Consider a trace $\Trace$ and a correct reordering $\Trace'$ of $\Trace$.
The set of \emph{reversals} between $\Trace$ and $\Trace'$ is defined as 
\begin{align*}
\Reversals(\Trace,\Trace')=& \setpred{
(\Write_1, \Write_2)\in \WritesAcquires{\Trace'}\times \WritesAcquires{\Trace'}}{\\
& \Confl{\Write_1}{\Write_2}
\text{ and }
\Write_1<_{\Trace}\Write_2
\text{ and }
\Write_2<_{\Trace'}\Write_1
}\ .
\end{align*}
In words, $\Reversals(\Trace,\Trace')$ contains the pairs of
conflicting write events or lock-acquire events, 
the order of which has been \emph{reversed} in $\Trace'$ 
when compared to $\Trace$.
The \emph{distance} of $\Trace'$ from $\Trace$ is defined as
$\Distance(\Trace,\Trace')=|\Reversals(\Trace,\Trace')|$.
Our notion of distance, thus, only counts the number of reversals
of conflicting write or lock-acquire events instead of 
counting reversals over all events 
(or even conflicting write-read events).

\Paragraph{Distance-bounded dynamic data race prediction.}
Consider a trace $\Trace$ and two events $\Event_1,\Event_2$ of $\Trace$.
Given an integer $\ell\geq 0$, the $\ell$-distance-bounded dynamic data-race 
prediction problem is the promise problem\footnote{
The promise problem~(\cite{promise_problem}) given languages $L_\True$ and $L_\False$ is
to design an algorithm $A$ such that $A(x) = \True$ for every $x \in L_\True$,
$A(x) = \False$ for every $x \in L_\False$, and all for all other 
inputs $x \not\in L_\True \cup L_\False$,
the output $A(x)$ of the algorithm is allowed to be any of $\True$ or $\False$.}
that allows for any answer ($\True/\False$) if $(\Event_1, \Event_2)$ 
is a predictable data race of $\Trace$ and every witness correct reordering $\Trace^*$ is such that $\Distance(\Trace,\Trace^*) > \ell$.

\subsection{Summary of Main Results}\label{subsec:summary}

Here we state the main results of this work,
and present the technical details in the later parts of the paper.

\subsubsection{The General Case}\label{subsec:general_case}

First, we study the complexity of the problem with respect to 
various parameters of the input trace.
These parameters are the number of threads, the number of variables, the lock-nesting depth, as well as the lock-dependence factor, which, intuitively, measures the amount of data flow between critical sections.
In the following, $<_\TOO$
(formal definition in~\cref{sec:ideals}),
is the smallest partial order that contains $<_\TO$,
and also orders read events after their corresponding observed write event
\big(i.e., $\Observation(\Read) <_\TOO \Read$ 
for every $\Read \in \ReadsReleases{\Events{\Trace}}$ \big).
%

\Paragraph{The lock-dependence factor.}
The \emph{lock-dependence graph} of a trace $\Trace$ is the graph $G_{\Trace}=(V_{\Trace},E_{\Trace})$ defined as follows. 
\begin{compactenum}
\item The set of vertices is $V_{\Trace}=\Acquires{\Trace}$, i.e., it is the set of lock-acquire events of $\Trace$.
\item The set of edges is such that $(\Acquire_1, \Acquire_2)\in E_{\Trace}$ if
\begin{enumerate*}[label=(\roman*)] 
\item $\Acquire_1\not<_{\TOO} \Acquire_2$, 
\item $\Acquire_1<_{\TOO} \Match{\Trace}{\Acquire_2}$, and
\item $\Match{\Trace}{\Acquire_1}\not<_{\TOO}\Match{\Trace}{\Acquire_2}$.
\end{enumerate*}
\end{compactenum}
Given a lock-acquire event $\Acquire\in V_{\Trace}$, let $A_{\Acquire}$ be the set of lock-acquire events that can reach $\Acquire$ in $G_{\Trace}$.
We define the \emph{lock dependence factor} of $\Trace$ as 
$
\max_{\Acquire\in V_{\Trace}}|A_{\Acquire} |
$.
We show the following theorem.

\smallskip
\begin{restatable}{theorem}{them:generalparameterized}\label{them:general_parameterized}
Consider a trace $\Trace$ of length $n$, $k$ threads, lock-nesting depth $\NestingDepth$, and lock-dependence factor $\LockFactor$.
The dynamic data-race prediction problem on $\Trace$ can be solved in $O(\alpha\cdot \beta)$ time, where
$\alpha=\min(n, k\cdot \NestingDepth \cdot \LockFactor)^{k-2}$ and
$\beta=k\cdot n^k$.
\end{restatable}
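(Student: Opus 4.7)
My approach is to characterize correct reorderings combinatorially as linearizations of downward-closed subsets (``ideals'') of the event set of $\Trace$, enumerate these ideals efficiently by exploiting the lock structure, and decide realizability of each. This matches the two factors in the stated running time: $\alpha$ counts the candidate ideals, and $\beta$ accounts for their realization.

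\textbf{Step 1 (Ideal representation).} I would first observe that if $\Trace^*$ is a correct reordering of $\Trace$, then $\Events{\Trace^*}$ is downward-closed with respect to $<_{\TO}$, and moreover, the constraint $\Observation_{\Trace^*}\subseteq\Observation_{\Trace}$ forces $\Events{\Trace^*}$ to be downward-closed under the stronger order $<_{\TOO}$ as well. Such an ideal is uniquely described by a $k$-tuple of per-thread frontier indices, yielding at most $n^k$ candidates. For the instance $(\Event_1,\Event_2)$, the frontiers of threads $\Proc{\Event_1}$ and $\Proc{\Event_2}$ are forced (they must contain exactly the $<_{\TO}$-predecessors of $\Event_1$ and $\Event_2$ respectively, but neither $\Event_1$ nor $\Event_2$), so at most $n^{k-2}$ candidate ideals remain. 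This recovers the generous bound $\alpha\le n^{k-2}$.

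\textbf{Step 2 (Tightening $\alpha$ via the lock structure).} To replace $n^{k-2}$ with $(k\cdot\NestingDepth\cdot\LockFactor)^{k-2}$, I would show that within each of the remaining $k-2$ threads only $O(k\cdot\NestingDepth\cdot\LockFactor)$ frontier positions are \emph{meaningful}. The plan is to partition each thread's events into maximal ``lock regions'' -- stretches that lie inside the same set of open critical sections -- and argue that two frontiers within the same region produce ideals whose $<_{\TOO}$-closures differ only by events whose inclusion is forced by the same set of lock-acquires. The lock-nesting depth $\NestingDepth$ bounds the number of simultaneously open critical sections at any frontier, and the lock-dependence graph $G_\Trace$ bounds by $\LockFactor$ the number of acquires transitively dragged in when an acquire must be included; multiplying across threads gives the stated bound.

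\textbf{Step 3 (Realizability and the main obstacle).} For each enumerated ideal $X$, I would run a realizability check that decides whether $X$ linearizes into a valid correct reordering. The check is a constrained topological sort over $X$ that extends $<_{\TOO}$ with lock-matching constraints: an acquire may be emitted only when no conflicting critical section is currently open, and its matching release (together with all events in its critical section lying in $X$) must eventually be emitted without interruption by a conflicting acquire. Aggregated across the $n^k$ frontier choices, this can be done in the claimed $O(\beta)=O(k\cdot n^k)$ time by amortizing per-step frontier bookkeeping. The hard part will be Step 2: proving the structural lemma that only $O(k\cdot\NestingDepth\cdot\LockFactor)$ frontier positions per thread are relevant. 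The subtlety is that the $<_{\TOO}$-closure of an ideal can cascade through reads-from into critical sections, and the analysis must use both $\NestingDepth$ and $\LockFactor$ together to bound the depth and branching of this cascade; once the cascade is tamed, the realizability check in Step 3 reduces to routine greedy scheduling with small backtracking.
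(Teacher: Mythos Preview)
Your high-level decomposition into ``enumerate candidate ideals'' and ``test realizability'' matches the paper, and your Step~1 observation that the frontiers in the threads of $\Event_1,\Event_2$ are pinned (giving the $n^{k-2}$ bound) is correct. But Steps~2 and~3 both have real gaps.

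\textbf{Step 2.} Your plan is to argue that among all $\TOO$-ideals with $\Event_1,\Event_2$ enabled, only $O((k\cdot\NestingDepth\cdot\LockFactor)^{k-2})$ equivalence classes of frontier positions matter, via a partition into ``lock regions''. This is not how the paper obtains $\alpha$, and your sketch does not make clear why two frontiers in the same region should have the \emph{same realizability status} --- the $\TOO$-closure may differ in read events whose presence or absence changes which interleavings are valid. The paper instead takes a constructive route: it defines a specific \emph{candidate ideal set} $\CandidateSet_{\Trace}(\Event_1,\Event_2)$ generated by starting from the causal cone $\Past_{\Trace}(\{\Event_1,\Event_2\})$ and repeatedly closing one open critical section at a time. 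Completeness (any witness $\Trace^*$ contains some candidate ideal as a realizable prefix) is proved directly, and the size bound comes from showing that every lock-acquire used to extend an ideal reaches, in the lock-dependence graph $G_\Trace$, some acquire in $\OpenAcquires(\Past_{\Trace}(\{\Event_1,\Event_2\}))$; since the latter set has size at most $k\cdot\NestingDepth$, the total pool of relevant acquires is at most $k\cdot\NestingDepth\cdot\LockFactor$, and each ideal is determined by its at most $k-2$ maximal acquires. Your ``equivalence of frontiers'' idea would have to prove a reduction lemma that the paper avoids entirely.

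\textbf{Step 3.} Your realizability check only enforces lock discipline on top of $<_{\TOO}$. That is not enough: realizability of the canonical rf-poset also requires that in the output linearization, \emph{no conflicting write intervenes} between $\Observation(\Read)$ and $\Read$ for every read $\Read$. A plain ``constrained topological sort'' respecting $\TOO$ and mutual exclusion can still produce a trace where some $\Read$ observes the wrong write. The paper handles this by building an \emph{ideal graph} on the $O(n^k)$ ideals of the rf-poset, where an edge $Y\to Y\cup\{\Event\}$ exists only when $\Event$ does not conflict with any pending pair in the frontier $\Frontier{\OPoset}{Y}$; reachability of $X$ from $\emptyset$ in this graph is then equivalent to realizability, and the $O(k\cdot n^k)$ bound is just node count times out-degree. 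Your sketch needs this extension condition (or an equivalent) to be sound.
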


In particular, the problem is polynomial-time solvable for a fixed number of threads $k$.
In practice, the parameters $k$, $\NestingDepth$ and $\LockFactor$ behave as constants,
and in such cases our upper-bound becomes $O(n^k)$.
\cref{them:general_parameterized} naturally leads to two questions, namely 
\begin{enumerate*}[label=(\roman*)]
\item whether there is a polynomial-time algorithm for any $k$, and
\item\label{itm:fpt} if not, whether the problem is FPT with respect to the parameter $k$,
 i.e., can be solved in $O(f(k)\cdot n^{O(1)})$ time, for some function $f$.
\end{enumerate*}
Question~\ref{itm:fpt} is very relevant, as typically $k$ is several orders of magnitude smaller than $n$.
We complement \cref{them:general_w1hard} with the following lower-bound, which
answers both questions in negative.

\smallskip
\begin{restatable}{theorem}{themgeneralwhard}\label{them:general_w1hard}
The dynamic data-race prediction problem is W[1]-hard parameterized by the number of threads. 
\end{restatable}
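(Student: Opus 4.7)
The plan is a two-step reduction, mirroring the paper's remark that VSC-rm is first shown W[1]-hard and then used as a stepping stone. I would first establish W[1]-hardness of the Verifying Sequential Consistency with read-mapping problem (VSC-rm) parameterized by the number of processes, and then reduce VSC-rm to dynamic data-race prediction while preserving the parameter up to a small additive constant.

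For the first step, I would reduce from a canonical W[1]-hard problem such as Multicolored Clique, which asks whether a $k$-colored graph contains a clique with exactly one vertex from each color class. The construction uses $k+O(1)$ processes: one process per color class that ``selects'' a vertex of that color by performing a sequence of reads, together with a constant number of auxiliary processes whose writes serve as the targets of these reads. The read-from mapping $\mathrm{RF}$ is designed so that any valid sequentialization is forced to commit each color-class process to a single chosen vertex, and the pairwise-consistency ($\binom{k}{2}$ adjacency) checks are encoded through additional read-from edges whose satisfiability demands that the selected vertices be mutually adjacent. Since the number of processes is $O(k)$, this shows that an $f(k)\cdot n^{O(1)}$ algorithm for VSC-rm would yield an $f(O(k))\cdot n^{O(1)}$ algorithm for Multicolored Clique.

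For the second step, given a VSC-rm instance with $k$ processes $\pi_1,\dots,\pi_k$ and read-mapping $\mathrm{RF}$, I would construct a trace $\Trace$ on $k+2$ threads. The first $k$ threads replay the local event sequences of $\pi_1,\dots,\pi_k$, and $\Trace$ is laid out so that its reads-from $\Observation_{\Trace}$ agrees with $\mathrm{RF}$; this is always achievable by interleaving the events in a canonical order, padding with fresh writes where necessary to break undesired reads-from edges. Finally, two extra threads are added, each containing a single write event $\Event_1,\Event_2$ to a fresh global variable $x^{*}$, placed thread-ordered after dummy predecessors that force them to be enabled only once all events of $\pi_1,\dots,\pi_k$ are scheduled. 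Then, by definition of correct reordering, the pair $(\Event_1,\Event_2)$ is a predictable data race of $\Trace$ iff some correct reordering schedules every event of every $\pi_i$ while respecting $\mathrm{RF}$, i.e., iff the VSC-rm instance is a yes-instance. Composing the two reductions and noting that both preserve the number of parties up to $O(k)$, we obtain W[1]-hardness parameterized by the number of threads.

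The main obstacle is the first step. The classical NP-hardness reduction of Gibbons and Korach for VSC-rm encodes 3-SAT using an unbounded number of processes (roughly one per variable or clause), so it does not directly yield W[1]-hardness. Compressing the structure of a W[1]-hard problem such as Multicolored Clique into only $O(k)$ processes requires a careful gadget design in which each process performs many reads into a shared pool of ``vertex'' and ``edge'' variables, and the global sequentialization is the only place where pairwise-adjacency consistency is enforced. Ensuring that the read-from mapping admits a valid linear extension exactly when a multicolored clique exists — and that no spurious sequentializations slip through — is the delicate combinatorial heart of the proof. By contrast, the reduction from VSC-rm to data-race prediction is comparatively routine, since VSC-rm is essentially data-race prediction stripped of locks and of racing events.
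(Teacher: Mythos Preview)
Your two-step plan mirrors the paper's structure (a W[1]-hard source problem $\to$ rf-poset/VSC-rm realizability $\to$ data-race prediction), and you correctly identify that the hard combinatorics live in step one. However, your second step contains a genuine gap.

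You assert that a trace $\Trace$ with $\Observation_{\Trace}=\mathrm{RF}$ ``is always achievable by interleaving the events in a canonical order, padding with fresh writes where necessary.'' This is false for arbitrary VSC-rm instances: producing a single concrete linearization whose induced reads-from equals $\mathrm{RF}$ is precisely the VSC-rm problem you are trying to reduce from. On a no-instance no such $\Trace$ exists, and inserting fresh writes cannot repair contradictory RF constraints (e.g., two reads on the same variable whose RF targets force a cyclic last-write ordering). So step two, as stated, is circular.

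The paper avoids this by \emph{coupling} the two steps rather than treating step two as a black-box reduction. Its Independent-Set gadget $\OPoset_G$ is built so that every memory location is written \emph{exactly once}; hence any sequence that respects thread order and places each write before its readers already has $\Observation=\mathrm{RF}$, and $\OPoset_G$ can be traced out regardless of whether it is realizable. The non-trivial constraints are carried by \emph{locks}, not by competing writes. The subsequent reduction to data-race prediction then exploits further specific structure of $\OPoset_G$ (the role of $\Write(x)$ and $\Read(x)$) rather than working for arbitrary rf-posets. Your proposal needs an analogous structural guarantee on the output of step one (e.g., single-writer variables, or another property ensuring traceability), and step two must be argued for those specific instances, not for general VSC-rm. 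As a secondary remark, the paper reduces from $\INDSET(c)$ and leans heavily on lock events inside the rf-poset; your Multicolored-Clique route via pure reads/writes is plausible but would require a different gadget, and you should not expect the Gibbons--Korach construction to compress to $O(k)$ processes without new ideas.
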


\subsubsection{Tree Communication Topologies}\label{subsec:trees}

Next, we study the problem for tree communication topologies, such as pipelines and server-clients architectures.
We show the following theorem.

\smallskip
\begin{restatable}{theorem}{themtreedopologiesub}\label{them:tree_topologies_ub}
Let $\Trace$ be a trace over a tree communication topology with
$n$ events, $k$ threads and $d$ variables.
The dynamic data-race prediction problem for $\Trace$ can be solved in $O(k^2\cdot d\cdot n^2\cdot \log n)$ time.
\end{restatable}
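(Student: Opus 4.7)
The plan is to exploit two features of tree communication topologies: the acyclicity of inter-thread communication and the fact, claimed among the technical contributions, that a single canonical ``ideal'' $X\subseteq \Events{\Trace}$ suffices as a candidate witness prefix. This converts the generally intractable search over exponentially many choices of $X$ into a single realizability check.

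First, I would construct the canonical candidate prefix $X$ by starting from $\set{\Event_1,\Event_2}$, adding all $<_{\TO}$-predecessors of these two events, and then closing under the reads-from function: whenever a read $\Read$ is placed in $X$, its matched write $\Observation_{\Trace}(\Read)$ must also be in $X$ together with its $<_{\TO}$-prefix. Since the topology is a tree, this closure grows monotonically along tree edges without ever needing to ``choose'' between alternative supporting events, so the canonical $X$ is well-defined and is the unique candidate that has to be tested.

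Second, given $X$, realizability must be decided, i.e., whether there exists a linearization of $X$ that is a correct reordering of $\Trace$ and in which both $\Event_1$ and $\Event_2$ are enabled. I would encode this as a closure computation on the partial order generated by (i)~thread order $<_{\TO}$ restricted to $X$, (ii)~reads-from constraints on $\Observation_{\Trace}\Project X$ (each $\Read \in X$ must be placed after $\Observation_{\Trace}(\Read)$ and before every other write on the same variable currently ordered after $\Observation_{\Trace}(\Read)$), and (iii)~lock mutual exclusion (critical sections on the same lock in $X$ must be totally ordered). The tree topology makes this closure propagate only along tree edges: for each edge $(\Process_i,\Process_j)$ and each variable/lock $x\in\Vars(\Trace)$, one can iteratively strengthen the ordering relation between events of $\Process_i$ and $\Process_j$ using a data structure (e.g.\ a balanced BST or segment tree) that maintains the current ``frontiers'' of already-forced orderings and supports update/query in $O(\log n)$. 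Summing over the $O(k^2)$ tree-edge-directed pairs, the $d$ variables, and the $O(n^2)$ event-pair comparisons per (edge, variable), one obtains the claimed $O(k^2\cdot d\cdot n^2\cdot \log n)$ bound. Realizability holds iff the closure is acyclic; if so, a topological sort yields the desired witness $\Trace^*$, otherwise $(\Event_1,\Event_2)$ is not a predictable race.

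The main obstacle is rigorously justifying that a single canonical $X$ suffices when the topology is a tree. In a cyclic topology the usual source of intractability is that one must branch on which of several conflicting writes should also be placed in $X$ in order to supply an intermediate reader without breaking an acquire/release pairing in a third thread; these branching choices can interact around cycles. For trees, I would show by induction on the distance from $\set{\Event_1,\Event_2}$ in the topology that the reads-from closure together with lock-matching never produces incompatible demands on any thread, so the greedy closure both exists and is sound. A secondary obstacle is achieving the $O(n^2\log n)$ per-(edge,variable) bound for the closure: this requires showing that each event pair can be responsible for only $O(\log n)$ amortized work during propagation, which should follow from the fact that every forced ordering monotonically shrinks the set of admissible positions tracked by the auxiliary data structure.
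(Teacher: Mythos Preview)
Your high-level plan matches the paper: reduce to a single canonical ideal $X$ and then decide realizability of the associated rf-poset by computing its closure, with the stated complexity coming from the closure computation. However, there is a genuine gap in your construction of $X$.

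The ideal you build is the ordinary causal cone $\Past_{\Trace}(\{\Event_1,\Event_2\})$: the downward closure under $\TO$ and reads-from. This ignores locks entirely when \emph{forming} $X$ (you only bring locks in later, as constraint~(iii) during realizability checking). That is not enough. Concretely, your $X$ can contain two conflicting lock-acquire events $\Acquire_1,\Acquire_2$ on the same lock $\ell$, neither of whose matching releases is in $X$ (e.g.\ $\Acquire_1$ is a $\TO$-predecessor of $\Event_1$ with $\Release_1$ after $\Event_1$, and $\Acquire_2$ enters $X$ via a reads-from edge into a third thread). Such an $X$ is not lock-feasible, its canonical rf-poset is undefined, and your closure test will report ``not realizable'' even when a witness does exist---one that also contains $\Release_2$. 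So the ``single ideal suffices'' claim fails for your $X$. The paper avoids this by using the \emph{lock causal cone} $\LPast_{\Trace}(\Event)$: after pulling in the $\TOO$-predecessors along each tree edge, it additionally forces in the matching release of any acquire in the child thread that conflicts with an open acquire in the parent (\cref{item:lpast3} of the construction). This lock-aware extension is exactly what makes the resulting ideal lock-feasible and what underlies \cref{lem:tree_ideal}; proving that lemma then requires the auxiliary \cref{lem:lpast_events} and \cref{lem:lpast_double}, not just the monotone-growth intuition you sketch.

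A secondary point: the $O(k^2\cdot d\cdot n^2\cdot\log n)$ bound in the paper is simply the cost of computing the closure of an rf-poset, quoted from~\cite{Pavlogiannis19}; the argument that closure suffices for realizability on trees is \cref{lem:tree_topologies} (closed tree-inducible rf-posets are realizable) together with \cref{lem:closure_tree} (closure preserves tree-inducibility). Your accounting of the same bound via ``$O(k^2)$ tree-edge-directed pairs times $d$ variables times $O(n^2)$ pairs'' is plausible in spirit but would need to be tied to an actual closure algorithm rather than asserted.
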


Perhaps surprisingly, in sharp contrast to \cref{them:general_w1hard}, for tree topologies there exists an efficient algorithm where the degree of the polynomial is fixed and does not depend on any input parameter (e.g., number of threads).
Note that the dominating factor in this complexity is $n^2$, while $k$ and $d$ are typically much smaller.
Hence, the relevant theoretical question is whether the dependency on $n$ can be improved further.
We show that this is unlikely, by complementing \cref{them:tree_topologies_ub} with the following conditional lower-bound,
based on the Orthogonal Vectors conjecture~\cite{Bringmann19}.

\smallskip
\begin{restatable}{theorem}{themtreetopologieslb}\label{them:tree_topologies_lb}
Let $\Trace$ be a trace with $n$ events, $k\geq 2$ threads and $d\geq 9$ shared global variables with at least one lock.
There is no algorithm that solves the decision problem of dynamic data-race prediction for $\Trace$ in time $O(n^{2-\epsilon})$, for any $\epsilon>0$, unless the Orthogonal Vectors conjecture fails.
\end{restatable}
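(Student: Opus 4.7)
The plan is to give a fine-grained reduction from the Orthogonal Vectors (OV) problem. An OV instance consists of two sets $A, B \subseteq \{0,1\}^D$ with $|A| = |B| = N$, and the task is to decide whether there exist $a \in A, b \in B$ with $\sum_{c=1}^D a_c\cdot b_c = 0$. Under the OV conjecture, this has no $O(N^{2-\epsilon})$ algorithm for any $\epsilon > 0$ when $D = \omega(\log N)$. From an OV instance I would construct, in linear time, a trace $\Trace$ with $k=2$ threads $p_1, p_2$, a constant number of shared global variables (nine suffice) and one lock $\ell$, of total length $n = O(N\cdot D)$, together with two conflicting events $e_1, e_2$ placed as the last event of $p_1$ and of $p_2$, such that $(e_1, e_2)$ is a predictable data race of $\Trace$ iff some pair in $A\times B$ is orthogonal. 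Any $O(n^{2-\epsilon})$ algorithm would then solve OV in time $O((N D)^{2-\epsilon}) = O(N^{2-\epsilon/2})$ once $D$ is taken e.g.\ to be $\log^2 N$, contradicting the conjecture.

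The trace $\Trace$ would have the following skeleton. Thread $p_1$ is a concatenation $G^A_1 \Concat G^A_2 \Concat \cdots \Concat G^A_N \Concat e_1$ of ``vector gadgets'', where $G^A_i$ encodes $a_i$ in $O(D)$ events. Thread $p_2$ is the symmetric $G^B_1 \Concat \cdots \Concat G^B_N \Concat e_2$. A global ``selector'' mechanism built from a chain of reads-from edges over a few auxiliary variables forces that in any correct reordering $\Trace^*$ witnessing the race there is a uniquely determined pair of indices $(i,j)$ such that all $G^A_{i'}$ with $i' < i$ and all $G^B_{j'}$ with $j' < j$ are executed in bulk before the aligned pair $G^A_i, G^B_j$, while the remaining gadgets $G^A_{i'}$ with $i' > i$ and $G^B_{j'}$ with $j' > j$ are executed between the end of the aligned pair and the race events. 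Within the aligned pair, the $c$-th coordinate sub-gadgets use the lock $\ell$ together with a small fixed number of auxiliary reads-from constraints, so that the local interleaving at coordinate $c$ is feasible iff $a_{i,c} = 0$ or $b_{j,c} = 0$; when both bits are $1$, the two corresponding critical sections create either an unresolvable mutual-exclusion conflict or a cyclic reads-from dependency. Hence the reordering exists iff $\langle a_i, b_j\rangle = 0$.

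The main obstacle is the gadget design: achieving both the global selector property (forcing the witness to commit to a unique pair $(i,j)$) and the local orthogonality test inside the aligned pair, using only $O(1)$ shared variables and a single lock. I expect to implement the selector by encoding the index $i$ as a position in a ``counter chain'' of writes on an auxiliary variable, with corresponding reads in $p_2$ that fix the alignment point; the local test then reuses the lock to enforce per-coordinate exclusion, while one more auxiliary variable enforces that both threads have actually entered their $c$-th sub-gadget before the relevant critical section can close. Correctness would then be verified in the two standard directions. For soundness, from an orthogonal pair $(a_i, b_j)$ I would produce an explicit reordering that runs the prefixes of both threads in bulk, interleaves $G^A_i$ and $G^B_j$ coordinate by coordinate, and then runs the suffixes, so both $e_1, e_2$ become enabled. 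For completeness, I would argue that the selector chain forces a unique pair $(i,j)$ in any witness reordering and that the per-coordinate lock/reads-from constraints inside the aligned pair force $\langle a_i, b_j\rangle = 0$, yielding the desired orthogonal pair.
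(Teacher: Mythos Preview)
Your overall plan---a fine-grained reduction from Orthogonal Vectors to race prediction on a two-thread trace with $O(1)$ variables and one lock---matches the paper's strategy. The paper also factors the reduction through rf-poset realizability (\cref{lem:closure_ov}) before passing to a trace, but that is a presentational difference only: once $e_1,e_2$ are the last events of their threads, a witnessing reordering must contain every other event, so the question is exactly whether the induced rf-poset is realizable.

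Where your sketch diverges from the paper, and where I believe there is a real gap, is the gadget design. Your ``selector'' mechanism is supposed to let the witness reordering pick an aligned pair $(i,j)$ and then locally certify orthogonality. But recall that in a correct reordering the reads-from function is \emph{fixed}: every read must observe the same write it observed in $\Trace$. If your selector uses cross-thread reads-from (your ``counter chain of writes \ldots\ with corresponding reads in $p_2$''), then the alignment point is already determined by $\Trace$ and is not a choice of the reordering; you would test one fixed pair $(i_0,j_0)$, not all of $A\times B$. If instead you omit cross-thread reads-from and rely only on the lock, nothing prevents the reordering from running all of $p_1$'s critical sections before all of $p_2$'s, so the ``local orthogonality test'' is never triggered. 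In either case the construction does not encode the $N^2$-pair search.

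The paper avoids this tension by a different mechanism: a \emph{cascading closure}. There are only two cross constraints in the whole instance, one at each end ($\Write^{a_1}_1(x_1)<\Read^{b_1}_1(x_1)$ and $\Write^{b_{n/2}}_1(x_2)<\Read^{a_{n/2}}_1(x_2)$, realized in the trace by the single lock and by the variable $y$). The seven variables $x_1,\dots,x_7$ are arranged so that the closure rules deterministically propagate these two orderings through the pairs $(a_1,b_1),(a_1,b_2),\dots,(a_2,b_1),\dots$ in sequence: non-orthogonality of the current pair forces orderings that advance to the next pair, and if no pair is orthogonal the cascade eventually produces a cycle. Realizability thus hinges on whether the cascade halts, not on a nondeterministic alignment choice. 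This sequential-cascade idea is the missing ingredient in your proposal; without it (or something equivalent), I do not see how to implement your selector with $O(1)$ variables under fixed reads-from.
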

Since $k=2$ implies a tree communication topology, the result of \cref{them:tree_topologies_ub} is conditionally optimal, up-to poly-logarithmic factors, for a reasonable number of threads and variables (e.g., when $k,d=\log^{O(1)}(n)$).

\subsubsection{Witnesses in Small Distance}\label{subsec:small_distance}

Finally, we study the problem in more practical settings, namely, when
(i)~the number of threads, lock-nesting depth, lock-dependence factor of $\Trace$ are bounded, and
(ii)~we are searching for a witness at a small distance from $\Trace$.

\smallskip
\begin{restatable}{theorem}{themsmalldistance}\label{them:small_distance}
Fix a reversal bound $\ell\geq 0$.
Consider a trace $\Trace$ of length $n$ and constant number of threads, lock-nesting depth and lock-dependence factor.
The $\ell$-distance-bounded dynamic data-race prediction problem for $\Trace$ can be solved in $O(n)$ time.
\end{restatable}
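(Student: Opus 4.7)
By \cref{lem:decision_given_pair}, it suffices to decide, for a fixed candidate pair of conflicting variable-access events $(\Event_1,\Event_2)$, whether there exists a correct reordering $\Trace^*$ of $\Trace$ that enables both $\Event_1$ and $\Event_2$ and satisfies $\Distance(\Trace,\Trace^*)\le \ell$. The overall strategy is to combine the ideal-based characterization underlying \cref{them:general_parameterized} with a localized search that exploits the reversal bound.

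First, I would extract from $\Trace$ a canonical ideal $X_0$: the $<_\TOO$-downward closure of $\{\Event_1,\Event_2\}$ (excluding $\Event_1$ and $\Event_2$ themselves), extended to close every open critical section. The corresponding projection $\Trace\Project X_0$ is the unique distance-zero candidate witness; if it is already a correct reordering enabling both $\Event_1$ and $\Event_2$, we output $\True$. Otherwise, every witness $\Trace^*$ with $\Distance(\Trace,\Trace^*)=r\le\ell$ is obtained from a perturbation of $X_0$ in which exactly $r$ pairs of conflicting write or lock-acquire events are reversed relative to $\Trace$. Under the constantly bounded parameters $k,\NestingDepth,\LockFactor$, I would argue that each such reversal pulls in or evicts only constantly many additional events---via thread-order closure of newly included acquires, their matched releases, and the bounded cascade through the lock-dependence graph---so every candidate event set $X$ differs from $X_0$ by a constant number of events drawn from a bounded neighborhood of the reversed pairs.

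Second, I would enumerate all such candidate ideals $X$. Here I would invoke the bound of $\min(n, k\cdot\NestingDepth\cdot\LockFactor)^{k-2}$ on candidate ideals from \cref{them:general_parameterized}, specialized to the distance-$\le \ell$ regime; under the assumption that $k,\NestingDepth,\LockFactor,\ell$ are constants, the number of candidates $X$ collapses to $O(1)$. After $O(n)$ preprocessing of $\Trace$ (matched-lock pointers, reads-from map, thread-successor pointers), each candidate can be generated in constant amortized time. For each candidate $X$, realizability---that is, whether $X$ admits a linearization that is a correct reordering of $\Trace$---can be tested via a greedy topological sort of $X$ that respects $<_\TOO$ and lock mutual exclusion, running in $O(|X|)=O(n)$ time. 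Summing over the constantly many $O(n)$ realizability checks yields the claimed $O(n)$ bound, with the promise-problem answer $\False$ returned if no candidate is realizable.

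The main technical obstacle is the structural claim in the first step: showing that a single reversal of a conflicting write or acquire pair in a witness can cause only constantly many additional event displacements. Thread-order and reads-from constraints yield strictly local perturbations, but swapping two lock-acquires can in principle propagate through the lock-dependence graph and through nested critical sections. The bounded lock-dependence factor $\LockFactor$ is what tames this propagation, while the bounded nesting depth $\NestingDepth$ ensures each critical-section boundary contributes only finitely many interacting events. Making this cascade analysis precise---and ruling out long-range reversals that would blow up the enumerate-and-check step---is the technical crux of the argument.
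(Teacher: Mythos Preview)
Your proposal has a genuine gap in the realizability step. You claim that realizability of a candidate ideal $X$ can be decided by ``a greedy topological sort of $X$ that respects $<_\TOO$ and lock mutual exclusion.'' This does not ensure that the reads-from function is preserved: if $\Observation(\Read)=\Write$ and some conflicting $\Write'$ is unordered with both $\Write$ and $\Read$ in the canonical rf-poset, a greedy linearization may place $\Write'$ between $\Write$ and $\Read$, yielding $\Observation_{\Trace^*}(\Read)=\Write'\neq\Write$. Rf-poset realizability is precisely the hard part of the problem (NP-hard in general, $O(n^k)$ for $k$ threads by \cref{lem:oposet_realizability}), and your argument simply bypasses it.

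You also conflate two separate concerns that the paper keeps apart. Step~(i) is enumerating candidate ideals, bounded by $\min(n,k\cdot\NestingDepth\cdot\LockFactor)^{k-2}=O(1)$ via \cref{lem:candidate_set} and \cref{lem:num_candidate_ideals}; this is entirely independent of $\ell$. Step~(ii) is, for each fixed ideal, solving $\ell$-distance-bounded realizability in $O(k^{\ell+O(1)}\cdot n)$ time (\cref{lem:reversal_realizability}). Your ``perturbation'' argument mixes these: reversals do not change the event set of an ideal, and the number of ways to choose at most $\ell$ reversed conflicting pairs is a~priori $\Theta(n^{2\ell})$, not $O(1)$. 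The idea you are missing is the cycle-branching of \cref{lem:reversal_realizability}: order all conflicting write and lock-acquire pairs as in $\Trace$, build the resulting write-ordered graph and its read extension; if both are acyclic, any linearization realizes the ideal with zero reversals; otherwise some cycle has at most $k$ cross edges, and any realizing witness must reverse one of them, so you branch $k$ ways and recurse to depth $\ell$. This is what collapses the search to $k^{\ell}$ linear-time subproblems rather than $n^{O(\ell)}$.
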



\section{Trace Ideals}\label{sec:ideals}



\subsection{Partial Orders}\label{subsec:po}

\Paragraph{Partially ordered sets.}
A \emph{partially ordered set} (or \emph{poset}) 
is a pair $(X,P)$ where $X$ is a set of (write, read, lock-acquire, lock-release) 
events and $P$ is a reflexive, antisymmetric and transitive relation over $X$.
We will often 
write $\Event_1 \leq_P \Event_2$ to denote $(\Event_1, \Event_2) \in P$.
Given two events $\Event_1, \Event_2\in X$ we write $\Event_1 <_{P}\Event_2$ to denote that $\Event_1\leq_{P}\Event_2$ and $\Event_1\neq \Event_2$,
and write $\Event_1\Prec_{P}\Event_2$ to denote that $\Event_1<_{P} \Event_2$ and there exists no event $\Event$ such that 
$\Event_1<_{P}\Event<_{P}\Event_2$.
Given two distinct events $\Event_1,\Event_2\in X$, we say that $\Event_1$ and $\Event_2$ are \emph{unordered} by $P$, denoted by $\Unordered{\Event_1}{P}{\Event_2}$, if neither $\Event_1<_{P}\Event_2$ nor $\Event_2<_{P} \Event_1$.
We call an event $\Event\in X$ \emph{maximal} if there exists no $\Event'\in X$ such that $\Event<_{P}\Event'$.
Given a set $Y\subseteq X$, we denote by $P\Project Y$ the \emph{projection} of $P$ on $Y$,
i.e., we have $P\Project Y \subseteq Y\times Y$, and for all $\Event_1,\Event_2\in Y$, we have $\Event_1\leq_{P\Project Y} \Event_2$ iff $\Event_1\leq_{P} \Event_2$.
Given two posets $(X, P)$ and $(X,Q)$, we say that the partial order $Q$ \emph{refines} $P$, denoted by $Q\Refines P$, if
for every two events $\Event_1, \Event_2\in X$, if $\Event_1\leq_{P}\Event_2$ then $\Event_1\leq_{Q}\Event_2$.
If $Q$ refines $P$, we say that $P$ is \emph{weaker} than $Q$.
We denote by $Q\StrictRefines P$ the fact that $Q\Refines P$ and $P\not \Refines Q$.
A \emph{linearization} of $(X, P)$ is a total order over $X$ that refines $P$.
An \emph{order ideal} (or simply \emph{ideal}) of a poset $(X,P)$ is subset $Y\subseteq X$ such that for every two events $\Event_1\in Y$ and $\Event_2\in X$ with $\Event_2\leq_{P} \Event_1$, we have $\Event_2\in Y$.
An event $\Event$ is
\emph{executable} in ideal $Y$ if $Y\cup \{ \Event \}$ is also an ideal of $(X,P)$.
The number of threads and variables of a poset $(X,P)$ is the number of threads and variables of the events of $X$.

\Paragraph{Partially ordered sets with reads-from functions.}
A \emph{poset with a reads-from function} (or \emph{rf-poset}) is a tuple $(X,P,\Observation)$ where
(i)~$\Observation\colon \ReadsReleases{X} \to \WritesAcquires{X}$ is a reads-from function such that for all $\Read\in \ReadsReleases{X}$, we have $\Observation(\Read)\in \Writes{X}$ iff $\Read\in \Reads{X}$, and
(ii)~$(X,P)$ is a poset where for all $\Read\in \ReadsReleases{X}$ we have $\Observation(\Read)<_{P} \Read$.
Notation from posets is naturally lifted to rf-posets, e.g., 
an ideal of $\OPoset$ is an ideal of $(X,P)$.

\Paragraph{Thread-reads-from order and trace ideals.}
Given a trace $\Trace$,
the thread-reads-from order $\TOO(\Trace)$
(or simply $\TOO$ when $\Trace$ is clear from context)
is the weakest partial order over the set $\Events{\Trace}$ 
such that 
\begin{enumerate*}[label=(\roman*)]
\item $\TOO\Refines \TO$, and
\item $(\Events{\Trace}, \TOO, \Observation_{\Trace})$ is an rf-poset.
\end{enumerate*}
In particular, $\TOO$ is the transitive closure of  
$\big(\TO \cup \setpred{\Observation_{\Trace}(\Read)< \Read}{\Read\in\Reads{\Trace}}\big)$.
A \emph{trace ideal} of $\Trace$ is an ideal $X$ of the poset $(\Events{\Trace}, \TOO)$.
We say an event $\Event\in \Events{\Trace}\setminus X$ is \emph{enabled} in X
if for every $\Event' <_\TO \Event$,
we have $\Event' \in X$.
We call $X$ \emph{lock-feasible} if for every two lock-acquire events $\Acquire_1, \Acquire_2\in \Acquires{X}$ with $\Confl{\Acquire_1}{\Acquire_2}$,
we have $\Match{\Trace}{\Acquire_i}\in X$ for some $i\in[2]$.
We call $X$ \emph{feasible} if it is lock-feasible, and there exists a partial order $P$ over $X$ such that 
(i)~$P\Refines \TOO\Project X$ and
(ii)~for every pair of lock-acquire events $\Acquire_1, \Acquire_2\in \Acquires{X}$ with $\Confl{\Acquire_1}{\Acquire_2}$,
and $\Match{\Trace}{\Acquire_1}\not\in X$, we have $\Release_2<_{P} \Acquire_1$, where $\Release_2=\Match{\Trace}{\Acquire_2}$.
If $X$ is feasible, we define the \emph{canonical rf-poset} of $X$ as $(X, Q, \Observation_{\Trace}\Project X)$, where $Q$ is the weakest among all such partial orders $P$.
It is easy to see that $Q$ is well-defined, i.e., there exists at most one weakest partial order among all such partial orders $P$.

\Paragraph{The realizability problem of feasible trace ideals.}
The realizability problem for an rf-poset $\OPoset=(X,P,\Observation)$ 
asks whether there exists a linearization $\Trace^*$ of $P$ such that 
$\Observation_{\Trace^*} = \Observation$.
Given a trace $\Trace$ and a feasible trace ideal $X$ of $\Trace$, the realizability problem for $X$ is the realizability problem of the canonical rf-poset $(X,P,\Observation)$ of $X$.
The following remark relates the decision problem of dynamic race prediction in $\Trace$ with the realizability of trace ideals of $\Trace$.

\smallskip
\begin{remark}\label{rem:witness_trace}
If $\Trace^*$ is a witness of the realizability of $X$,
then $\Trace^*$ is a correct reordering of $\Trace$.
Two conflicting events $\Event_1,\Event_2\in \Events{\Trace}$ are a predictable data race of $\Trace$ iff there exists a realizable trace ideal $X$ of $\Trace$ such that $\Event_1,\Event_2$ are enabled in $X$.
\end{remark}

\Paragraph{Read pairs and triplets.}
For notational convenience, we introduce the notion of read pairs and read triplets.
Given an rf-poset $\OPoset=(X,P,\Observation)$, a \emph{read pair} (or \emph{pair} for short) of $\OPoset$ is a pair $(\Write, \Read)$ such that $\Read\in \ReadsReleases{X}$ and $\Write=\Observation(\Read)$ (note that $\Write \in X$).
A \emph{read triplet} (or \emph{triplet} for short) is a triplet $(\Write, \Read, \Write')$ such that
(i)~$(\Write, \Read)$ is a pair of $\OPoset$, 
(ii)~$\Write'\in X$, and
(iii)~$\Write'\neq \Write$ and $\Confl{\Read}{\Write'}$.
We denote by $\ReadPairs{\OPoset}$ and $\ReadTriplets{\OPoset}$ the set of pairs and triplets of $\OPoset$, respectively.

\Paragraph{Closed rf-posets.}
We call an rf-poset $\OPoset=(X,P,\Observation)$ \emph{closed} if 
for every triplet $(\Write, \Read, \Write')\in \ReadTriplets{\OPoset}$, we have
(i)~if $\Write' <_{P} \Read$ then $\Write'<_{P}\Write$, and
(ii)~if $\Write <_{P} \Write'$ then $\Read<_{P} \Write'$.
Given, an rf-poset $\OPoset=(X,P,\Observation)$, the \emph{closure} of $\OPoset$
is an rf-poset $\OPosetQ = (X,Q,\Observation)$ where $Q$ is the weakest partial order over $X$ such that $Q\Refines P$ and $\OPosetQ$ is closed.
If no such $Q$ exists, we let the closure of $\OPoset$ be $\bot$.
The closure is well-defined~\cite{Pavlogiannis19}.
The associated Closure problem is, given an rf-poset $\OPoset$, decide whether the closure of $\OPoset$ is not $\bot$.

\smallskip
\begin{remark}\label{rem:closure}
An rf-poset is realizable only if its closure exists and is realizable.
\end{remark}


\subsection{Bounds on the Number of Feasible Trace Ideals}\label{subsec:ideal_bounds}

\cref{rem:witness_trace} suggests that the dynamic data-race 
prediction problem for a trace $\Trace$ is reducible to deciding whether $\Trace$ has some realizable trace ideal.
In general, if $\Trace$ has length $n$ and $k$ threads, 
there exist $n^k$ possible trace ideals to test for realizability.
Here we derive another upper-bound on the number of such ideals that are sufficient to test, based on the number of threads of $\Trace$, its lock-nesting depth and its lock-dependence factor.
These parameters typically behave as constants in practice, and thus understanding the complexity of dynamic data race prediction in terms of these parameters is crucial.


\smallskip\noindent{\bf Causal cones.}
Given an event $\Event\in \Events{\Trace}$, 
the \emph{causal cone} $\Past_{\Trace}(\Event)$ of $\Event \in \Events{\Trace}$ 
is the smallest trace ideal $X$ of $\Trace$ so that $\Event$ is enabled in $X$.
In words, we construct $\Past_{\Trace}(\Event)$ by taking the $\TOO$-downwards closure of the thread-local predecessor $\Event'$ of $\Event$ (i.e., $\Event'\Prec_{\TO}\Event)$.
Given a non-empty set of events $S\subseteq \Events{\Trace}$, we define the causal cone of $S$ as 
$\Past_{\Trace}(S)=\bigcup_{\Event\in S}\Past_{\Trace}(\Event)$;
notice that $\Past_{\Trace}(S)$ is a trace ideal.

\smallskip\noindent{\bf Candidate ideal set.}
Given a set of events $X$, we denote by $\OpenAcquires(X)$ the set of lock-acquire events $\Acquire$ such that $\Match{\Trace}{\Acquire}\not \in X$.
Given two events $\Event_1, \Event_2\in \Events{\Trace}$, 
the \emph{candidate ideal set} $\CandidateSet_{\Trace}(\Event_1, \Event_2)$ of $\Event_1, \Event_2$ is the smallest set of trace ideals of $\Trace$ such that the following hold.
\begin{compactenum}
\item\label{item:candidate_set1} 
$\Past_{\Trace}(\{\Event_1, \Event_2\})\in \CandidateSet_{\Trace}(\Event_1, \Event_2)$.
\item\label{item:candidate_set2} 
Let $Y\in \CandidateSet_{\Trace}(\Event_1, \Event_2)$,
$\Acquire\in\OpenAcquires(Y)$, $\Release = \Match{\Trace}{\Acquire}$, 
and $Y' = \Past_{\Trace}(Y\cup\{\Release\})\cup \{\Release\}$.
If $\Event_1,\Event_2\not \in Y'$, then
$Y'\in \CandidateSet_{\Trace}(\Event_1, \Event_2)$.
\end{compactenum}
In light of \cref{rem:witness_trace}, we will decide whether $(\Event_1, \Event_2)$ is a predictable data race by deciding the realizability of ideals in the candidate ideal set. 
\cref{item:candidate_set2}  states that, as long as there is some ideal $Y$ in the candidate ideal set such that $Y$ leaves some critical section open, we construct another ideal $Y'\supset Y$ by choosing one such open critical section and closing it, and add $Y'$ in the candidate set as well.
Intuitively, the open critical section of $Y$ might deem $Y$ not realizable, while closing that critical section might make $Y'$ realizable.
Clearly, if $\Event_1\in Y'$ or $\Event_2\in Y'$, then the realizability of $Y'$ does not imply a data race on $\Event_1, \Event_2$ as one of the two events is not enabled in $Y'$ (\cref{rem:witness_trace}).
As the following lemma shows, in order to decide whether $(\Event_1, \Event_2)$ is a predictable data race of $\Trace$, it suffices to test for realizability all the ideals in $\CandidateSet_{\Trace}(\Event_1, \Event_2)$.


\smallskip
\begin{restatable}{lemma}{lemcandidateset}\label{lem:candidate_set}
$(\Event_1, \Event_2)$ is a predictable data race of $\Trace$ iff there exists a realizable ideal $X\in\CandidateSet_{\Trace}(\Event_1, \Event_2)$
such that $\Event_1, \Event_2\not \in X$.
\end{restatable}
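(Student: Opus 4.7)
The argument splits along the two directions of the biconditional.

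\textbf{The $(\Leftarrow)$ direction.} A straightforward induction on the construction of $\CandidateSet_\Trace(e_1, e_2)$ shows that every $X$ in it satisfies $\Past_\Trace(\{e_1, e_2\}) \subseteq X$: the base ideal $\Past_\Trace(\{e_1, e_2\})$ is the only one introduced by item~1, and each $Y'$ produced by item~2 strictly contains some $Y$ already in the set. Since $\Past_\Trace(\{e_1, e_2\})$ contains every thread-order predecessor of $e_1$ and of $e_2$, the assumption $e_1, e_2 \notin X$ means both events are enabled in $X$. Realizability of $X$ then yields a witness correct reordering $\Trace^*$ via \cref{rem:witness_trace}, and appending either $e_1$ or $e_2$ exposes the race.

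\textbf{The $(\Rightarrow)$ direction.} Suppose $(e_1, e_2)$ is a predictable data race. By \cref{rem:witness_trace} there is a realizable trace ideal $X^*$ with $e_1, e_2 \notin X^*$ and both events enabled in $X^*$. The plan is to build a chain $Y_0 \subsetneq Y_1 \subsetneq \cdots$ inside $\CandidateSet_\Trace(e_1, e_2)$, with every $Y_i \subseteq X^*$ and avoiding $e_1, e_2$, terminating in a realizable ideal that serves as the desired $X$. Set $Y_0 = \Past_\Trace(\{e_1, e_2\})$: it lies in $\CandidateSet$ by item~1, misses $e_1, e_2$ by minimality of the causal cone, and lies in $X^*$ because $X^*$ is a trace ideal in which $e_1, e_2$ are enabled. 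Inductively, suppose $Y_i$ is constructed. If $Y_i$ is realizable we halt with $X = Y_i$. Otherwise the key claim below supplies $\Acquire \in \OpenAcquires(Y_i)$ whose matching release $\Release = \Match{\Trace}{\Acquire}$ lies in $X^*$. Define $Y_{i+1} = \Past_\Trace(Y_i \cup \{\Release\}) \cup \{\Release\}$. Since $X^*$ is a $\TOO$-ideal containing $\Release$ we get $Y_{i+1} \subseteq X^*$, hence $e_1, e_2 \notin Y_{i+1}$, and item~2 places $Y_{i+1}$ in $\CandidateSet_\Trace(e_1, e_2)$. The chain grows strictly and must terminate within $|\Events{\Trace}|$ steps.

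\textbf{Main obstacle.} The crux is the key claim used in the induction: \emph{if $Y_i \subseteq X^*$ is not realizable while $X^*$ is, then some $\Acquire \in \OpenAcquires(Y_i)$ satisfies $\Match{\Trace}{\Acquire} \in X^*$.} I plan to prove the contrapositive: if every open acquire of $Y_i$ is also left open in $X^*$, then the restriction $\tau^* \Project Y_i$ of a realizing linearization $\tau^*$ of $X^*$ is itself a realizing linearization of $Y_i$. The three verifications are that thread order is preserved by restriction; every $\Read \in Y_i$ continues to observe $\Observation_\Trace(\Read)$ because $Y_i$ is a $\TOO$-ideal (so the observed write is in $Y_i$) and dropping events from $\tau^*$ cannot insert a new conflicting write between them; and mutex discipline descends to $\tau^* \Project Y_i$, since two conflicting acquires in $Y_i$ with at least one release in $Y_i$ inherit non-overlap from $\tau^*$, while conflicting acquires that are jointly open in $Y_i$ are by hypothesis also jointly open in $X^*$, so the feasibility ordering witnessing realizability of $X^*$ restricts to one for $Y_i$. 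The delicate bookkeeping of open critical sections and their interaction with the canonical rf-poset is where the technical effort concentrates.
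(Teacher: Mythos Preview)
Your argument is correct and matches the paper's: both grow an ideal $Y\subseteq X^*$ inside $\CandidateSet_{\Trace}(\Event_1,\Event_2)$ by repeatedly closing an open critical section whose release lies in $X^*$, and both hinge on the observation that once $\OpenAcquires(Y)\subseteq \OpenAcquires(X^*)$ the restriction $\Trace^*\Project Y$ realizes $Y$ (the paper uses this containment directly as its stopping criterion, whereas you phrase it as the contrapositive of your key claim). One small correction in your mutex verification: the sub-case of two conflicting acquires jointly open in $Y_i$, hence by hypothesis jointly open in $X^*$, is vacuous rather than handled by any ``feasibility ordering''---$X^*$ is the event set of a valid trace and is therefore lock-feasible, so no such pair can exist.
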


The following lemma gives an upper-bound on $|\CandidateSet_{\Trace}(\Event_1, \Event_2)|$,
i.e., on the number of ideals we need to test for realizability.

\smallskip
\begin{restatable}{lemma}{lemnumcandidateideals}\label{lem:num_candidate_ideals}
We have $|\CandidateSet_{\Trace}(\Event_1, \Event_2)|\leq \min(n, \alpha)^{k-2}$, where
$\alpha=k\cdot \NestingDepth\cdot \LockFactor$, and
$k$ is the number of threads,
$\NestingDepth$ is the lock-nesting depth, and 
$\LockFactor$ is the lock-dependence factor of $\Trace$.
\end{restatable}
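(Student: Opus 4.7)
The plan is to exploit that every $Y \in \CandidateSet_{\Trace}(\Event_1, \Event_2)$ is a trace ideal, hence determined by the prefix length induced in each of the $k$ threads. An easy induction on the recursive construction shows that every such $Y$ contains $Y_0 := \Past_{\Trace}(\{\Event_1, \Event_2\})$ and omits both $\Event_1$ and $\Event_2$ (by the explicit side condition in step~\ref{item:candidate_set2}). Thus the prefixes in the threads $\Proc{\Event_1}$ and $\Proc{\Event_2}$ are already pinned down by $Y_0$, so $|\CandidateSet_{\Trace}(\Event_1, \Event_2)|$ is bounded by the product, over the remaining $k-2$ threads, of the number of distinct prefix lengths each thread exhibits across the family. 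The trivial bound of $n$ per thread immediately yields $n^{k-2}$.

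For the sharper $\alpha^{k-2}$ bound, I first unfold the recursion to observe that every $Y$ has the form $Y = Y_0 \cup \bigcup_{\Release \in S}\bigl(\Past_{\Trace}(\{\Release\})\cup \{\Release\}\bigr)$ for some set $S$ of releases added along the construction. Hence the last event of $Y$ in any thread $\Process_i$ is the $<_{\TO}$-maximum among the last event of $Y_0$ in $\Process_i$ and, for each $\Release\in S$, the last event of $\Past_{\Trace}(\{\Release\})\cup\{\Release\}$ in $\Process_i$. The number of distinct prefix lengths in $\Process_i$ is therefore at most $1 + |\mathcal{C}|$, where $\mathcal{C}$ is the set of lock-acquires ever closed along the construction.

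It remains to show $|\mathcal{C}|\leq\alpha$ using the lock-dependence graph $G_{\Trace}$. Let $O_0 \subseteq \Acquires{Y_0}$ be the set of acquires open in $Y_0$. Since in any prefix of $\Trace$ a single thread admits at most $\NestingDepth$ simultaneously open critical sections, $|O_0|\leq k\cdot\NestingDepth$. The key claim, proved by induction along the recursion, is that whenever closing some open $\Acquire$ in an ideal $Y$ creates a newly open acquire $\Acquire'$, we have $(\Acquire', \Acquire) \in E_{\Trace}$: indeed, $\Acquire' \in \Past_{\Trace}(\{\Release\})\setminus Y$ gives $\Acquire'<_{\TOO}\Release$; $\Acquire\in Y$ and the $<_{\TOO}$-downward closure of $Y$ force $\Acquire'\not<_{\TOO}\Acquire$; and $\Match{\Trace}{\Acquire'}\notin Y'$ forces $\Match{\Trace}{\Acquire'}\not<_{\TOO}\Release$. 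These are exactly the three conditions defining an edge of $G_{\Trace}$. Iterating, $\mathcal{C}\subseteq\bigcup_{\Acquire_0\in O_0} A_{\Acquire_0}$, whose size is at most $|O_0|\cdot\LockFactor\leq k\cdot\NestingDepth\cdot\LockFactor=\alpha$. Combining, each free thread admits at most $\min(n,\alpha)$ distinct prefix lengths (absorbing the additive $+1$), and multiplying across yields $|\CandidateSet_{\Trace}(\Event_1,\Event_2)|\leq \min(n,\alpha)^{k-2}$. The main obstacle is precisely this induction step linking the recursive construction to the edges of $G_{\Trace}$; once the correspondence between newly opening an acquire and following a $G_{\Trace}$-edge is established, the rest is routine counting.
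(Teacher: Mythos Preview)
Your proposal is correct and follows essentially the same approach as the paper's own proof. Both arguments pin down the two threads of $\Event_1,\Event_2$, bound the number of prefix choices in the remaining $k-2$ threads by the size of the set of lock-acquires ever closed during the construction, and then show this set lies inside $\bigcup_{\Acquire_0\in\OpenAcquires(Y_0)} A_{\Acquire_0}$ via the same inductive step: your verification that a newly opened $\Acquire'$ when closing $\Acquire$ yields the edge $(\Acquire',\Acquire)\in E_{\Trace}$ is exactly the paper's check of conditions (i)--(iii) for $(\Acquire_Y,\Acquire_{Y_1})$, just phrased in terms of ``newly open'' rather than ``smallest ancestor where $\Acquire_Y$ is open''. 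The paper's final counting step ``$|Z|\le |A|^{k-2}$'' is the terse version of your prefix-length argument, and both proofs share the same minor $+1$ slack that you note explicitly.
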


\section{The General Case}\label{sec:general}

In this section we address the general case of dynamic data-race prediction.
The section is organized in two parts, which present the formal details of \cref{them:general_parameterized} and \cref{them:general_w1hard}.

\subsection{Upper Bound}\label{subsec:constant_threads}

In this section we establish \cref{them:general_parameterized}.
Recall that, by \cref{lem:candidate_set}, the problem is reducible to detecting a realizable rf-poset in the candidate ideal set of the two events that are tested for a data-race.
Rf-poset realizability is known to be NP-complete~\cite{Gibbons97},
and solvable in polynomial time when the number of threads is bounded~\cite{Abdulla19}.
Here we establish more precise upper-bounds, based on the number of threads. In particular, we show the following.

\smallskip
\begin{restatable}{lemma}{lemoposetrealizability}\label{lem:oposet_realizability}
Rf-poset realizability can be solved in $O(k\cdot n^{k})$ time for an rf-poset of size $n$ and $k$ threads.
\end{restatable}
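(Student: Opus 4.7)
The plan is to design a dynamic-programming algorithm over $P$-ideals of the rf-poset $\OPoset = (X, P, \Observation)$. As is standard for rf-posets arising from traces (and is the only setting in which the notion of $k$ threads is meaningful), I assume $P$ refines the per-thread order $<_\TO$. Enumerate the events of thread $j$ in $<_\TO$-order as $e_j^1, \ldots, e_j^{n_j}$. Every $P$-ideal of $X$ then has the form $X_s = \bigcup_j \{e_j^1, \ldots, e_j^{i_j}\}$ for some tuple $s \in \prod_j \{0, \ldots, n_j\}$, and the total number of such tuples is $\prod_j (n_j + 1) \le (n+1)^k = O(n^k)$. I memoize a Boolean table $R(\cdot)$ with the invariant that $R(s) = \True$ iff the sub-rf-poset induced on $X_s$ is realizable; the desired answer is $R((n_1, \ldots, n_k))$.

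The recurrence is the natural ``who goes last'' DP
\[
R(s) \;=\; \bigvee_{j:\,i_j\ge 1} \bigl[\,R(s - e_j) \,\wedge\, \phi(s, j)\,\bigr], \qquad R(\mathbf{0}) = \True,
\]
where $s - e_j$ decrements the $j$-th coordinate and the locally checkable predicate $\phi(s, j)$ asserts that $e := e_j^{i_j}$ can serve as the final event of some realization of $X_s$. Specifically, $\phi(s, j)$ requires: (i) $e$ is $P$-maximal in $X_s$, and (ii) if $e \in \ReadsReleases{X}$, then $\Observation(e) \in X_s$ and every $\Write' \in X_s$ with $\Confl{\Write'}{e}$ and $\Write' \neq \Observation(e)$ satisfies $\Observation(e) \not<_P \Write'$ (so that some linearization of $X_{s-e_j}$ can schedule all such $\Write'$ strictly before $\Observation(e)$). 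With $O(n^2)$ preprocessing to tabulate $<_P$ and per-location conflict lists, each evaluation of $\phi$ takes $O(1)$ time, yielding an overall $O(k \cdot n^k)$ runtime from $O(n^k)$ states and $k$ transitions per state.

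Completeness (realizability $\Rightarrow R(s) = \True$) is the easy direction: given any realization $L$ of $X_s$, its last event is $e_j^{i_j}$ for some $j$ by $<_\TO$-respect, the prefix $L \setminus \{e_j^{i_j}\}$ realizes $X_{s-e_j}$, and both clauses of $\phi(s, j)$ are witnessed by $L$---in particular any conflicting $\Write' \in X_{s-e_j}$ must satisfy $\Write' <_L \Observation(e)$, whence $\Observation(e) \not<_P \Write'$. Soundness ($R(s-e_j) = \True \wedge \phi(s, j) \Rightarrow X_s$ realizable) is the substantive direction: starting from a realization $L'$ of $X_{s-e_j}$, I rearrange $L'$ into a realization $L''$ in which every conflicting write with $e$ precedes $\Observation(e)$, then append $e$. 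The rearrangement proceeds by iteratively swapping adjacent $P$-unordered events, with $\Observation(e) \not<_P \Write'$ guaranteeing swapping freedom for each offending $\Write'$.

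The principal obstacle is precisely this rearrangement: one must verify that the swap operations never disturb any previously scheduled read's observed-by relation. The crux is that any potentially harmful swap---one that changes the latest conflicting write for some read $\Read \in X_{s-e_j}$---would necessarily involve two events both conflicting with $\Read$, which, combined with $L'$ being a valid realization of $X_{s-e_j}$, forces structural constraints that can be used to either reorder the swap sequence or exhibit that the augmented partial order $P^+ := P \cup \{(\Write', \Observation(e)) : \Write' \in X_{s-e_j} \text{ conflicts with } e\}$---which is acyclic by clause (ii) of $\phi$---still admits a realization on $X_{s-e_j}$, whence the realization of $X_s$ follows by appending $e$.
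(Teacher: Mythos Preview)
Your recurrence is unsound: the predicate $\phi(s,j)$ is too weak when the removed event $e$ is a read. The condition $\Observation(e)\not<_P \Write'$ for every conflicting $\Write'$ is necessary for $e$ to be the final event of some realization of $X_s$, but it is not sufficient, because the reads-from constraints of the \emph{other} reads in $X_{s-e_j}$ can force $\Observation(e)<\Write'$ in every realization even when $P$ does not. Concretely, take three threads
\begin{align*}
\text{thread 1: }& a=\Write(x),\ b=\Read(y);\\
\text{thread 2: }& c=\Write(y),\ d=\Write(y);\\
\text{thread 3: }& f=\Read(y),\ g=\Write(x),\ h=\Read(x),
\end{align*}
with $\Observation(b)=c$, $\Observation(f)=d$, $\Observation(h)=a$, and $P$ the transitive closure of thread order together with $\Observation$. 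Here $a\not<_P g$, so $\phi$ accepts removing $h$ last; and $X\setminus\{h\}$ is realized by $a,c,b,d,f,g$, so your DP returns $R(X)=\True$. But $X$ is \emph{not} realizable: any realization has $b<d$ (else $b$ would observe $d$ rather than $c$), whence $a<b<d<f<g$; and since $g<_P h$, for $h$ to observe $a$ one would need $g<a$, a cycle. Equivalently, your augmented order $P^+$ is acyclic, yet adding the constraint $g<a$ to the rf-poset on $X\setminus\{h\}$ destroys realizability---this is exactly the step your final paragraph asserts but cannot establish.

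The paper also performs a DP over $P$-ideals, but builds them \emph{forward} and places the nontrivial guard on write/acquire events rather than on reads: an event $e$ extends an ideal $Y$ only if there is no read $r\notin Y$ with $\Observation(r)\in Y$ and $\Confl{e}{r}$. In the example this blocks adding $g$ whenever $a$ is already present (via the frontier pair $(a,h)$), and $a$ must be present before $d$ can be added (the frontier pair $(c,b)$ forces $b\in Y$ first, hence $a\in Y$); so the full set is correctly unreachable. With this guard the invariant ``each reachable ideal is realized by its canonical trace'' goes through by a one-line induction, with no rearrangement argument needed.
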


\Paragraph{Frontiers and extensions.}
Let $\OPoset=(X,P,\Observation)$ be an rf-poset, and
consider an ideal $Y$ of $\OPoset$.
The \emph{frontier} of $Y$, denoted $\Frontier{\OPoset}{Y}$,  is the set of pairs $(\Write, \Read)\in \ReadPairs{\OPoset}$ such that $\Write \in Y$ and $\Read\not \in Y$.
An event $\Event$ executable in $Y$ is said to \emph{extend} $Y$ if for every triplet $(\Write, \Read, \Event)\in \ReadTriplets{\OPoset}$, we have  $(\Write, \Read)\not\in \Frontier{\OPoset}{Y}$.
In this case, we say that $Y \cup \set{e}$ is an \emph{extension} of $Y$ via $\Event$.

\Paragraph{Ideal graphs and canonical traces.}
Let $\OPoset=(X,P,\Observation)$ be an rf-poset.
The \emph{ideal graph} of $\OPoset$, denoted $G_{\OPoset}=(V_{\OPoset}, E_{\OPoset})$ is a directed graph defined as follows.
\begin{compactenum}
\item $V_{\OPoset}$ is the set of ideals of $\OPoset$.
\item We have $(Y_1, Y_2)\in E_{\OPoset}$ iff $Y_2$ is an extension of $Y_1$.
\end{compactenum}
The \emph{ideal tree} of $\OPoset$, denoted $\Tree_{\OPoset}=(\TreeNodes_{\OPoset}, \TreeEdges_{\OPoset})$ is a (arbitrary) spanning tree of $G_{\OPoset}$ when restricted to nodes reachable from $\emptyset$.
We let $\emptyset$ be the root of $\Tree_{\OPoset}$.
Given an ideal $Y\in \TreeNodes_{\OPoset}$, we define the \emph{canonical trace} $\Trace_{Y}$ of $Y$ inductively, as follows.
If $Y=\emptyset$ then $\Trace_Y=\epsilon$. 
Otherwise, $Y$ has a parent $Y'$ in $\Tree_{\OPoset}$ such that $Y=Y'\cup \{ \Event \}$ for some event $\Event\in X$.
We define $\Trace_{Y}=\Trace_{Y'} \circ \Sequence{\Event}$.
\cref{lem:oposet_realizability} relies on the following lemmas.
We refer to \cref{subsec:proofs_constant_threads} for the proofs.

\smallskip
\begin{restatable}{lemma}{lemidealtree}\label{lem:ideal_tree}
We have $X\in \TreeNodes_{\OPoset}$ iff $\OPoset$ is realizable.
\end{restatable}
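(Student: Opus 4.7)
I would prove the biconditional by exhibiting a direct correspondence between realizing linearizations of $\OPoset$ and root-to-$X$ paths in the ideal graph $G_{\OPoset}$ (equivalently, in its spanning tree $\Tree_{\OPoset}$).

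For the forward direction, assume $X \in \TreeNodes_{\OPoset}$. Then there is a path $\emptyset = Y_0, Y_1, \dots, Y_n = X$ in $\Tree_{\OPoset}$ where each $Y_{i+1} = Y_i \cup \{\Event_i\}$ is an extension of $Y_i$. Define $\Trace^* = \Trace_X$, the canonical trace, so that $\Trace^* = \Event_0 \Event_1 \cdots \Event_{n-1}$. Since each $\Event_i$ is executable in $Y_i$, all $P$-predecessors of $\Event_i$ already belong to $Y_i$, so $\Trace^*$ is a linearization of $P$. It remains to show $\Observation_{\Trace^*} = \Observation$. Fix any pair $(\Write, \Read) \in \ReadPairs{\OPoset}$, so $\Observation(\Read) = \Write$ and $\Write <_P \Read$; thus $\Write$ precedes $\Read$ in $\Trace^*$. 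Suppose for contradiction some conflicting $\Write' \neq \Write$ sits between them in $\Trace^*$. Then at the step where $\Write'$ is added to the current ideal $Y$, we have $\Write \in Y$ and $\Read \notin Y$, so $(\Write, \Read) \in \Frontier{\OPoset}{Y}$; but $(\Write, \Read, \Write') \in \ReadTriplets{\OPoset}$, contradicting the extension condition that witnessed $Y \cup \{\Write'\}$ as a child in the tree. Hence $\Write$ is the last conflicting write before $\Read$ in $\Trace^*$, giving $\Observation_{\Trace^*}(\Read) = \Write$. The same argument applies to lock-release events (using $\Observation$ extended to matching acquires).

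For the backward direction, assume $\OPoset$ is realizable by some $\Trace^* = \Event_0 \Event_1 \cdots \Event_{n-1}$. Let $Y_i = \{\Event_0, \dots, \Event_{i-1}\}$. Since $\Trace^*$ linearizes $P$, each $Y_i$ is an ideal and each $\Event_i$ is executable in $Y_i$. I then verify that $Y_{i+1}$ is an extension of $Y_i$: for any triplet $(\Write, \Read, \Event_i) \in \ReadTriplets{\OPoset}$, suppose toward contradiction $(\Write, \Read) \in \Frontier{\OPoset}{Y_i}$, i.e.\ $\Write$ appears before $\Event_i$ in $\Trace^*$ and $\Read$ appears after. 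Then the order in $\Trace^*$ is $\Write, \Event_i, \Read$ with $\Event_i$ conflicting with $\Read$, so $\Observation_{\Trace^*}(\Read) \neq \Write$, contradicting $\Observation_{\Trace^*} = \Observation$. Hence $(Y_i, Y_{i+1}) \in E_{\OPoset}$ for every $i$, so $X = Y_n$ is reachable from $\emptyset$ in $G_{\OPoset}$, and therefore $X \in \TreeNodes_{\OPoset}$.

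The main obstacle, and where I would spend the most care, is the forward direction's case analysis tying the combinatorial extension condition back to the semantic reads-from requirement, including handling both reads and lock-releases uniformly (via the extended $\Observation$) and making sure that the spanning tree path---as opposed to an arbitrary DAG path in $G_{\OPoset}$---is what gets used; this is fine because the extension relation is what defines the edges of $G_{\OPoset}$, and any root-to-$X$ path in the spanning tree consists of extensions. Everything else is routine unpacking of the definitions of ideal, frontier, triplet, and canonical trace.
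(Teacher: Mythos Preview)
Your proposal is correct and follows essentially the same approach as the paper: the forward direction argues that the canonical trace $\Trace_X$ realizes $\OPoset$ by observing that the extension condition rules out any conflicting $\Write'$ between a pair $(\Write,\Read)$ (the paper phrases this as an induction over all canonical traces $\Trace_Y$, but the core contradiction---locating the step at which $\Write'$ was added and noting $(\Write,\Read)$ is then in the frontier---is identical), and the backward direction walks the realizing linearization prefix by prefix, using $\Observation_{\Trace^*}=\Observation$ to certify that each step is a valid extension edge in $G_{\OPoset}$.
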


\smallskip
\begin{restatable}{lemma}{lemidealgraphsize}\label{lem:idealgraph_size}
The ideal graph $G_{\OPoset}$ has $O(n^k)$ nodes.
\end{restatable}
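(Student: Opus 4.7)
The plan is to exploit the fact that within each thread, the events of $X$ are totally ordered by $P$. Since $\OPoset$ arises as the canonical rf-poset of a trace ideal, its partial order $P$ refines $\TOO\Project X$, which in turn refines the thread order $\TO$; hence for any thread $\Process_i$, the set $X_i := X \cap \SysEvents_{\Process_i}$ is a chain in $(X,P)$.

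First, I would observe that for any ideal $Y$ of $\OPoset$, the intersection $Y_i := Y \cap \SysEvents_{\Process_i}$ is a prefix of the chain $X_i$. Indeed, if $\Event \in Y_i$ and $\Event' \in X_i$ with $\Event' <_{\TO} \Event$, then $\Event' <_P \Event$ by the refinement noted above, so $\Event' \in Y$ by downward closure, hence $\Event' \in Y_i$. Thus $Y_i$ is determined by its cardinality $|Y_i|$ alone.

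Second, I would argue that the map $Y \mapsto (|Y_1|,\dots,|Y_k|)$ is injective on ideals: two ideals with the same tuple have equal thread-projections, and since $Y = \bigcup_i Y_i$, the ideals themselves coincide. The range of this injection lies in $\prod_{i=1}^{k}\{0,1,\dots,|X_i|\}$, whose size is at most $\prod_{i=1}^{k}(|X_i|+1) \leq (n+1)^k = O(n^k)$. This yields the claimed bound on $|V_{\OPoset}|$.

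There is no real obstacle here; the only point requiring care is to explicitly invoke that the canonical rf-poset $P$ refines the per-thread order, without which the ``prefix'' argument would fail. Once that is in place, the counting is immediate.
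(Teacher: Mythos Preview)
Your proposal is correct and follows essentially the same approach as the paper: the paper's proof is the one-liner ``the poset $(X,P)$ has width at most $k$,'' and your argument is precisely the standard unpacking of that fact (decompose $X$ into $k$ thread-chains, observe each ideal is a tuple of prefixes). One minor remark: you justify the chain property by assuming $\OPoset$ is a canonical rf-poset of a trace ideal, whereas \cref{lem:idealgraph_size} is phrased for the general rf-posets of \cref{lem:oposet_realizability}; the paper's width claim tacitly relies on the same per-thread totality, so this is a presentational difference rather than a gap.
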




\begin{proof}[Proof of \cref{them:general_parameterized}.]
Consider a trace $\Trace$ and two conflicting events $\Event_1, \Event_2\in \WritesReads{\Trace}$.
By \cref{lem:candidate_set}, to decide whether $(\Event_1, \Event_2)$ is a predictable data race of $\Trace$, it suffices to iterate over all feasible trace ideals $X$ in the candidate ideal set $\CandidateSet_{\Trace}(\Event_1, \Event_2)$, and test whether $X$ is realizable.
By \cref{lem:num_candidate_ideals}, we have $|\CandidateSet_{\Trace}(\Event_1, \Event_2)|=O(\alpha)$, where $\alpha=\min(n, k\cdot \NestingDepth\cdot \LockFactor)^{k-2}$.
Finally, due to \cref{lem:oposet_realizability}, the realizability of every such ideal can be performed in $O(k\cdot n^{k})=O(\beta)$ time.
\end{proof}


\subsection{Hardness of Data Race Prediction}\label{subsec:w1_hardness}

Here we establish that the problem of dynamic data-race prediction is W[1]-hard when parameterized by the number of threads $k$.
Our proof is established in two steps.
In the first step, we show the following lemma.
\smallskip
\begin{restatable}{lemma}{lemoposetwhardness}\label{lem:oposet_w1hardness}
Rf-poset realizability parameterized by the number of threads $k$ is W[1]-hard. 
\end{restatable}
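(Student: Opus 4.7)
The plan is to give an fpt-reduction from \textsc{Multicolored Clique}, a canonical W[1]-complete problem parameterized by the number of colors $k$. Given a graph $G=(V,E)$ with vertex partition $V=V_1\sqcup\cdots\sqcup V_k$, I construct an rf-poset $\OPoset=(X,P,\Observation)$ using $f(k)=O(k^2)$ threads that is realizable if and only if $G$ contains a $k$-clique with exactly one vertex from each class $V_i$. Since $f(k)$ depends only on the source parameter, this is a valid fpt-reduction from a W[1]-hard problem to the parameter ``number of threads''.

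The construction uses two gadgets. First, a \emph{selection gadget} for each color $i$: a dedicated thread $T_i$ with a read $r_i$ on a fresh variable $x_i$, together with $|V_i|$ auxiliary writes to $x_i$ arranged so that $r_i$ can observe any one of them; in a realization, $\Observation(r_i)$ encodes a chosen vertex $v_i\in V_i$. Second, an \emph{edge-consistency gadget} for each ordered pair $(i,j)$: on dedicated fresh variables, for every non-edge $\{u,v\}$ with $u\in V_i$ and $v\in V_j$, I install a short pattern of reads and writes that forces a cyclic ordering on any realization whose selections satisfy $\Observation(r_i)=u$-write and $\Observation(r_j)=v$-write; by \cref{rem:closure}, this makes the closure undefined and blocks realizability precisely on bad choices. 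I would first formalize the selection gadget and verify that every realization induces a well-defined tuple $(v_1,\ldots,v_k)\in V_1\times\cdots\times V_k$, then formalize the pairwise gadget and, using the closure machinery of \cref{subsec:po}, verify that its closure is defined precisely when $(v_i,v_j)\in E$. Because distinct gadgets share no variables, their closures compose without interference, so $\OPoset$ has a defined closure iff the induced tuple is a multicolored $k$-clique; a direct topological-sort argument then lifts any such closure to an actual realizing linearization.

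The main obstacle will be designing the edge-consistency gadget so that it blocks realizability precisely on non-edge choices and nothing more. I would address this by keeping each non-edge enforcement syntactically local, a constant-size pattern whose orientation depends solely on $\Observation(r_i)$ and $\Observation(r_j)$, and by insulating closures across pairs via fresh variables per pair $(i,j)$. A secondary subtlety is to guarantee that $P$ together with the selection gadgets leaves enough scheduling freedom to realize any clique-consistent assignment, which I would establish by exhibiting a canonical schedule that first commits all selections and then runs the edge gadgets in a standard order. Since the construction has size polynomial in $|G|$ and uses only $O(k^2)$ threads, the reduction is fpt and establishes W[1]-hardness of rf-poset realizability parameterized by the number of threads.
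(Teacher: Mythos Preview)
Your proposal has a genuine gap in the selection gadget, stemming from a misunderstanding of the rf-poset realizability problem. In an rf-poset $\OPoset=(X,P,\Observation)$, the reads-from function $\Observation$ is \emph{part of the input}: the problem asks whether there is a linearization $\Trace^*$ of $P$ with $\Observation_{\Trace^*}=\Observation$ for this \emph{fixed} $\Observation$. You write that ``$r_i$ can observe any one of'' the $|V_i|$ writes and that ``in a realization, $\Observation(r_i)$ encodes a chosen vertex $v_i\in V_i$'', but $\Observation(r_i)$ is determined before realizability is even posed; different realizations cannot yield different values of $\Observation(r_i)$. Consequently your selection mechanism encodes no choice at all, and the edge-consistency gadget---whose behavior is conditioned on $\Observation(r_i)$ and $\Observation(r_j)$---is triggered or not triggered identically for every realization. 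The reduction collapses.

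The paper's construction avoids this pitfall by encoding the vertex choices in the \emph{ordering freedom} of $P$ rather than in $\Observation$: it reduces from $\INDSET(c)$ using $2c+2$ threads, where each $\SeqTrace_i$ contains a sequence of nested critical sections (one block per vertex of $G$), and the ``choice'' of vertex $l_i$ is which block's critical sections are straddled by the event $\Read(x)$ in a witness linearization. The edge constraints are enforced by locks $\ell_{\{u,v\}}$ shared across threads, so that if two threads simultaneously straddle $\Read(x)$ with blocks for adjacent vertices, mutual exclusion is violated. If you want to repair your approach, the selection must likewise be expressed through which events fall before versus after some anchor in the linearization, not through $\Observation$.
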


Rf-poset realizability is known to be NP-hard~\cite[Theorem~4.1]{Gibbons97},
and \cref{lem:oposet_w1hardness} strengthens that result by showing that the problem is even unlikely to be FPT.
In the second step, we show how the class of W[1]-hard instances of in \cref{lem:oposet_w1hardness} can be reduced to dynamic data-race prediction.

\Paragraph{Hardness of rf-poset realizability.}
Our reduction is from the $\INDSET(c)$ problem, which takes as input an undirected graph $G=(V,E)$ and asks whether $G$ has an independent set of size $c$.
$\INDSET(c)$ parameterized by $c$  is one of the canonical W[1]-hard problems~\cite{Rodney99}.

\begin{figure*}
\newcommand{\xdisposition}{0}
\newcommand{\ydisposition}{0}
\newcommand{\xtstep}{0.75}
\newcommand{\ytstep}{0.7}
\newcommand{\ybias}{-0.3 }
\newcommand{\xstep}{2.5}
\newcommand{\ystep}{-0.475}
\newcommand{\xtscale}{0.8}

\def \numevents{16.5}

\newcommand{\eventA}[4]{
\node[event, draw=black, fill=white] (A#1) at (#1*\xstep, #2*\ystep) {\footnotesize $#2(x_{#3})$};
}

\scalebox{0.95}{
\begin{tikzpicture}[thick,
pre/.style={<-,shorten >= 2pt, shorten <=2pt, very thick},
post/.style={->,shorten >= 3pt, shorten <=3pt,   thick},
seqtrace/.style={->, line width=2},
und/.style={very thick, draw=gray},
event/.style={rectangle, minimum height=0.8mm, minimum width=15mm,  line width=1pt, inner sep=0.5,},
virt/.style={circle,draw=black!50,fill=black!20, opacity=0}]
\footnotesize

\node[] (G) at (-1.4*\xstep,6*\ystep) {\large  $G$};
\node[draw=black, circle, very thick] (1) at (-1.4*\xstep,9*\ystep) {\normalsize $1$};
\node[draw=black, circle, very thick] (2) at (-1.4*\xstep,7.5*\ystep) {\normalsize $2$};
\node[draw=black, circle, very thick] (3) at (-1.4*\xstep,10.5*\ystep) {\normalsize $3$};

\draw[-, very thick] (1) to (2);
\draw[-, very thick] (1) to (3);

\node[] (S11) at (0*\xstep,0) {\normalsize $\SeqTrace_1$};
\node[] (S12) at (0*\xstep,\numevents * \ystep) {};
\node[] (S21) at (2*\xstep,0) {\normalsize $\SeqTrace_2$};
\node[] (S22) at (2*\xstep,\numevents * \ystep) {};
\node[] (S31) at (1*\xstep,0) {\normalsize $\SeqTrace_3$};
\node[] (S32) at (1*\xstep,\numevents * \ystep) {};
\node[] (S41) at (3*\xstep,0) {\normalsize $\SeqTrace_4$};
\node[] (S42) at (3*\xstep,\numevents * \ystep) {};
\node[] (S51) at (4*\xstep,0) {\normalsize $\SeqTrace_5$};
\node[] (S52) at (4*\xstep,\numevents * \ystep) {};
\node[] (S61) at (5*\xstep,0) {\normalsize $\SeqTrace_6$};
\node[] (S62) at (5*\xstep,\numevents * \ystep) {};

\draw[seqtrace] (S11) to (S12);
\draw[seqtrace] (S21) to (S22);
\draw[seqtrace] (S31) to (S32);
\draw[seqtrace] (S41) to (S42);
\draw[seqtrace] (S51) to (S52);
\draw[seqtrace] (S61) to (S62);

\node[event, draw=black, fill=white] (11) at (0*\xstep, 1*\ystep + 0*\ybias) {$\Acquire_1(\ell_{\{1,2\}})$};
\node[event, draw=black, fill=white] (12) at (0*\xstep, 2*\ystep + 0*\ybias) {$\Acquire_1(\ell_{\{1,3\}})$};
\node[event, draw=black, fill=white] (13) at (0*\xstep, 3*\ystep + 0*\ybias) {$\Write(s_1)$};
\node[event, draw=black, fill=white] (14) at (0*\xstep, 4*\ystep + 0*\ybias) {$\Read(z_1^1)$};
\node[event, draw=black, fill=white] (15) at (0*\xstep, 5*\ystep + 0*\ybias) {$\Release_1(\ell_{\{1,3\}})$};
\node[event, draw=black, fill=white] (16) at (0*\xstep, 6*\ystep + 0*\ybias) {$\Release_1(\ell_{\{1,3\}})$};
\node[event, draw=black, fill=white] (17) at (0*\xstep, 7*\ystep + 1*\ybias) {$\Acquire_1(\ell_{\{1,2\}})$};
\node[event, draw=black, fill=white, line width=1.75] (18) at (0*\xstep, 8*\ystep + 1*\ybias) {$\Write(y_1^2)$};
\node[event, draw=black, fill=white] (19) at (0*\xstep, 9*\ystep + 1*\ybias) {$\Read(z_1^2)$};
\node[event, draw=black, fill=white] (110) at (0*\xstep, 10*\ystep + 1*\ybias) {$\Release_1(\ell_{\{1,2\}})$};
\node[event, draw=black, fill=white] (111) at (0*\xstep, 11*\ystep + 2*\ybias) {$\Acquire_1(\ell_{\{1,3\}})$};
\node[event, draw=black, fill=white] (112) at (0*\xstep, 12*\ystep + 2*\ybias) {$\Write(y_1^3)$};
\node[event, draw=black, fill=white] (113) at (0*\xstep, 13*\ystep + 2*\ybias) {$\Read_1(x)$};
\node[event, draw=black, fill=white] (114) at (0*\xstep, 14*\ystep + 2*\ybias) {$\Release_1(\ell_{\{1,3\}})$};

\node[event, draw=black, fill=white] (21) at (2*\xstep, 1*\ystep + 0*\ybias) {$\Acquire_2(\ell_{\{1,2\}})$};
\node[event, draw=black, fill=white] (22) at (2*\xstep, 2*\ystep + 0*\ybias) {$\Acquire_2(\ell_{\{1,3\}})$};
\node[event, draw=black, fill=white] (23) at (2*\xstep, 3*\ystep + 0*\ybias) {$\Write(s_2)$};
\node[event, draw=black, fill=white] (24) at (2*\xstep, 4*\ystep + 0*\ybias) {$\Read(z_2^1)$};
\node[event, draw=black, fill=white] (25) at (2*\xstep, 5*\ystep + 0*\ybias) {$\Release_2(\ell_{\{1,3\}})$};
\node[event, draw=black, fill=white] (26) at (2*\xstep, 6*\ystep + 0*\ybias) {$\Release_2(\ell_{\{1,3\}})$};
\node[event, draw=black, fill=white] (27) at (2*\xstep, 7*\ystep + 1*\ybias) {$\Acquire_2(\ell_{\{1,2\}})$};
\node[event, draw=black, fill=white] (28) at (2*\xstep, 8*\ystep + 1*\ybias) {$\Write(y_2^2)$};
\node[event, draw=black, fill=white] (29) at (2*\xstep, 9*\ystep + 1*\ybias) {$\Read(z_2^2)$};
\node[event, draw=black, fill=white] (210) at (2*\xstep, 10*\ystep + 1*\ybias) {$\Release_2(\ell_{\{1,2\}})$};
\node[event, draw=black, fill=white] (211) at (2*\xstep, 11*\ystep + 2*\ybias) {$\Acquire_2(\ell_{\{1,3\}})$};
\node[event, draw=black, fill=white, line width=1.75] (212) at (2*\xstep, 12*\ystep + 2*\ybias) {$\Write(y_2^3)$};
\node[event, draw=black, fill=white] (213) at (2*\xstep, 13*\ystep + 2*\ybias) {$\Read_2(x)$};
\node[event, draw=black, fill=white] (214) at (2*\xstep, 14*\ystep + 2*\ybias) {$\Release_2(\ell_{\{1,3\}})$};

\node[event, draw=black, fill=white] (31) at (1*\xstep, 4*\ystep + 1*\ybias) {$\Acquire^1(\ell_1)$};
\node[event, draw=black, fill=white] (32) at (1*\xstep, 5*\ystep + 1*\ybias) {$\Write(z_1^1)$};
\node[event, draw=black, fill=white] (33) at (1*\xstep, 6*\ystep + 1*\ybias) {$\Read(y_1^2)$};
\node[event, draw=black, fill=white, line width=1.75] (34) at (1*\xstep, 7*\ystep + 1*\ybias) {$\Release^1(\ell_1)$};
\node[event, draw=black, fill=white] (35) at (1*\xstep, 10*\ystep + 1*\ybias) {$\Acquire^2(\ell_1)$};
\node[event, draw=black, fill=white] (36) at (1*\xstep, 11*\ystep + 1*\ybias) {$\Write(z_1^2)$};
\node[event, draw=black, fill=white] (37) at (1*\xstep, 12*\ystep + 1*\ybias) {$\Read(y_1^3)$};
\node[event, draw=black, fill=white] (38) at (1*\xstep, 13*\ystep + 1*\ybias) {$\Release^2(\ell_1)$};

\node[event, draw=black, fill=white] (41) at (3*\xstep, 4*\ystep + 1*\ybias) {$\Acquire^1(\ell_2)$};
\node[event, draw=black, fill=white] (42) at (3*\xstep, 5*\ystep + 1*\ybias) {$\Write(z_2^1)$};
\node[event, draw=black, fill=white] (43) at (3*\xstep, 6*\ystep + 1*\ybias) {$\Read(y_2^2)$};
\node[event, draw=black, fill=white] (44) at (3*\xstep, 7*\ystep + 1*\ybias) {$\Release^2(\ell_2)$};
\node[event, draw=black, fill=white] (45) at (3*\xstep, 10*\ystep + 1*\ybias) {$\Acquire^2(\ell_2)$};
\node[event, draw=black, fill=white] (46) at (3*\xstep, 11*\ystep + 1*\ybias) {$\Write(z_2^2)$};
\node[event, draw=black, fill=white] (47) at (3*\xstep, 12*\ystep + 1*\ybias) {$\Read(y_2^3)$};
\node[event, draw=black, fill=white, line width=1.75] (48) at (3*\xstep, 13*\ystep + 1*\ybias) {$\Release^2(\ell_2)$};

\node[event, draw=black, fill=white] (51) at (4*\xstep, 7*\ystep + -0.2*\ybias) {$\Write(x)$};

\node[event, draw=black, fill=white] (61) at (5*\xstep, 4*\ystep + 0.25*\ybias) {$\Read(s_1)$};
\node[event, draw=black, fill=white] (62) at (5*\xstep, 5*\ystep + 0.25*\ybias) {$\Read(s_2)$};
\node[event, draw=black, fill=white] (63) at (5*\xstep, 6*\ystep +  0.25*\ybias) {$\Acquire(\ell_1)$};
\node[event, draw=black, fill=white] (64) at (5*\xstep, 7*\ystep +  0.25*\ybias) {$\Acquire(\ell_2)$};
\node[event, draw=black, fill=white] (65) at (5*\xstep, 8*\ystep +  0.25*\ybias) {$\Read(x)$};
\node[event, draw=black, fill=white] (66) at (5*\xstep, 9*\ystep +  0.25*\ybias) {$\Release(\ell_2)$};
\node[event, draw=black, fill=white] (67) at (5*\xstep, 10*\ystep +  0.25*\ybias) {$\Release(\ell_1)$};

\draw[post, out=180, in=-0, looseness=-0.4] (32) to (14);
\draw[post, out=0, in=180, looseness=-0.4] (18) to (33);
\draw[post, out=180, in=-0, looseness=-0.4] (36) to (19);
\draw[post, out=0, in=180, looseness=-0.4] (112) to (37);

\draw[post, out=180, in=-0, looseness=-0.4] (42) to (24);
\draw[post, out=0, in=180, looseness=-0.4] (28) to (43);
\draw[post, out=180, in=-0, looseness=-0.4] (46) to (29);
\draw[post, out=0, in=180, looseness=-0.4] (212) to (47);

\draw[post] (13) to (61);
\draw[post, out=0, in=180, looseness=-0.4] (23) to (62);

\draw[post, out=0, in=180, looseness=-0.4] (51) to (65);
\draw[post, out=180, in=0, looseness=0.75] (65) to ($ (48) + (0,2.05*\ystep) $) to ($ (214) + (-0.25*\xstep,-0.25) $) to (113);
\draw[post, out=180, in=0, looseness=0.9] (65) to ($ (48) + (0,-0.5) $)  to  (213);


\end{tikzpicture}
}
\caption{
Illustration of the o-poset $\OPoset_G$ given a graph $G$ and independent-set size $c=2$.
Edges represent orderings in $P$.
}
\label{fig:w1hardness}
\end{figure*}

Given an input $G=(V,E)$ of $\INDSET(c)$ with $n=|V|$,
we construct an rf-poset $\OPoset_G=(X,P,\Observation)$ of size $O(c\cdot n)$ and $O(c)$ threads such that $\OPoset_G$ is realizable iff $G$ has an independent set of size  $c$.
We assume wlog that every node in $G$ has at least one neighbor,
otherwise, we can remove all $s$ such nodes and solve the problem for parameter $c'=c-s$.
The rf-poset $\OPoset_G$ consists of $k=2\cdot c + 2$ total orders $(X_i, \SeqTrace_i)$.
\cref{fig:w1hardness} provides an illustration.
In high level, for each $i\in [c]$, $\SeqTrace_i$ and $\SeqTrace_{c+i}$ are used to encode the $i$-th copy of $G$,
whereas the last two total orders are auxiliary.
Superscripts on the events and/or their variables refer to the node of $G$ that is encoded by those events.
Below we describe the events and certain orderings between them.
The partial order $P$ is the transitive closure of these orderings.
\begin{compactenum}
\item For $i=2\cdot c + 1$, $\SeqTrace_i$ consists of a single event
$
\SeqTrace_i= \Write(x)
$.
\item For $i=2\cdot c + 2$, we have
$\SeqTrace_i=\SeqSubTrace \Concat\SeqSubTraceRefl$, where
\begin{align*}
\SeqSubTrace &= \Read(s_1),\dots, \Read(s_c), \Acquire(\ell_1),\dots, \Acquire(\ell_c)
\qquad \text{and}\\
\SeqSubTraceRefl &=\Read(x), \Release(\ell_c),\dots, \Release(\ell_1)\ .
\end{align*}
\item For each $i\in [c]$, we have $\SeqTrace_i=\SeqTrace_i^1\Concat \SeqTrace_i^2\Concat \dots \Concat \SeqTrace_i^n$,
where each $\SeqTrace_i^j$ encodes node $j$ of $G$ and is defined as follows.
Let $\ov{\SeqTrace}^j_i=\SeqSubTrace^j_i \Concat \SeqSubTraceRefl^j_i$, where
\begin{align*}
\SeqSubTrace^j_i&=\Acquire_i(\ell_{\{ j, l_1 \}}), \dots , \Acquire_i(\ell_{\{ j, l_m \}})
\qquad\text{and}\\
\SeqSubTraceRefl^j_i&=\Release_i(\ell_{\{ j, l_m \}}), \dots , \Release_i(\ell_{\{ j, l_1 \}})
\end{align*}
where $l_1,\dots, l_m$ are the neighbors of $j$ in $G$.
For each $j\in [n]\setminus\{1, n \}$, the sequence $\SeqTrace_i^j$ is identical to $\ov{\SeqTrace}^j_i$, with the addition that the innermost critical section (i.e., between $\Acquire_i(\ell_{\{ j, l_m \}})$ and $\Release_i(\ell_{\{ j, l_m \}})$) contains
the sequence $\Write(y_{i}^{j}), \Read(z_{i}^{j})$.
The sequence $\SeqTrace_i^1$ is defined similarly, except that the innermost critical section contains the sequence $\Write(s_i), \Read(z_{i}^{1})$.
Finally, the sequence $\SeqTrace_i^n$ is defined similarly, except that the innermost critical section contains the sequence $\Write(y_{i}^{n}), \Read_i(x)$.
\item For each $i\in  [c]$, we have $\SeqTrace_{c+i}=\SeqTrace_{c+i}^1\Concat \SeqTrace_{c+i}^2\Concat \dots \Concat \SeqTrace_{c+i}^{n-1}$,
where $\SeqTrace_{c+i}^j = \Acquire^j(\ell_i), \Write(z_i^j), \Read(y_i^{j+1}),  \Release^j(\ell_i)$.
\end{compactenum}
Note that every memory location is written exactly once, hence the reads-from function $\Observation$ is defined implicitly.
In addition, for every read event $\Read$, we have $\Observation(\Read)<_{P}\Read$,
as well as $\Read(x)<_{P}\Read_i(x)$ for each $i \in [c]$.

\SubParagraph{Correctness.}
We now sketch the correctness of the construction, while we refer to \cref{subsec:proofs_w1_hardness} for the proof.
Assume that $\OPoset$ is realizable by a witness $\Trace$.
We say that $\Read(x)$ \emph{separates} a critical section in $\Trace$ if the lock-acquire (resp., lock-release) event of that critical section appears before (resp., after) $\Read(x)$ in $\Trace$.
The construction guarantees that, for each $i\in [c]$, $\Read(x)$ separates the critical sections of $\SeqTrace_i$ that encode some node $l_i$ of $G$.
By construction, these critical sections are on locks  $\ell_{\{l_i, v\}}$, where $v$ ranges over the neighbors of $l_i$ in $G$.
Hence, for any $i'\neq i$, the node $l_{i'}$ cannot be a neighbor of $l_i$, as this would imply that both critical sections on lock $\ell_{\{l_i, l_{i
'}\}}$ are opened before $\Read(x)$ and closed after $\Read(x)$ in $\Trace$, which clearly violates lock semantics.
Thus, an independent set $A=\{l_1,\dots, l_c\}$ of $G$ is formed by taking each $l_i$ to be the node of $G$, the critical sections of which belong to thread $\SeqTrace_i$ and are separated by $\Read(x)$ in $\Trace$.
On the other hand, if $G$ has an independent set $A=\{l_1,\dots, l_c\}$, a witness $\Trace$ that realizes $\OPoset$ can be constructed by separating the critical sections of the node $l_i$ in $\SeqTrace_i$, for each $i\in [c]$.

\Paragraph{Hardness of dynamic data-race prediction.}
Finally, we turn our attention to the hardness of dynamic data-race prediction.
Consider the $\INDSET(c)$ problem on a graph $G$ and the associated rf-poset $\OPoset_G=(X,P,\Observation)$ defined above.
We construct a trace $\Trace$ with $\Events{\Trace}=X$ and $\Observation_{\Trace}=\Observation$ such that
$(\Write(x), \Read(x))$ is a predictable data-race of $\Trace$ iff $\OPoset_G$ is realizable.
In particular, $\Trace$ consists of $2\cdot c + 2$ threads $\Process_i$, one for each total order $\SeqTrace_i$ of $\OPoset_G$.
We obtain $\Trace$ as
\[
\Trace=\SeqTrace_{2\cdot c + 1}\Concat \Trace_1\Concat \dots \Concat \Trace_c \Concat \SeqTrace_{2\cdot c + 2}\ ,
\]
where each $\Trace_i$ is an appropriate interleaving of the total orders $\SeqTrace_i$ and $\SeqTrace_{c+i}$ that respects the reads-from function $\Observation$.
In \cref{subsec:proofs_w1_hardness}, we conclude the proof of \cref{them:general_w1hard} by showing that $G$ has an independent set of size $c$ iff $(\Write(x), \Read(x))$ is a predictable data-race of $\Trace$.

\smallskip
\begin{remark}\label{rem:eth}
It is known that $\INDSET(c)$ cannot be solved in $f(c)\cdot n^{o(c)}$ time under ETH~\cite{Chen06}.
As our reduction to rf-poset realizability and dynamic data-race prediction uses $k=O(c)$ threads,
each of these problems does not have a $f(k)\cdot n^{o(k)}$-time algorithm under ETH.
\end{remark}

\section{Tree Communication Topologies}\label{sec:trees}

In this section we focus on the case where the input trace $\Trace$ constitutes a tree communication topology.
The section is organized in two parts, which present the formal details of \cref{them:tree_topologies_ub} and \cref{them:tree_topologies_lb}.


\subsection{An Efficient Algorithm for Tree Topologies}\label{subsec:tree_topologies_ub}

In this section we present the formal details of \cref{them:tree_topologies_ub}.
Recall that the \cref{them:general_parameterized} states an $O(\alpha\cdot \beta)$ upper-bound for dynamic data-race prediction, where $\beta$ is the complexity of deciding rf-poset realizability, and $\alpha$ is an upper-bound on the number of candidate ideals whose realizability we need to check.
For tree communication topologies, we obtain \cref{them:tree_topologies_ub}:
(i)~we show an improved upper-bound $\beta$ on the complexity of the realizability of trace ideals over tree topologies, and
(ii)~we show that it suffices to check the realizability of a single trace ideal (i.e., $\alpha=1$).
We start with point (i), and then proceed with (ii).

\Paragraph{Tree-inducible rf-posets.}
Let $(X,P)$ be a poset where $X\subseteq \Events{\Trace}$.
We call $(X,P)$ \emph{tree-inducible} if $X$ can be partitioned into $k$ sets
$\{X_i\}_{1\leq i\leq k}$ such that  the following conditions hold.
\begin{compactenum}
\item\label{item:tree_ind1} The graph $\Tree=([k], \setpred{ (i,j)}{\Confl{X_i}{X_j}})$ is a tree.
\item\label{item:tree_ind2} $P\Project X_{i}$ is a total order for each $i\in [k]$.
\item\label{item:tree_ind3} For every node $\ell\in[k]$ such that $\ell$ is 
an internal node in $\Tree$ and for every two connected components $C_1$, $C_2$ of $\Tree$ that are created after removing $\ell$ from $\Tree$, we have the following.
Consider two nodes $i\in C_1$ and $j\in C_2$ and
two events $\Event_1\in X_i$ and $\Event_2\in X_j$ such that $\Event_1<_{P} \Event_2$,
there exists some event $\Event\in X_{\ell}$ such that $\Event_1<_{P}\Event<_{P}\Event_2$.
\end{compactenum}
We call an rf-poset $(X,P,\Observation)$ tree-inducible if $(X,P)$ is tree-inducible.
The insight is that traces from tree communication topologies yield tree-inducible trace ideals.
Our motivation behind tree inducibility comes from the following lemma.

\smallskip
\begin{restatable}{lemma}{lemoposetrealizabilitytrees}\label{lem:oposet_realizability_trees}
Rf-poset realizability of tree-inducible rf-posets can be solved in $O(k^2\cdot d\cdot n^2\cdot \log n)$ time, for an rf-poset of size $n$, $k$ threads and $d$ variables.
\end{restatable}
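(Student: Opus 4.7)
My plan is to lift the classical two-thread realizability DP to tree-inducible rf-posets by exploiting the separator property in item~\ref{item:tree_ind3}: any $P$-ordering between events in two components obtained by removing an internal node $\ell$ of $\Tree$ must factor through an event of $X_\ell$. I would fix an arbitrary root of $\Tree$ and process its nodes in postorder. For each non-root node $v$ with parent $u$, I would maintain a two-dimensional boolean table $R_v[p_u, p_v]$, indexed by prefix lengths in the total orders $P\Project X_u$ and $P\Project X_v$ from item~\ref{item:tree_ind2}, whose intended meaning is that the entire subtree rooted at $v$ admits a partial linearization compatible with $P$ and the reads-from function of $\OPoset$, scheduling exactly the first $p_u$ events of $u$, the first $p_v$ events of $v$, and some prefixes of each descendant of $v$.

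When $v$ is a leaf, filling $R_v$ reduces to a two-thread realizability DP over $(X_u,X_v)$ under $P\Project (X_u\cup X_v)$; each of the $O(n_u\cdot n_v)$ cells can be filled in $O(d\log n)$ time by querying a balanced-tree index over the write events of each shared variable to verify that no triplet in $\ReadTriplets{\OPoset}$ is violated. When $v$ is internal with children $c_1,\dots,c_m$, item~\ref{item:tree_ind3} applied with $\ell=v$ implies that the subtrees rooted at distinct $c_i$'s are independent once $p_v$ is fixed, so the recurrence
\[
R_v[p_u,p_v] \;=\; \phi_{u,v}[p_u,p_v] \;\wedge\; \bigwedge_{i=1}^{m}\bigl(\exists\, p_{c_i}\colon R_{c_i}[p_v,p_{c_i}]\bigr)
\]
is correct, where $\phi_{u,v}$ is the leaf-style table for the edge $(u,v)$. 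The $m$ existentials are answered in amortized $O(1)$ time each by precomputing, for each $p_v$, whether some $p_{c_i}$ succeeds via a monotone sweep exploiting the fact that larger $p_v$ enables larger $p_{c_i}$.

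The claimed bound will be obtained as follows. Before running the DP I would compute the closure of $\OPoset$ (Remark~\ref{rem:closure}) to expose all derived orderings forced by the triplet rule; using balanced-tree indices per variable, iteratively propagating the rule across pairs of threads converges in $O(k)$ global sweeps, each costing $O(k\cdot d\cdot n^2\cdot\log n)$, for a total of $O(k^2\cdot d\cdot n^2\cdot\log n)$. The DP itself contributes only $O(n^2\cdot d\cdot\log n)$ on top of this, since the products $n_u\cdot n_v$ summed over the $O(k)$ edges of $\Tree$ are bounded by $n^2$. Once the top-level table is filled, realizability holds iff some cell for the root is true, which can be read off in $O(n)$ time.

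The principal obstacle is proving correctness of the gluing step: whenever $R_{c_i}[p_v,p_{c_i}]$ holds for every child $c_i$ relative to a common prefix $p_v$ of $v$, the individual realizations can be merged into a single linearization of the subtree at $v$. I would prove this by induction on the height of $v$, interleaving the sibling schedules around the totally ordered sequence of $X_v$ and using item~\ref{item:tree_ind3} to witness every cross-subtree $P$-edge by an event of $v$, which acts as a synchronization point and prevents the merged order from introducing cycles. Lock-feasibility survives the merge because the lock state of $v$ after $p_v$ of its events is uniquely determined. Inducting up to the root yields a realization of the full rf-poset whenever the top-level table has a true entry; the converse is immediate since any realization induces satisfying entries in every $R_v$.
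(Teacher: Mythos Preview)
Your route diverges from the paper's, and you miss the structural fact that makes the bound fall out almost for free. The paper shows that for tree-inducible rf-posets, realizability is \emph{equivalent} to the closure existing: one direction is \cref{rem:closure}, and for the other the paper proves directly (\cref{lem:tree_topologies}) that every closed, tree-inducible rf-poset is realizable, by traversing the tree top-down, ordering every $P$-unordered conflicting pair parent-before-child, and then linearizing arbitrarily. Since taking the closure preserves tree-inducibility (\cref{lem:closure_tree}), the whole problem reduces to computing the closure, whose $O(k^2\cdot d\cdot n^2\cdot\log n)$ bound is imported from~\cite{Pavlogiannis19}. There is no DP.

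You do compute the closure as preprocessing, but then layer a prefix-pair DP on top of it. Given \cref{lem:tree_topologies}, once the closure exists your DP can only answer ``yes'', so the DP is redundant. If instead you intend the DP to stand on its own without the closure step, the sketch has gaps. First, the semantics of $R_v[p_u,p_v]$ allows ``some prefixes of each descendant'' and your recurrence only asks $\exists\, p_{c_i}$; nothing forces every descendant to be fully scheduled, so a true top-level cell does not by itself certify a realization of the whole rf-poset. Second, the monotonicity claim (``larger $p_v$ enables larger $p_{c_i}$'') is unjustified and plausibly false: advancing $p_v$ may introduce a new conflicting write in $X_v$ that would sit between $\Observation(\Read)$ and $\Read$ for a read $\Read$ already committed in $c_i$'s partial linearization, so the feasible $p_{c_i}$ need not grow with $p_v$. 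Third, your justification of the closure bound itself (``converges in $O(k)$ global sweeps'') is asserted, not argued; the paper simply cites~\cite{Pavlogiannis19} for it.
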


The proof of \cref{lem:oposet_realizability_trees} is in two steps.
Recall the definition of closed rf-posets from \cref{subsec:po}.
First, we show that a tree-inducible, closed rf-poset is realizable (\cref{lem:tree_topologies}).
Second, we show that the closure of a tree-inducible rf-poset is also tree-inducible (\cref{lem:closure_tree}).

\smallskip
\begin{restatable}{lemma}{lemtreetopologies}\label{lem:tree_topologies}
Every closed, tree-inducible rf-poset is realizable.
\end{restatable}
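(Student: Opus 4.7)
The plan is to construct a linearization $\Trace^*$ of $P$ greedily: starting from the empty ideal, at each step add one more event that is both executable (all $P$-predecessors are already placed) and \emph{extends} the current ideal $Y$ in the sense defined in \cref{subsec:constant_threads}, i.e., no triplet $(\Write, \Read, \Event)$ has $\Write \in Y$ and $\Read \notin Y$ when $\Event$ is being added. If such an event always exists whenever $Y \subsetneq X$, then the resulting total order is a linearization of $P$ satisfying $\Observation_{\Trace^*}=\Observation$, witnessing realizability.

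To establish the key claim, I would first observe that any $P$-minimal event $\Event$ in $X\setminus Y$ is executable. If $\Event$ is a read or a lock-release, it cannot appear as the third component $\Write'$ of any triplet, because a triplet requires $\Confl{\Read}{\Write'}$ and two reads (or a read and a lock-release) never conflict. Thus $\Event$ trivially extends $Y$. So assume $\Event$ is a write (or lock-acquire) that fails safety: there is a triplet $(\Write,\Read,\Event)$ with $\Write\in Y$ and $\Read\notin Y$. Closedness rules out both $\Event<_P\Write$ (which would force $\Event\in Y$) and $\Write<_P\Event$ (which would force $\Read<_P\Event$, contradicting executability of $\Event$). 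Hence $\Write$ and $\Event$ are $P$-unordered, and by conditions~1 and~2 of tree inducibility their threads $\Proc{\Write}$ and $\Proc{\Event}$ are distinct and adjacent in $\Tree$. I would then reroute the greedy choice by trying to add $\Read$ instead of $\Event$ (reads are always safe), recursing on a $P$-predecessor of $\Read$ outside $Y$ when $\Read$ itself is not yet executable.

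The main obstacle is ruling out a circular deadlock in this rerouting: a configuration in which each candidate event is blocked by a frontier pair that ultimately depends on another candidate, producing a cycle of blocking across threads. Here condition~3 of tree inducibility is decisive, since any long-range $P$-ordering between $X_i$ and $X_j$ factors through an intermediate thread $X_\ell$ on the unique $i$--$j$ path in $\Tree$; combined with the acyclicity of $\Tree$, this lets me associate with any purported deadlock cycle a strictly decreasing potential (e.g., tree-distance of the involved events to a fixed root, or a lex rank over $(P\!\restriction\!X_i)$-positions along the path), yielding a contradiction. Making this descent argument watertight — specifically, defining a potential that strictly decreases at every rerouting step and grounding the recursion in an executable event — is where the proof demands the most care.
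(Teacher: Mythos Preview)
Your approach is genuinely different from the paper's. The paper does not build the witness greedily via the ideal graph; instead it defines in one shot a refinement $Q$ of $P$ as follows: root the tree $\Tree$, traverse it top-down, and for every node $i$ with child $j$ and every pair of conflicting events $\Event_1\in X_i$, $\Event_2\in X_j$ with $\Event_2\not<_P\Event_1$, add the ordering $\Event_1<_Q\Event_2$; then take the transitive closure. A separate lemma shows $Q$ is still a partial order (this is where condition~3 of tree inducibility is used, in a short argument by contradiction on the first edge that would close a cycle). Finally, one checks that \emph{any} linearization of $Q$ respects $\Observation$: for a triplet $(\Write,\Read,\Write')$, either closedness already orders $\Write'$ relative to $\Write$ or $\Read$, or $\Write,\Read$ lie in the parent set and $\Write'$ in the child set, whence $\Read<_Q\Write'$ by construction. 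This sidesteps any step-by-step deadlock analysis.

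Your greedy strategy is sound up to the point you flag, but the flagged point is the entire difficulty, and you have not actually discharged it. The two candidate potentials you name do not obviously decrease. When $\Event\in X_j$ is blocked by $(\Write,\Read,\Event)$ with $\Write\in X_i$, the read $\Read$ need not lie closer to the root than $\Event$ (it may sit in $X_i$, in $X_j$, or in a third set adjacent to both), and a $P$-minimal predecessor of $\Read$ in $X\setminus Y$ can lie anywhere below it via condition~3, so ``tree-distance to a fixed root'' can increase along a rerouting step. Likewise a lexicographic rank over the local total orders $P\Project X_i$ is not monotone without further structure, because rerouting hops between different $X_i$'s. In fact, the cleanest way to certify that a safe executable event always exists is to exhibit a $Q$-minimal event for the paper's $Q$: since every linearization of $Q$ realizes $\OPoset$, any $Q$-minimal event in $X\setminus Y$ extends $Y$ safely. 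But once you have $Q$, the greedy detour is redundant. If you want to stick with your framework, you need an explicit potential and a proof that each rerouting step strictly decreases it; as written, this is a promissory note rather than a proof.
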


Indeed, consider a tree-inducible, closed rf-poset $\OPoset=(X,P,\Observation)$.
The witness $\Trace$ realizing $\OPoset$ is obtained in two steps.
\begin{compactenum}
\item We construct a poset $(X,Q)$ with $Q\Refines P$ as follows. Initially, we let $Q$ be identical to $P$.
Let $\OPoset$ be tree-inducible to a tree $\Tree=([k], \setpred{ (i,j)}{\Confl{X_i}{X_j}})$.
We traverse $\Tree$ top-down, and for every node $i$ and child $j$ of $i$, for every two events $\Event_1\in X_i$ and  $\Event_2\in X_j$
with $\Confl{\Event_1}{\Event_2}$ and $\Event_2\not <_{P} \Event_1$, we order $\Event_1<_{Q}\Event_2$.
Finally, we transitively close $Q$.
\item We construct $\Trace$ by linearizing $(X,Q)$ arbitrarily.
\end{compactenum}

In \cref{subsec:proofs_tree_topologies_ub} we show that $\Trace$ is well-defined and realizes $\OPoset$.
The next lemma shows that tree-inducibility is preserved under taking closures (if the closure exists).

\smallskip
\begin{restatable}{lemma}{lemclosuretree}\label{lem:closure_tree}
Consider an rf-poset $\OPoset=(X,P,\Observation)$ and let $\OPosetQ=(X,Q,\Observation)$ be the closure of $\OPoset$.
If $\OPoset$ is tree-inducible then $\OPosetQ$ is also tree-inducible.
\end{restatable}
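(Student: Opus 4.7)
The plan is to verify the three conditions in the definition of tree-inducibility for $\OPosetQ$ using the same thread partition $\{X_i\}_{i\in[k]}$ and the same tree $\Tree$ that witness tree-inducibility of $\OPoset$. The first condition is immediate, since $X$ and the conflict relations $\Confl{X_i}{X_j}$, and hence $\Tree$, are identical in $\OPoset$ and $\OPosetQ$. The second is immediate from $Q\Refines P$: since $P\Project X_i$ is already a total order on $X_i$, so is $Q\Project X_i$. The work lies entirely in re-establishing the third, intermediate-event condition, with respect to the refined order $Q$.

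The central step is to view $Q$ as the transitive closure of $P\cup S$, where $S$ is the set of edges added by firings of rules (i) and (ii) during the iterative fixed-point computation that produces the closure of $\OPoset$. I will show the structural property that every edge $(u,v)\in S$ satisfies $\Confl{X_{\Proc{u}}}{X_{\Proc{v}}}$, and hence connects threads that are either equal or adjacent in $\Tree$. Indeed, rule (i) fires on a triplet $(\Write,\Read,\Write')$ and adds the edge $(\Write',\Write)$, where $\Write$ and $\Write'$ are writes (or both acquires) on the location of $\Read$, so $\Confl{\Write}{\Write'}$ holds; rule (ii) adds $(\Read,\Write')$, and $\Confl{\Read}{\Write'}$ is part of the very definition of a triplet. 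No induction on the iteration index is needed here, since the shape of a rule-added edge is determined by its triggering triplet alone, independently of the premise ordering.

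Given this, fix an internal node $\ell$ of $\Tree$, two components $C_1,C_2$ of $\Tree\setminus\{\ell\}$, and events $\Event_1\in X_i$, $\Event_2\in X_j$ with $i\in C_1$, $j\in C_2$ and $\Event_1<_Q\Event_2$. Take a chain $\Event_1 = u^0, u^1, \ldots, u^m = \Event_2$ of edges in $P\cup S$ witnessing $\Event_1<_Q\Event_2$. The thread sequence $\Proc{u^0},\ldots,\Proc{u^m}$ starts in $C_1$ and ends in $C_2$, so some index $t$ satisfies either (a) $\Proc{u^t}=\ell$, in which case $w:=u^t\in X_\ell$ already yields $\Event_1 <_Q w <_Q \Event_2$; or (b) $\Proc{u^t}$ and $\Proc{u^{t+1}}$ lie in distinct components of $\Tree\setminus\{\ell\}$. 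Alternative (b) cannot be realized by an edge in $S$, because adjacent threads in a tree remain in the same component when a third vertex is removed, so $(u^t,u^{t+1})$ must lie in $P$; the tree-inducibility of $\OPoset$ then supplies $w\in X_\ell$ with $u^t<_P w<_P u^{t+1}$, and $Q\Refines P$ gives $\Event_1<_Q w<_Q \Event_2$. Strictness holds because $\Event_1,\Event_2\notin X_\ell$. The main difficulty is precisely the structural observation on $S$ — it is the only place the lemma truly uses that the closure never telescopes orderings across non-adjacent threads — and once that is in hand the chain argument is a finite case analysis.
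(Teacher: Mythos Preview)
Your proof is correct and follows essentially the same approach as the paper: both reduce condition~\ref{item:tree_ind3} to the observation that $Q$ is the transitive closure of $P$ together with a relation $S$ (the paper's $Q'$) whose edges connect only conflicting events, hence threads that are equal or adjacent in $\Tree$. The paper states this in two lines and leaves the chain argument implicit; your version spells out the chain decomposition and the case split (a)/(b) explicitly, which is a faithful and more detailed rendering of the same idea.
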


Since, by \cref{rem:closure}, an rf-poset is realizable only if its closure exists and is realizable,
\cref{lem:tree_topologies} and \cref{lem:closure_tree} allow to decide the realizability of an rf-poset by computing its closure.
The complexity of the algorithm comes from the complexity of deciding whether the closure exists.
We refer to \cref{subsec:proofs_tree_topologies_ub} for the full proof of \cref{lem:oposet_realizability_trees}.

Recall that \cref{lem:num_candidate_ideals} provides an upper-bound on the number of trace ideals of $\Trace$ that we need to examine for realizability in order to decide data-race prediction.
We now proceed with point (ii) towards \cref{them:tree_topologies_ub}, i.e., we show that for tree communication topologies, a single ideal suffices.
Our proof is based on the notion of lock causal cones below.

\Paragraph{Lock causal cones.}
Consider a trace $\Trace$ that defines a tree communication topology $G_{\Trace}=(V_{\Trace},E_{\Trace})$.
Given an event $\Event\in \Events{\Trace}$ the \emph{lock causal cone} $\LPast_{\Trace}(\Event)$ of $\Event$ is the set $X$ defined by the following process. 
Consider that $G_{\Trace}$ is rooted in $\Proc{\Event}$.
\begin{compactenum}
\item\label{item:lpast1} Initially $X$ contains all predecessors of $\Event$ in $(\Events{\Trace},\TO)$.
We perform a top-down traversal of $G_{\Trace}$, and consider a current thread $\Process_1$ visited by the traversal.
\item\label{item:lpast2} 
Let $\Process_2$ be the parent of $\Process_1$ in the traversal tree, and $\Event_2$ be the unique maximal event in $(X\Project \Process_2, \TO)$,
i.e., $\Event_2$ is the last event of thread $\Process_2$ that appears in $X$.
We insert in $X$ all events $\Event_1\in \Events{\Trace}\Project \Process_1$ such that $\Event_1<_{\TOO}\Event_2$.
\item\label{item:lpast3} While there exists some lock-acquire event $\Acquire_1\in X\Project \Process_1$ and there exists another lock-acquire event $\Acquire_2\in \OpenAcquires(X)$ with $\Confl{\Acquire_1}{\Acquire_2}$ and $\Proc{\Acquire_2}=\Process_2$,
we insert in $X$ all predecessors of $\Release_1$ in $(\Events{\Trace},\TO)$ (including $\Release_1$),
where $\Release_1=\Match{\Trace}{\Acquire_1}$.
\end{compactenum}
Observe that, by construction, $\LPast_{\Trace}(\Event)$ is a lock-feasible trace ideal of $\Trace$.
In addition, for any two events $\Event_1, \Event_2\in \Events{\Trace}$,
the set $\LPast_{\Trace}(\Event_1)\cup \LPast_{\Trace}(\Event_2)$ is an ideal of $\Trace$,
though not necessarily lock-feasible.
Our motivation behind lock causal cones comes from the following lemma.
Intuitively, we can decide a predictable data-race by deciding the realizability of the ideal that is the union of the two lock-causal cones.

\smallskip
\begin{restatable}{lemma}{lemtreeideal}\label{lem:tree_ideal}
Let $X=\LPast_{\Trace}(\Event_1)\cup \LPast_{\Trace}(\Event_2)$.
We have that $(\Event_1, \Event_2)$ is a predictable data race of $\Trace$ iff
(i)~$\{ \Event_1, \Event_2 \}\cap X=\emptyset$, and
(ii) ~$X$ is a realizable trace ideal of $\Trace$.
\end{restatable}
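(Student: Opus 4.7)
The lemma has two implications, treated separately.

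\emph{($\Leftarrow$)} Suppose $X$ is a realizable trace ideal with $\{e_1,e_2\}\cap X=\emptyset$, and let $\Trace^{**}$ witness the realizability of $X$, so $\Events{\Trace^{**}}=X$ and $\Observation_{\Trace^{**}}=\Observation_\Trace\Project\ReadsReleases{X}$. By \cref{rem:witness_trace}, $\Trace^{**}$ is a correct reordering of $\Trace$. Step~\ref{item:lpast1} of the $\LPast$ construction places every strict $<_\TO$-predecessor of $e_i$ into $\LPast_\Trace(e_i)\subseteq X=\Events{\Trace^{**}}$, so $e_1$ and $e_2$ are enabled in $\Trace^{**}$, witnessing the predictable data race.

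\emph{($\Rightarrow$)} Let $\Trace^*$ be a correct reordering of $\Trace$ in which $e_1$ and $e_2$ are enabled. For such a reordering to exist, $e_1$ and $e_2$ must be $<_\TOO$-unrelated and neither is a lock event. Combined with the fact that read/write events placed into $\LPast_\Trace(e_j)$ by steps~\ref{item:lpast1} and \ref{item:lpast2} are strict $<_\TOO$-predecessors of $e_j$, while step~\ref{item:lpast3} only adds lock-release events and their $<_\TO$-predecessors, one obtains $e_i\notin\LPast_\Trace(e_j)$, so $\{e_1,e_2\}\cap X=\emptyset$. That $X$ is a trace ideal follows because each $\LPast_\Trace(e_i)$ is $<_\TOO$-downward closed by construction and the union of trace ideals is a trace ideal. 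It remains to show $X$ is realizable. My plan is to work with the canonical rf-poset $\OPoset_X=(X,Q,\Observation_\Trace\Project\ReadsReleases{X})$, show it is tree-inducible, and show that its closure is not $\bot$; realizability then follows from \cref{lem:closure_tree} and \cref{lem:tree_topologies}. Tree-inducibility is inherited from the tree communication topology of $\Trace$: the partition of $X$ by thread satisfies the three conditions because $\Trace$'s topology is a (sub)tree and $Q$ refines $\TOO\Project X$, so every inter-thread ordering factors through the tree path.

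The main obstacle is establishing that the closure of $\OPoset_X$ exists. The strategy is to use $\Trace^*$ as a consistency witness: for any triplet $(\Write,\Read,\Write')\in\ReadTriplets{\OPoset_X}$ whose three events all appear in $\Trace^*$, the order of $\Write'$ relative to $\Write$ and $\Read$ in $\Trace^*$ already satisfies the closure rules (either $\Write'<_{\Trace^*}\Write<_{\Trace^*}\Read$ or $\Read<_{\Trace^*}\Write'$), so no contradictory ordering is introduced. The delicate case is when a release $\Release_1\in X$ added by step~\ref{item:lpast3} is absent from $\Trace^*$: step~\ref{item:lpast3} triggers for some $\Acquire_1\in X\Project\Process_1$ and $\Acquire_2\in\OpenAcquires(X)\cap\Process_2$, where $\Process_2$ is the parent of $\Process_1$ in the tree rooted at $\Proc{e_i}$, and lock semantics in $\Trace^*$ may well put $\Release_2<_{\Trace^*}\Acquire_1$ rather than $\Release_1<_{\Trace^*}\Acquire_2$. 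Here tree topology is crucial: since $\Process_1$ and $\Process_2$ share only the lock $\Location{\Acquire_1}$ (and data flowing through $\Process_2$), one can swap the two critical sections on $\Location{\Acquire_1}$---executing $\Process_1$'s critical section first---without introducing any cyclic dependency with other threads, yielding an alternative witness $\Trace^{*'}$ that contains $\Release_1$. Iterating this adjustment produces a correct reordering containing every event of $X$, from which the required consistency of the closure, and hence realizability of $\OPoset_X$, follows.
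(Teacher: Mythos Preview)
Your $(\Leftarrow)$ direction is fine and matches the paper. The $(\Rightarrow)$ direction, however, has a genuine gap and misses the key structural fact that makes the paper's proof work.

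The paper proves $(\Rightarrow)$ via a crucial intermediate lemma (\cref{lem:lpast_events}): for any correct reordering $\Trace^*$ in which $e$ is enabled, one has $\LPast_\Trace(e)\subseteq\Events{\Trace^*}$, and moreover every event the construction adds in a child thread $\Process_1$ precedes in $\Trace^*$ the maximal event of the parent $\Process_2$. This is shown by induction over the steps of the $\LPast$ construction; in particular, when step~\ref{item:lpast3} fires, the open acquire $\Acquire_2$ on the parent side forces $\Release_1<_{\Trace^*}\Acquire_2$ by lock semantics, so $\Release_1$ and all its $<_\TO$-predecessors are already in $\Events{\Trace^*}$. Once you know $X\subseteq\Events{\Trace^*}$, the proof is short: $\{e_1,e_2\}\cap X=\emptyset$ is immediate (since $e_1,e_2\notin\Events{\Trace^*}$), and $\Trace^*\Project X$ directly realizes $X$, with the critical-section argument handled by \cref{lem:lpast_events} (item~2) together with \cref{lem:lpast_double}. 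No closure computation, no modification of $\Trace^*$.

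Your plan instead assumes the ``delicate case'' $\Release_1\in X\setminus\Events{\Trace^*}$ can occur and tries to repair it by swapping critical sections. This rests on a false premise: you assert that adjacent threads $\Process_1,\Process_2$ in the tree ``share only the lock $\Location{\Acquire_1}$'', but the tree topology only says the conflict graph is acyclic; an edge may reflect arbitrarily many shared locks and variables, so a swap can create new conflicts and the iteration need not converge. Separately, your argument that $e_i\notin\LPast_\Trace(e_j)$ is incomplete: step~\ref{item:lpast3} inserts all $<_\TO$-predecessors of $\Release_1$, and nothing you wrote rules out $e_i<_\TO\Release_1$ on thread $\Proc{e_i}$. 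Finally, you invoke the canonical rf-poset of $X$, but the union $\LPast_\Trace(e_1)\cup\LPast_\Trace(e_2)$ is not a priori lock-feasible (the paper notes this explicitly), so feasibility must be argued before the canonical rf-poset even exists. All three issues dissolve once you prove and use $\LPast_\Trace(e)\subseteq\Events{\Trace^*}$; that containment is the missing idea.
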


The $(\Leftarrow)$ direction of the lemma is straightforward.
We refer to \cref{subsec:proofs_tree_topologies_ub} for the $(\Rightarrow)$ direction.
Finally, \cref{them:tree_topologies_ub} follows immediately from \cref{lem:oposet_realizability_trees} and \cref{lem:tree_ideal}.

\begin{proof}[Proof of \cref{them:tree_topologies_ub}]
By \cref{lem:tree_ideal}, we have that $(\Event_1, \Event_2)$ is a predictable data race of $\Trace$ iff
$\{ \Event_1, \Event_2 \}\cap X= \emptyset $ and $X$ is realizable.
By \cref{lem:oposet_realizability_trees}, deciding the realizability of $X$ is done in $O(k^2\cdot d\cdot n^2\cdot \log n)$ time.
The desired result follows.
\end{proof}


\subsection{A Lower Bound for Two Threads}\label{subsec:tree_topologies_lb}

In this section we prove a conditional quadratic lower bound for dynamic data-race prediction for two threads.
Our proof is via a reduction from the Orthogonal Vectors problem.
To make it conceptually simpler, we present our reduction in two steps.
First, we show a fine-grained reduction from Orthogonal Vectors to the realizability of an rf-poset with $2$ threads and $7$ variables.
Afterwards, we show how the realizability problem for the rf-posets of the first step can be reduced to the decision problem of dynamic data race prediction with $2$ threads, $9$ variables and $1$ lock.

\Paragraph{The Orthogonal Vectors problem ($\OrthVec$).}
An instance of Orthogonal Vectors consists of two sets $A,B$, where each set contains $n/2$ binary vectors in $D$ dimensions.
The task is to determine whether there exists a pair of vectors $(a,b)\in A\times B$ that is orthogonal,
i.e., for all $i\in[D]$ we have $a[i]\cdot b[i]=0$.
There exist algorithms that solve the problem in $O(n^2\cdot D)$ and $O(2^{D}\cdot n)$ time, simply by computing the inner product of each pair $(a,b)\in A\times B$ and following a classic Four-Russians technique, respectively.
It is conjectured that there is no truly sub-quadratic algorithm for $\OrthVec$~\cite{Bringmann19}.

\smallskip
\begin{conjecture}[Orthogonal Vectors]\label{conj:orth_vec}
There is no algorithm for $\OrthVec$ that operates in $O(n^{2-\epsilon}\cdot D^{O(1)})$ time, for any $\epsilon>0$.
\end{conjecture}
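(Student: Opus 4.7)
The final statement is a widely-believed hypothesis in fine-grained complexity, not a theorem: it is cited by the authors as a working assumption from which the conditional lower bound in \cref{them:tree_topologies_lb} is derived. No unconditional proof is known, so the most one can realistically propose is a \emph{conditional} derivation from a stronger, better-studied hypothesis. The natural target is the Strong Exponential Time Hypothesis (SETH), and the plan is to sketch the classical reduction (essentially due to Williams) showing $\text{SETH}\Rightarrow \text{OV conjecture}$.

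\textbf{Reduction from $k$-SAT to $\OrthVec$.} Given a $k$-CNF formula $\varphi$ on $N$ variables with $m=\mathrm{poly}(N)$ clauses, I would split the variables into two halves of size $N/2$. For each partial assignment $\alpha$ to the first half I construct a vector $a_\alpha\in\{0,1\}^m$ by setting $a_\alpha[C]=0$ iff $\alpha$ already satisfies clause $C$, and $a_\alpha[C]=1$ otherwise; symmetrically I construct $b_\beta$ for each partial assignment $\beta$ of the second half. Let $A=\{a_\alpha\}$ and $B=\{b_\beta\}$, so $|A|=|B|=2^{N/2}$. The key observation is that $\alpha\cup\beta$ satisfies $\varphi$ iff every clause is covered by at least one half, iff for every coordinate $C$ not both $a_\alpha[C]=1$ and $b_\beta[C]=1$, iff $a_\alpha\cdot b_\beta=0$. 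Thus $\varphi$ is satisfiable iff the $\OrthVec$ instance $(A,B)$ has an orthogonal pair, with $n=2\cdot 2^{N/2}$ and $D=m$.

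\textbf{Transferring the lower bound.} Assume for contradiction that $\OrthVec$ admits an $O(n^{2-\epsilon}\cdot D^{O(1)})$ algorithm for some $\epsilon>0$. Running it on the instance above decides $k$-SAT in time
\[
O\bigl((2^{N/2})^{2-\epsilon}\cdot m^{O(1)}\bigr) \;=\; O\bigl(2^{(1-\epsilon/2)\,N}\cdot \mathrm{poly}(N)\bigr),
\]
\emph{uniformly in $k$}, contradicting SETH (which asserts that for every $\delta>0$ there exists $k$ such that $k$-SAT has no $O(2^{(1-\delta)N})$-time algorithm). Hence, conditional on SETH, no such algorithm for $\OrthVec$ can exist.

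\textbf{Main obstacle.} The principal obstacle is of course fundamental: neither SETH nor the OV conjecture is currently provable, as an unconditional lower bound of the form $n^{2-o(1)}$ against a natural polynomial-time problem would require breakthroughs in circuit complexity far beyond present techniques. Consequently, a faithful ``proof proposal'' can at best produce the conditional implication $\text{SETH}\Rightarrow\text{OV conjecture}$ sketched above, together with the observation that decades of failed attempts to refute it, and the rich web of tight fine-grained reductions it supports, constitute the principal evidence for its truth.
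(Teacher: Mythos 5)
You correctly recognize that the statement is a conjecture, not a theorem, and that the paper does not (and cannot) prove it; the paper simply cites \cite{Williams05} for the fact that SETH implies the OV conjecture. Your sketch of Williams' split-and-encode reduction from $k$-SAT is the standard and correct derivation of that implication, so your proposal matches exactly what the paper alludes to.
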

It is also known that SETH implies the $\OrthVec$ conjecture~\cite{Williams05}.
We first relate $\OrthVec$ with rf-poset realizability.

\smallskip
\begin{restatable}{lemma}{lemclosureov}\label{lem:closure_ov}
Rf-poset realizability for an rf-poset with $2$ threads and $7$ variables has no $O(n^{2-\epsilon})$-time algorithm for any $\epsilon >0$, under the Orthogonal Vectors conjecture.
\end{restatable}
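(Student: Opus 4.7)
The plan is to give a fine-grained reduction from $\OrthVec$ to rf-poset realizability with $2$ threads. Given an $\OrthVec$ instance consisting of sets $A=\{a^1,\dots,a^{n/2}\}$ and $B=\{b^1,\dots,b^{n/2}\}$ of binary vectors in $D$ dimensions, I will build an rf-poset $\OPoset_{A,B}=(X,P,\Observation)$ of size $O(n\cdot D)$, over $2$ threads and $7$ variables, such that $\OPoset_{A,B}$ is realizable iff there exist $a^i\in A$ and $b^j\in B$ with $a^i\cdot b^j=0$. Granted such a reduction, a hypothetical $O(m^{2-\epsilon})$ realizability algorithm on rf-posets of size $m$ would yield an $O((nD)^{2-\epsilon})$ algorithm for $\OrthVec$, contradicting \cref{conj:orth_vec}.

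The construction follows an \emph{alignment} strategy. Thread $1$ will be built as a concatenation $G_A(a^1)\Concat G_A(a^2)\Concat \dots \Concat G_A(a^{n/2})$ of per-vector blocks (padded with boundary events), and thread $2$ as the analogous concatenation of blocks $G_B(b^j)$. A pair of \emph{anchor} reads-from edges between a distinguished write/read in each thread will force any realization to interleave the two sequences so that exactly one block $G_B(b^j)$ is slotted into the time window of a unique block $G_A(a^i)$; hence each realization canonically selects a candidate pair $(a^i,b^j)$. Within each block, for every dimension $d$ I will include events over a dedicated pair of dimension variables whose shape depends on the bit values $a^i[d]$ and $b^j[d]$. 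The reads-from function will be arranged so that whenever $a^i[d]=b^j[d]=1$, aligning $G_A(a^i)$ with $G_B(b^j)$ would insert a conflicting write strictly between some $\Observation$-pair $(\Write,\Read)$ in the interleaving, contradicting the reads-from definition and ruling that alignment out. Conversely, for an orthogonal pair no dimension produces such an obstruction, so the alignment can be lifted to a valid linearization block-by-block. To bring the variable count down to a constant (7), the same handful of dimension and control variables will be recycled across blocks, relying on the anchor-induced window to ensure that only the selected block's writes/reads are effectively constrained by $\Observation$.

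Correctness will be established in two directions. For the forward direction, starting from an orthogonal pair $(a^i,b^j)$, I will exhibit an explicit interleaving of the two threads: thread $2$'s blocks before $G_B(b^j)$ are fully sequenced before the window of $G_A(a^i)$ in thread $1$, the block $G_B(b^j)$ is embedded inside that window interleaving each dimension event on the side where the corresponding bit is $0$, and the remaining blocks are placed afterwards. Orthogonality ensures that no dimension forces a conflicting write into a forbidden interval, and hence the reads-from function of the resulting linearization coincides with $\Observation$. For the reverse direction, from any realizing linearization $\Trace^*$ I will read off the indices $(i,j)$ from the positions of the anchor events, and argue that $a^i\cdot b^j=0$: any coordinate $d$ with $a^i[d]=b^j[d]=1$ would force a conflicting write to appear between the corresponding $\Observation$-pair of dimension $d$, contradicting $\Observation_{\Trace^*}=\Observation$.

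The main technical obstacle will be engineering the dimension gadgets so that the per-dimension conflict check is simultaneously (i)~triggered \emph{only} between the aligned blocks selected by the anchors, and (ii)~implementable with a constant number ($7$) of shared variables. The delicate point is that reusing dimension variables across all blocks threatens to create spurious reads-from obligations between non-aligned blocks; these must be neutralized by the ordering imposed by the anchor edges and by placing the reads-from pairs so that any write from a non-aligned block lies outside the relevant interval in every realization that respects the anchors. Once these gadgets are correctly set up, the fine-grained blow-up is $|X|=\Theta(nD)$ and the lower bound $\Omega(n^{2-\epsilon})$ for rf-poset realizability over $2$ threads and $7$ variables follows immediately from \cref{conj:orth_vec}.
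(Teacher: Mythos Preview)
Your high-level plan (a fine-grained reduction from $\OrthVec$ producing an rf-poset of size $O(n\cdot D)$ over $2$ threads and a constant number of variables) is the right shape, but the concrete encoding you sketch is quite different from the paper's and, as stated, has a real gap.

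The paper does \emph{not} use an alignment/selection mechanism where a realization ``picks'' a pair $(a^i,b^j)$. Instead it exploits the tree-topology result (\cref{lem:tree_topologies}) that for $2$ threads, realizability is equivalent to the existence of the closure. The construction plants two initial cross-edges and designs the $7$ variables so that closure propagation \emph{simulates a sequential search}: starting at $(a_1,b_1)$, each non-orthogonality certificate $a_j[i]\cdot b_l[i]=1$ forces (via closure) the next ordering that advances to coordinate $i{+}1$, to the next $b$-vector, or to the next $a$-vector. If every pair is non-orthogonal, the propagated orderings eventually close a cycle with the second planted cross-edge (closure fails, hence not realizable); if some pair is orthogonal, propagation halts and the closure exists (hence realizable). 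The $7$ variables have dedicated roles in this propagation machinery ($x_1,x_2$ test a coordinate, $x_3,x_6$ move to the next coordinate and carry a failure back down, $x_4,x_5,x_7$ advance to the next $b$ and next $a$).

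In your plan the crucial step is unsubstantiated: you assert that ``a pair of anchor reads-from edges'' forces every realization to slot exactly one $G_B(b^j)$ into exactly one $G_A(a^i)$. Two rf-edges give you only two cross-thread orderings; they cannot by themselves enforce the rigid one-to-one windowing you describe, nor do they prevent realizations in which several $B$-blocks overlap several $A$-blocks (in which case your per-dimension gadget would demand orthogonality of \emph{all} overlapping pairs, not of a single pair). The second difficulty you flag---reusing $7$ variables across all blocks without creating spurious reads-from obstructions in non-aligned blocks---is precisely where the work lies, and ``neutralized by the anchor edges'' is not an argument. Until you specify the anchors and the dimension gadget concretely and prove both directions, this remains a plan rather than a proof; the paper's closure-based encoding sidesteps exactly these interference issues by never asking a realization to ``choose'' a pair at all.
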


\Paragraph{Reduction from $\OrthVec$ to rf-poset realizability.}
For a fine-grained reduction from $\OrthVec$ to rf-poset realizability, consider an $\OrthVec$ instance $(A,B)$,
where $A=(a_j)_{1\leq j \leq n/2}$, $B=(b_l)_{1\leq l \leq n/2}$, and each $a_j, b_l\in \{0,1 \}^D$.
We will construct an rf-poset $\OPoset=(X,P, \Observation)$ 
with $2$ threads and $7$ variables such that the closure of $\OPoset$ exists 
iff there exists a pair of orthogonal vectors $(a,b)\in A\times B$.
Since $2$ threads define a tree-inducible rf-poset, \cref{rem:closure} and \cref{lem:tree_topologies} imply that $\OPoset$ is realizable iff $\OrthVec$ has a positive answer.
The set $X$ consists of two disjoint sets $X_A, X_B$, so that each is totally ordered in $P$.
For ease of presentation, we denote by $\SeqTrace_A, \SeqTrace_B$ the linear orders $(X_A, P \Project X_A)$ and $(X_B, P\Project X_B)$, respectively.
To develop some insight, we start with a high-level view of the construction, and then proceed with the details.

\smallskip\noindent{\em Overview of the construction.}
The linear orders $\SeqTrace_A$ and $\SeqTrace_B$ encode the vectors of $A$ and $B$, respectively.
Each of $\SeqTrace_A$ and $\SeqTrace_B$ consists of $n/2$ segments, so that the $i$-th (resp. $(n/2-i + 1)$-th) segment of $\SeqTrace_A$ (resp. $\SeqTrace_B$) encodes the contents of the $i$-th vector of $A$ (resp., $B$).
The two total orders are constructed with a closure computation in mind, which inserts event orderings in $P$ one-by-one.
In high-level, an ordering $\Event_1< \Event_2$ encodes the test of whether the bits in a specific coordinate $i$ of two vectors $a_j\in A$ and $b_l\in B$ have product $0$.
If yes, and moreover, $i<D$, then the closure conditions enforce a new ordering $\Event'_1< \Event'_2$, which encodes the test of the bits in coordinate $i+1$.
Otherwise $i=D$, and the closure has been computed and an orthogonal pair has been found.
On the other hand, if the bits in coordinate $i$ have product $1$, the two current vectors are not orthogonal, and the closure conditions enforce a new ordering $\Event''_1< \Event''_2$, which encodes the test of the first coordinate of the next pair of vectors.
The above is achieved using $7$ variables $\{x_i\}_{i\in [7]}$.

\smallskip\noindent{\em Formal construction.}
We now present the formal details of the construction (illustrated in \cref{fig:ov}).
The construction creates various events which have the form $\Event^{a_j}$ 
and $\Event^{b_l}$ when they are used at the vector level,
and have the form $\Event_i^{a_j}$ and $\Event_i^{b_l}$, 
where $i\in [D]$, when they are used at the coordinate level.
As a general rule, for each $j,l\in [n/2-1]$, we have 
$\Event^{a_j}<_{\SeqTrace_A}\Event^{a_{j+1}}$ and 
$\Event^{b_{l+1}}<_{\SeqTrace_B}\Event^{b_l}$,
both for events at the vector and at the coordinate level.
At the coordinate level, we also have $\Event^{a_j}_{i+1}<_{\SeqTrace_A}\Event^{a_{j}}_{i}$ and $\Event^{b_l}_{i+1}<_{\SeqTrace_B}\Event^{b_{l}}_{i}$.
For succinctness, we often write $\Event_1,\Event_2<\Event_3$ to denote $\Event_1<\Event_3$ and $\Event_2<\Event_3$.
We next describe the events and orderings between them.
The partial order $P$ is the transitive closure of these orderings.

\begin{figure}
\newcommand{\xdisposition}{0}
\newcommand{\ydisposition}{0}
\newcommand{\xtstep}{0.75}
\newcommand{\ytstep}{0.5}
\newcommand{\xstep}{2.1}
\newcommand{\ystep}{-0.5}
\newcommand{\xtscale}{0.8}
\newcommand{\ybias}{0.1}

\def \mycolone {blue}
\def \mycoltwo {green!70}
\def \mycolthree {yellow}
\def \mycolfour {red}
\def \mycolfive {orange}
\def \mycolsix {purple}
\def \mycolseven {magenta}

\def \mycolone {black}
\def \mycoltwo {black}
\def \mycolthree {black}
\def \mycolfour {black}
\def \mycolfive {black}
\def \mycolsix {black}
\def \mycolseven {black}

\def \numevents{26.3}

\newcommand{\eventA}[4]{
\node[event, draw=#4, fill=white] (A#1) at (0*\xstep, #1*\ystep) {\footnotesize $#2(x_{#3})$};
}
\newcommand{\eventB}[4]{
\node[event, draw=#4, fill=white] (B#1) at (1*\xstep, #1*\ystep) {\footnotesize $#2(x_{#3})$};
}
\scalebox{0.95}{
\begin{tikzpicture}[thick,
pre/.style={<-,shorten >= 2pt, shorten <=2pt, very thick},
post/.style={->,shorten >= 2pt, shorten <=2pt,  very thick},
seqtrace/.style={->, line width=2},
und/.style={very thick, draw=gray},
event/.style={rectangle, minimum height=4.4mm, minimum width=10mm,  line width=1pt, inner sep=0.1,},
virt/.style={circle,draw=black!50,fill=black!20, opacity=0}]

\node[] (SA1) at (0*\xstep,+\ybias) {$\SeqTrace_A$};
\node[] (SA2) at (0,\numevents * \ystep) {};
\node[] (SB1) at (1*\xstep,+\ybias) {$\SeqTrace_B$};
\node[] (SB2) at (1*\xstep,\numevents * \ystep) {};

\draw[seqtrace] (SA1) to (SA2);
\draw[seqtrace] (SB1) to (SB2);

%

\node[] at (-0.6*\xstep, 7*\ystep){
$a_1=
\begin{bmatrix}
0 \\
1
\end{bmatrix}
$
};

\node[] at (-0.6*\xstep, 19*\ystep){
$a_2=
\begin{bmatrix}
1 \\
0
\end{bmatrix}
$
};

\node[] at (1.6*\xstep, 7*\ystep){
$b_2=
\begin{bmatrix}
0 \\
1
\end{bmatrix}
$
};
\node[] at (1.6*\xstep+2, 7*\ystep -0.15) {$\leftarrow$ Coordinate 1};
\node[] at (1.6*\xstep+2, 7*\ystep +0.25) {$\leftarrow$ Coordinate 2};

\node[] at (1.6*\xstep, 19*\ystep){
$b_1=
\begin{bmatrix}
1 \\
1
\end{bmatrix}
$
};

\eventA{1}{\Write_2^{a_1}}{1}{\mycolone}
\eventA{2}{\Write_2^{a_1}}{2}{\mycoltwo}
\eventA{3}{\Write_2^{a_1}}{3}{\mycolthree}
\eventA{4}{\Write_2^{a_1}}{6}{\mycolsix}
\eventA{5}{\Read_2^{a_1}}{2}{\mycoltwo}
\eventA{6}{\Write_1^{a_1}}{2}{\mycoltwo}
\eventA{7}{\Write_1^{a_1}}{1}{\mycolone}
\eventA{8}{\Read_1^{a_1}}{6}{\mycolsix}
\eventA{9}{\Write^{a_1}}{4}{\mycolfour}
\eventA{10}{\Write^{a_1}}{5}{\mycolfive}
\eventA{11}{\Read_1^{a_1}}{2}{\mycoltwo}
\eventA{12}{\Write^{a_1}}{7}{\mycolseven}
\eventA{13}{\Read^{a_1}}{5}{\mycolfive}

\eventA{15}{\Write_2^{a_2}}{2}{\mycoltwo}
\eventA{16}{\Write_2^{a_2}}{1}{\mycolone}
\eventA{17}{\Write_2^{a_2}}{3}{\mycolthree}
\eventA{18}{\Write_2^{a_2}}{6}{\mycolsix}
\eventA{19}{\Read_2^{a_2}}{2}{\mycoltwo}
\eventA{20}{\Write_1^{a_2}}{1}{\mycolone}
\eventA{21}{\Write_1^{a_2}}{2}{\mycoltwo}
\eventA{22}{\Read_1^{a_2}}{6}{\mycolsix}
\eventA{23}{\Read^{a_2}}{7}{\mycolseven}
\eventA{24}{\Write^{a_2}}{4}{\mycolfour}
\eventA{25}{\Read_1^{a_2}}{2}{\mycoltwo}

\eventB{1}{\Write_2^{b_2}}{2}{\mycoltwo}
\eventB{2}{\Write_2^{b_2}}{1}{\mycolone}
\eventB{3}{\Write_2^{b_2}}{3}{\mycolthree}
\eventB{4}{\Read_2^{b_2}}{1}{\mycolone}
\eventB{5}{\Write_2^{b_2}}{6}{\mycolsix}
\eventB{6}{\Write_1^{b_2}}{1}{\mycolone}
\eventB{7}{\Write_1^{b_2}}{2}{\mycoltwo}
\eventB{8}{\Read_1^{b_2}}{3}{\mycolthree}
\eventB{9}{\Write^{b_2}}{4}{\mycolfour}
\eventB{10}{\Read_1^{b_2}}{1}{\mycolone}
\eventB{11}{\Write^{b_2}}{5}{\mycolfive}

\eventB{15}{\Write_2^{b_1}}{1}{\mycolone}
\eventB{16}{\Write_2^{b_1}}{2}{\mycoltwo}
\eventB{17}{\Write_2^{b_1}}{3}{\mycolthree}
\eventB{18}{\Read_2^{b_1}}{1}{\mycolone}
\eventB{19}{\Write_2^{b_1}}{6}{\mycolsix}
\eventB{20}{\Write_1^{b_1}}{1}{\mycolone}
\eventB{21}{\Write_1^{b_1}}{2}{\mycoltwo}
\eventB{22}{\Read_1^{b_1}}{3}{\mycolthree}
\eventB{23}{\Write^{b_1}}{7}{\mycolseven}
\eventB{24}{\Read^{b_1}}{4}{\mycolfour}
\eventB{25}{\Read_1^{b_1}}{1}{\mycolone}

\draw[post, out=0, in=180, looseness=0.15] (A7) to ($ (A7) + (+0.7,0) $)  to ($ (B25) + (-0.7,0) $) to (B25);
\draw[post, out=180, in=0, looseness=0.15] (B7) to ($ (B7) + (-0.7,0) $) to ($ (A25) + (+0.7,0) $) to (A25);

\end{tikzpicture}
}
\caption{
Illustration of the reduction of an $\OrthVec$ instance $(A=\{a_1, a_2\},B=\{b_1, b_2\})$ to the Closure problem of an $\OPoset$.
}
\label{fig:ov}
\end{figure}

\SubParagraph{Events on $x_1$ and $x_2$.}
For every vector $a_j\in A$ and coordinate $i\in [D]$, we create three events
$\Write^{a_j}_i(x_1)$, $\Write^{a_j}_i(x_2)$ and $\Read^{a_j}_i(x_2)$.
We make $\Observation(\Read^{a_j}_i(x_2)) = \Write^{a_j}_i(x_2)$, and order 
\begin{align*}
\Write^{a_j}_i(x_1), \Write^{a_j}_i(x_2) <_{\SeqTrace_A} \Read^{a_j}_i(x_2)\ .
\numberthis\label{eq:x1x2_1}
\end{align*}
For every vector $b_l\in B$ and coordinate $i\in [D]$, we create three events
$\Write^{b_l}_i(x_1)$, $\Read^{b_l}_i(x_1)$ and $\Write^{b_l}_i(x_2)$.
We make $\Observation(\Read^{b_l}_i(x_1))=\Write^{b_l}_i(x_1)$, and order
\begin{align*}
\Write^{b_l}_i(x_1), \Write^{b_l}_i(x_2) <_{\SeqTrace_B}\Read^{b_l}_i(x_1)\ .
\numberthis\label{eq:x1x2_2}
\end{align*}
In addition, we order
\begin{align*}
\Write^{a_j}_i(x_2)<_{\SeqTrace_A}\Write^{a_j}_i(x_1) \quad &\text{iff}\quad a_j[i]=1
\qquad\text{and}\qquad\\
\Write^{b_l}_i(x_1)<_{\SeqTrace_B}\Write^{b_l}_i(x_2) \quad &\text{iff}\quad b_l[i]=1\ .
\numberthis\label{eq:x1x2_3}
\end{align*}
Observe that if 
$a_j[i]\cdot b_l[i]=1$, 
and if we order
$\Write^{a_j}_i(x_1)< \Write^{b_l}_i(x_1)$ 
then transitively
$\Write^{a_j}_i(x_2)< \Write^{b_l}_i(x_2)$ 
and hence by closure
$\Read^{a_j}_i(x_2)<\Write^{b_l}_i(x_2)$.

\SubParagraph{Events on $x_3$.}
Let $i\in [D-1]$ be a coordinate.
For every vector $a_j\in A$, we create an event $\Write^{a_j}_{i+1}(x_3)$, and order
\begin{align*}
\Write^{a_j}_{i+1}(x_1), \Write^{a_j}_{i+1}(x_2) <_{\SeqTrace_A} \Write^{a_j}_{i+1}(x_3) < _{\SeqTrace_A} \Write^{a_j}_{i}(x_1), \Write^{a_j}_{i}(x_2)\ .
\numberthis\label{eq:x3_1}
\end{align*}
For every vector $b_l\in B$, we create two events $\Write^{b_l}_{i+1}(x_3)$ and $\Read^{b_l}_{i}(x_3)$, and
make $\Observation(\Read^{b_l}_{i}(x_3)) = \Write^{b_l}_{i+1}(x_3)$.
We order
\begin{align*}
&\Write^{b_l}_{i+1}(x_1), \Write^{b_l}_{i+1}(x_2) <_{\SeqTrace_B} \Write^{b_l}_{i+1}(x_3)<_{\SeqTrace_B} \Read^{b_l}_{i+1}(x_1) 
\quad \text{and}\\
&\Write^{b_l}_{i}(x_1), \Write^{b_l}_{i}(x_2) <_{\SeqTrace_B} \Read^{b_l}_{i}(x_3)<_{\SeqTrace_B} \Read^{b_l}_{i}(x_1)\ .
\numberthis\label{eq:x3_2}
\end{align*}
Observe that if we order 
$\Write^{a_j}_i(x_1) < \Write^{b_l}_i(x_1)$ 
then we also have 
$\Write^{a_j}_{i+1}(x_3) < \Read^{b_l}_{i}(x_3)$, 
hence  by closure
$\Write^{a_j}_{i+1}(x_3) < \Write^{b_l}_{i+1}(x_3)$ 
and thus 
$\Write^{a_j}_{i+1}(x_1) <\Read^{b_l}_{i}(x_1)$.

\SubParagraph{Events on $x_6$.}
For every coordinate $i\in [D-1]$, we do as follows.
For every vector $a_j\in A$, we create two events $\Write^{a_j}_{i+1}(x_6)$ and $\Read^{a_j}_{i}(x_6)$,
and make $\Observation(\Read^{a_j}_{i}(x_6)) = \Write^{a_j}_{i+1}(x_6)$.
We order
\begin{align*}
&\Write^{a_j}_{i+1}(x_3) <_{\SeqTrace_A} \Write^{a_j}_{i+1}(x_6) <_{\SeqTrace_A} \Read^{a_j}_{i+1}(x_2)
\quad \text{and}\\
&\Write^{a_j}_{i}(x_1), \Write^{a_j}_{i}(x_2) <_{\SeqTrace_A} \Read^{a_j}_{i}(x_6)  <_{\SeqTrace_A} \Read^{a_j}_{i}(x_2)\ .
\numberthis\label{eq:x6_1}
\end{align*}
For every vector $b_l\in B$, we create one event $\Write^{b_l}_{i+1}(x_6)$, and order
\begin{align*}
\Read^{b_l}_{i+1}(x_1)<_{\SeqTrace_B}\Write^{b_l}_{i+1}(x_6) <_{\SeqTrace_B} \Write^{b_l}_{i}(x_1), \Write^{b_l}_{i}(x_2)\ .
\numberthis\label{eq:x6_2}
\end{align*}
Observe that if we order
$\Read^{a_j}_{i+1}(x_2)<\Write^{b_l}_{i+1}(x_2)$ ,
since $\Write_{i+1}^{b_l}(x_2)<_{\SeqTrace_B}\Read_{i+1}^{b_l}(x_1)$,
we also have  
$\Write^{a_j}_{i+1}(x_6)< \Write^{b_l}_{i+1}(x_6)$,  
hence by closure
$\Read^{a_j}_{i}(x_6) < \Write^{b_l}_{i+1}(x_6)$
and thus
$\Write^{a_j}_{i}(x_2) < \Write^{b_l}_{i}(x_2)$.

\SubParagraph{Events on $x_4$.}
For every vector $a_j\in A$, we create one event $\Write^{a_j}(x_4)$, and order
\begin{align*}
\Read^{a_j}_1(x_6) <_{\SeqTrace_A} \Write^{a_j}(x_4) <_{\SeqTrace_A} \Read^{a_j}_1(x_2)\ .
\numberthis\label{eq:x4_1}
\end{align*}
For every vector $b_l\in B$, with $l\in [n/2-1]$, we create two events $\Write^{b_{l+1}}(x_4)$ and $\Read^{b_l}(x_4)$, and make $\Observation(\Read^{b_l}(x_4))=\Write^{b_{l+1}}(x_4)$.
We order
\begin{align*}
&\Read^{b_l}_1(x_3)<_{\SeqTrace_B} \Read^{b_l}(x_4) <_{\SeqTrace_B} \Read^{b_l}_1(x_1)
\quad\text{and}\\
&\Read^{b_{l+1}}_1(x_3)<_{\SeqTrace_B} \Write^{b_{l+1}}(x_4) <_{\SeqTrace_B} \Read^{b_{l+1}}_1(x_1)\ .
\numberthis\label{eq:x4_2}
\end{align*}
Observe that if we order
$\Read^{a_j}_1(x_2)<\Write^{b_l}_1(x_2)$ 
then we also have
$\Write^{a_j}(x_4)<\Read^{b_l}(x_4)$ (since $\Write^{b_l}_1(x_2)<_{\SeqTrace_B} \Read^{b_l}_1(x_3)$ by \cref{eq:x3_2})
and thus by closure
$\Write^{a_j}(x_4)<\Write^{b_{l+1}}(x_4)$.

\SubParagraph{Events on $x_5$ and $x_7$.}
For every vector $a_j\in A$, with $j\in [n/2-1]$, we create two events $\Write^{a_j}(x_5)$ and $\Read^{a_j}(x_5)$, and make $\Observation(\Read^{a_j}(x_5))=\Write^{a_j}(x_5)$.
We also create two events $\Write^{a_j}(x_7)$ and $\Read^{a_{j+1}}(x_7)$, and make $\Observation(\Read^{a_{j+1}}(x_7))=\Write^{a_{j}}(x_7)$.
We order
\begin{align*}
&\Write^{a_j}(x_4) <_{\SeqTrace_A} \Write^{a_j}(x_5) <_{\SeqTrace_A} \Read^{a_j}_1(x_2) <_{\SeqTrace_A} \Write^{a_j}(x_7)<_{\SeqTrace_A} \Read^{a_j}(x_5) \text{ and}\\
&\Read^{a_{j+1}}_1(x_6) <_{\SeqTrace_A} \Read^{a_{j+1}}(x_7) <_{\SeqTrace_A} \Write^{a_{j+1}}(x_4)\ .
\numberthis\label{eq:x5x7_1}
\end{align*}
We also create two events $\Write^{b_{n/2}}(x_5)$ and $\Write^{b_{1}}(x_7)$, and order
\begin{align*}
&\Read_1^{b_{n/2}}(x_1) <_{\SeqTrace_B} \Write^{b_{n/2}}(x_5) <_{\SeqTrace_B} \Write_D^{b_{n/2}-1}(x_1), \Write_D^{b_{n/2}-1}(x_2) \quad \text{and}\\
&\Read^{b_1}_1(x_3) <_{\SeqTrace_B} \Write^{b_{1}}(x_7) <_{\SeqTrace_B} \Read^{b_1}(x_4)\ .
\numberthis\label{eq:x5x7_2}
\end{align*}
Observe that if we order
$\Read^{a_j}_1(x_2) < \Write^{b_{n/2}}_1(x_2)$
then we also have
$\Write^{a_j}(x_5)< \Write^{b_{n/2}}(x_5)$ (since  $\Write_1^{b_{n/2}}(x_2)<_{\SeqTrace_B}\Read_1^{b_{n/2}}(x_1)$ by a previous item)
and thus by closure
$\Read^{a_j}(x_5)< \Write^{b_{n/2}}(x_5)$.
But then also
$\Write^{a_j}(x_7)<\Write^{b_{1}}(x_7)$ (since $\Write^{b_{n/2}}(x_5)<_{\SeqTrace_B} \Write^{b_{1}}(x_7)$)
and thus by closure
$\Read^{a_{j+1}}(x_7)<\Write^{b_{1}}(x_7)$.

\SubParagraph{Final orderings.}
Finally, we order
\begin{align*}
&\Write^{b_{n/2}}(x_5) <_{\SeqTrace_B} \Write_{D}^{b_{n/2-1}}(x_1), \Write_{D}^{b_{n/2-1}}(x_2) \ . \numberthis\label{eq:final_1a}
\end{align*}
For each $j\in[n/2-1]$, we order
\begin{align*}
&\Read^{a_j}(x_5) <_{\SeqTrace_A} \Write_{D}^{a_{j+1}}(x_1), \Write_{D}^{a_{j+1}}(x_2)\ .
\end{align*}
For each $j\in[n/2-2]$, we order
\begin{align*}
&\Read^{b_{j+1}}(x_1) <_{\SeqTrace_B} \Write_{D}^{b_{j}}(x_1), \Write_{D}^{b_{j}}(x_2)
\numberthis\label{eq:final_1c}\ .
\end{align*}
We make two orderings across $\SeqTrace_A$ and $\SeqTrace_B$, namely
\begin{align*}
\Write^{a_1}_1(x_1) <_{P} \Read^{b_1}_1(x_1)
\quad\text{and}\quad
\Write^{b_{n/2}}_1(x_2) <_{P} \Read^{a_{n/2}}_1(x_2) \ .
\numberthis\label{eq:final_2}
\end{align*}

\SubParagraph{Correctness.}
Observe that we have used $7$ variables, while $|X_A|+|X_B|=O(n\cdot  D)$, 
and the reduction can be easily computed in linear time.
We refer to \cref{subsec:proofs_tree_topologies_lb} for the full proofs of the correctness of the above construction.
This concludes \cref{lem:closure_ov}, as any algorithm for rf-poset realizability on the above instances that runs in $O((n\cdot D)^{2-\epsilon})$ time also solves $\OrthVec$ in $O(n^{2-\epsilon}\cdot D^{O(1)})$ time.
Although the full proof is rather technical, the correctness is conceptually straightforward.
We illustrate the key idea on the example of \cref{fig:ov}, where we perform closure operations by inserting new orderings in a partial order $<$.

By construction, we have $\Write_1^{a_1}(x_1)<\Read_1^{b_1}(x_1)$, which signifies testing the first coordinate of vectors $a_1$ and $b_1$.
Note that $a_1[1]\cdot b_1[1]=1$, for which our encoding guarantees that eventually $\Read_1^{a_1}(x_2)<\Write_1^{b_1}(x_2)$.
Indeed, since $\Write_1^{a_1}(x_1)<\Read_1^{b_1}(x_1)$, by closure we also have  $\Write_1^{a_1}(x_1)<\Write_1^{b_1}(x_1)$.
In turn, this leads to $\Write_1^{a_1}(x_2)<\Write_1^{b_1}(x_2)$, and by closure,
we also have $\Read_1^{a_1}(x_2)<\Write_1^{b_1}(x_2)$.
This leads to $\Write^{a_1}(x_4)< \Read^{b_1}(x_4)$, and by closure,
we have $\Write^{a_1}(x_4)< \Write^{b_2}(x_4)$.
This last ordering leads to $\Write_1^{a_1}(x_1)<\Read_1^{b_2}(x_1)$, which signifies testing the first coordinate of vectors $a_1$ and $b_2$,
i.e., moving with the next vector of $B$.

The process for $a_1$ and $b_2$ is similar to $a_1$ and $b_1$, as the two vectors are found not orthogonal already in the first coordinate.
As previously, we eventually arrive at $\Read_1^{a_1}(x_2)<\Write_1^{b_2}(x_2)$.
Note that this leads to $\Write^{a_1}(x_5)<\Write^{b_2}(x_5)$, and by closure, 
we have $\Read^{a_1}(x_5)<\Write^{b_2}(x_5)$.
In turn, this leads to $\Write^{a_1}(x_7)<\Write^{b_1}(x_7)$, and by closure,
we have $\Read^{a_2}(x_7)< \Write^{b_1}(x_7)$.
This last ordering leads to $\Write_1^{a_2}(x_1)<\Read_1^{b_1}(x_1)$, which signifies testing the first coordinate of vectors $a_2$ and $b_1$,
i.e., moving with the next vector of $A$ and the first vector of $B$.

The process for $a_2$ and $b_1$ is initially different than before, as $a_2[1]\cdot b_1[1]=0$, i.e., the test on the first coordinate does not deem $a_2$ and $b_1$ not orthogonal.
By closure, the ordering $\Write_1^{a_2}(x_1)<\Read_1^{b_1}(x_1)$ leads to
$\Write_1^{a_2}(x_1)<\Write_1^{b_1}(x_1)$.
This leads to $\Write_2^{a_2}(x_3)<\Read_1^{b_1}(x_3)$, and by closure,
we have $\Write_2^{a_2}(x_3)<\Write_2^{b_1}(x_3)$.
This leads to $\Write_2^{a_2}(x_1)<\Read_2^{b_1}(x_1)$, which signifies testing the second coordinate of vectors $a_2$ and $b_1$.
As $a_2[2]\cdot b_2[2]=1$, the two vectors are discovered as non-orthogonal, which is captured by an eventual ordering
$\Read_2^{a_2}(x_2)<\Write_2^{b_1}(x_2)$.
This ordering which witnesses non-orthogonality is propagated downwards to the first coordinate, i.e., $\Read_1^{a_2}(x_2)<\Write_1^{b_1}(x_2)$.
This propagation is made by events on variable $x_6$.
Indeed, first note that, as $\Read_2^{a_2}(x_2)<\Write_2^{b_1}(x_2)$, we also have
$\Write_2^{a_2}(x_6)<\Write_2^{b_1}(x_6)$, and by closure,
we have $\Read_1^{a_2}(x_6)<\Write_2^{b_1}(x_6)$.
This leads to $\Write_1^{a_2}(x_2)<\Write_1^{b_1}(x_2)$, and by closure,
$\Read_1^{a_2}(x_2)<\Write_1^{b_1}(x_2)$, which marks the two vectors as non-orthogonal.
This leads to $\Write^{a_2}(x_4)< \Read^{b_1}(x_4)$, and by closure,
we have $\Write^{a_2}(x_4)< \Write^{b_2}(x_4)$.
This last ordering leads to $\Write_1^{a_2}(x_1)<\Read_1^{b_2}(x_1)$, which signifies testing the first coordinate of vectors $a_2$ and $b_2$,
i.e., moving with the next vector of $B$.

The process for $a_2$ and $b_2$ is initially similar to the previous case, as $a_2[1]\cdot b_2[1]=0$.
However, because we also have $a_2[2]\cdot b_2[2]=0$, we will \emph{not} order $\Read_2^{a_2}(x_2)<\Write_2^{b_2}(x_2)$,
and the closure will terminate after ordering $\Write_2^{a_2}(x_1)<\Write_2^{b_2}(x_1)$.
Since no cyclical orderings were introduce, the closure of $\OPoset$ exists, and by \cref{lem:tree_topologies}, $\OPoset$ is realizable.
Finally, observe that \emph{if} we eventually had $\Read_1^{a_2}(x_2)<\Write_1^{b_2}(x_2)$ (signifying that $a_2$ and $b_2$ are not orthogonal, hence there is no orthogonal pair in $A\times B$), this would create a cycle with the ordering $\Write_1^{b_2}(x_2)<_{P}\Read_1^{a_2}(x_2)$, and by \cref{rem:closure}, $\OPoset$ would \emph{not} be realizable.

\Paragraph{Reduction to dynamic data-race prediction.}
Consider an instance of the rf-poset $\OPoset=(X,P,\Observation)$ constructed in the above reduction, and we construct a trace $\Trace$ and two events $\Event_1,\Event_2\in \Events{\Trace}$ such that $\OPoset$ is realizable iff $(\Event_1, \Event_2)$ is a predictable data race of $\Trace$.
The trace $\Trace$ consists of two threads $\Process_A,\Process_B$
and two local traces $\SeqTrace'_A$ and $\SeqTrace'_B$ such that $\SeqTrace'_A$ and $\SeqTrace'_B$ contain the events of $\Process_A$ and $\Process_B$, respectively.
Each of $\SeqTrace'_A$ and $\SeqTrace'_B$ is identical to $\SeqTrace_A$ and $\SeqTrace_B$ of $\OPoset$, respectively, with some additional events inserted in it.
In particular, besides the variables $x_i$, $i\in[7]$ that appear in the events of $X$, we introduce one variable $y$ and one lock $\ell$.
For the event set, we have
\begin{align*}
\Events{\Trace} = &X \cup \{ \Write(y), \Read(y) \} \cup\{ \Acquire_A(\ell), \Release_A(\ell) \} \cup\\
&\{ \Acquire_B(\ell), \Release_B(\ell) \} \cup \{ \Write(z), \Read(z) \}\ .
\end{align*}
The  local traces $\SeqTrace'_A$ and $\SeqTrace'_B$ are constructed as follows.
\begin{compactenum}
\item For $\SeqTrace'_A$, we insert an empty critical section $\Acquire_A(\ell), \Release_A(\ell)$ right after $\Write_1^{a_1}(x_1)$.
Additionally, we insert the read event $\Read(y)$ right before $\Read_1^{a_{n/2}}(x_2)$,
and the event $\Read(z)$ as the last event of $\SeqTrace'_A$.
\item For $\SeqTrace'_B$, we insert the write event $\Write(y)$ right after $\Write^{b_{n/2}}_1 (x_2)$.
Additionally, we insert $\Write(z)$ right after $\Read_1^{b_1}(x_1)$, and surround these two events with $\Acquire_B(\ell), \Release_B(\ell)$.
\end{compactenum}
Finally, we obtain $\Trace$ as $\Trace=\SeqTrace'_B \circ \SeqTrace'_A$, i.e., the two local traces are executed sequentially and there is no context switching.
The task is to decide whether $(\Write(z), \Read(z))$ is a predictable data race of $\Trace$.
We refer to \cref{subsec:proofs_tree_topologies_lb} for the correctness of the construction, which concludes \cref{them:tree_topologies_lb}.


\section{Witnesses in Small Distance}\label{sec:small_distance}

The results in the previous sections neglect information provided by the input trace $\Trace$ about constructing a correct reordering that witnesses the data race.
Indeed, our hardness results show that, in the worst case, the orderings in $\Trace$ provide no help.
However, in practice when a data race exists, a witness trace $\Trace^*$ can be constructed that is similar to $\Trace$.
In fact, virtually all practical techniques predict data races by constructing $\Trace^*$ to be very similar to $\Trace$ (e.g., \cite{Kini17,Roemer18,Flanagan09,Smaragdakis12,Pavlogiannis19}).

\Paragraph{The distance-bounded realizability problem of feasible trace ideals.}
Given a natural number $\ell$, a trace $\Trace$ and a feasible trace ideal $X$ of $\Trace$, the solution
to the \emph{$\ell$-distance-bounded realizability} problem is $\False$ if $X$ is not realizable,
$\True$ if there is a witness $\Trace^*$ that realizes $X$ such that
$\Distance(\Trace,\Trace^*) \leq \ell$, 
and can be any answer ($\True/\False$) if $X$ is realizable but any witness $\Trace^*$ 
that realizes $X$ is such that $\Distance(\Trace,\Trace^*) > \ell$.
We remark that this formulation is that of a \emph{promise problem}~\cite{promise_problem}.
We are interested in the case where $\ell=O(1)$.
There exists a straightforward algorithm that operates in $O(|X|^{2\cdot \ell})$ time.
The algorithm iterates over all possible subsets of pairs of conflicting write and lock-acquire events that have size at most $\ell$, and tries all possible combinations of conflicting-write reversals in that set.
\cref{them:small_distance} is based on the following lemma, which states that the problem can be solved much faster when $k$ is also constant.

\smallskip
\begin{restatable}{lemma}{lemreversalrealizability}\label{lem:reversal_realizability}
Consider a natural number $\ell$, a trace $\Trace$ over $n$ events and $k$ threads,
and a feasible trace ideal $X$ of $\Trace$.
The $\ell$-distance-bounded realizability problem for $X$ can be solved in $O(k^{\ell+O(1)}\cdot n)$ time.
\end{restatable}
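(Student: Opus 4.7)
The plan is to design a branch-and-bound search over linearizations of the canonical rf-poset $(X,Q,\Observation_{\Trace}\Project X)$, using $\Trace\Project X$ as a guide and branching only at steps forced to incur a new reversal. First, I compute $Q$ in $O(n)$ time from the definition in~\cref{subsec:po}. The search then maintains a prefix $\sigma$ of a candidate witness $\Trace^*$, a remaining reversal budget $b$ initialized to $\ell$, and at each step the set $E$ of events that are $Q$-minimal in $X\setminus\sigma$ and whose immediate execution is compatible with the reads-from function. Since every thread contributes at most one $Q$-minimal event, $|E|\leq k$.

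At each step, for every $e\in E$ I compute the \emph{cost} $c(e)$, defined as the number of events $w\in \WritesAcquires{X}\setminus(\sigma\cup\{e\})$ with $e\in \WritesAcquires{X}$, $\Confl{w}{e}$ and $w<_\Trace e$. By construction, $c(e)$ is exactly the number of new conflicting write/acquire reversals that committing to $e$ next contributes to $\Distance(\Trace,\Trace^*)$. If some $e\in E$ has $c(e)=0$, the greedy rule fires and appends the $\Trace$-earliest such $e$ to $\sigma$ without branching and without decrementing $b$. Otherwise the algorithm branches over the (at most $k$) candidates in $E$, recursing on $(\sigma\circ e,\, b-c(e))$ and pruning whenever $b-c(e)<0$. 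The search returns $\True$ as soon as $\sigma=X$ and $\False$ when it is exhausted.

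The hard part will be proving that the greedy rule is sound, via an exchange argument: if some continuation of $\sigma$ realizes $X$ within the remaining budget, then so does one that begins with the greedy event $e$. The subtlety is that promoting $e$ could disturb a pending read $r$ whose observed write $w_r$ already lies in $\sigma$ and that conflicts with $e$ on $\Location{e}$; inserting $e$ before $r$ in $\Trace^*$ would silently redirect $r$ to read from $e$ instead of $w_r$. I plan to handle this by strengthening $E$ and $c(e)$ to also account for such pending reads: either $r$ is added to $E$ and executed before $e$, or $c(e)$ is incremented to reflect an unavoidable future reversal between $e$ and a conflicting write on $\Location{r}$. Because there are at most $k$ pending reads of each location involved, this strengthening only multiplies the branching factor by $O(1)$ and is absorbed into the exponent.

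For the complexity, every node at which the algorithm genuinely branches consumes at least one unit of budget, so the branching tree has depth at most $\ell$ and branching factor at most $k$, yielding $k^{\ell+O(1)}$ leaves. Along each root-to-leaf path, the sets $E$, the costs $c(e)$, and the reads-from validity can be maintained incrementally in $O(1)$ amortized time per executed event, using per-thread cursors over $\Trace\Project X$ and per-location counters tracking the last executed write and the set of pending reads. This gives $O(n)$ work per leaf and the claimed total running time of $O(k^{\ell+O(1)}\cdot n)$.
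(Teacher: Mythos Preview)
Your proposal takes a genuinely different route from the paper. The paper never builds the witness incrementally; instead it constructs a write-ordered graph $G_1$ (edges from $Q$ together with $\Trace$-order edges $C$ between $Q$-unordered conflicting write/acquire pairs) and its read extension $G_2$. If both are acyclic, any linearization of $G_2$ realizes the rf-poset with no further reversals. Otherwise a cycle with at most $k$ cross-edges certifies that \emph{every} realization must reverse at least one of those cross-edges, and the algorithm branches on which one to force into $Q$. Correctness is a one-line consequence of the cycle, and the $O(k^{O(1)}\cdot n)$ cost per level comes from sparse transitive reductions of $G_1,G_2$.

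Your incremental greedy/branching scheme is appealing, but the crux, soundness of the greedy rule, is not established, and the subtlety you single out is actually a non-issue. If $r$ is pending with $w_r=\Observation(r)\in\sigma$ and $e$ conflicts with $r$, then $(w_r,r)$ lies in the frontier of $\sigma$ and, by the very definition of $E$ (events that \emph{extend} $\sigma$), $e\notin E$. So the case you plan to ``strengthen $E$ and $c(e)$'' for can never arise. The genuine obstruction to the exchange is elsewhere: suppose $e$ is itself an observed write, $\Observation(r')=e$, with $r'$ occurring after $e$ in the optimal suffix $\rho$, and some conflicting write $f_j$ occurs \emph{before} $e$ in $\rho$. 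Because $c_\sigma(e)=0$, necessarily $e<_\Trace f_j$, so $(e,f_j)$ is a reversal in $\rho$. Promoting $e$ to the front saves this reversal, but now $f_j$ sits between $e$ and $r'$ and silently redirects $r'$. Repairing this by pushing $f_j$ past $r'$ can cascade through $f_j$'s $Q$-successors in $\rho$ and introduce fresh reversals with other conflicting writes; you have not shown (and it is not obvious) that the saved reversal always pays for the repair. Your stated plan targets the wrong case and does not address this one. Whether such an exchange can always be completed with no net increase in reversals is precisely the lemma you need; the paper's global-cycle argument sidesteps it entirely by never committing to a prefix.

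A secondary point: the $O(1)$ amortized maintenance of $c(e)$ along a root-to-leaf path deserves more than ``per-location counters.'' For a freshly enabled event $e$, $c(e)$ counts conflicting writes before $e$ in $\Trace$ that are \emph{not yet} in $\sigma$, which in general is not a single counter lookup; you would need something like, per location, the number of executed writes lying in each $\Trace$-prefix, and to argue the updates stay amortized constant across backtracking.
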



\smallskip
\begin{proof}[Proof of \cref{them:small_distance}.]
By \cref{lem:reversal_realizability}, given a trace ideal $X$ of $\Trace$, we can solve the $\ell$-distance-bounded realizability problem for $X$ in $O(n)$ time.
The proof then follows by \cref{lem:candidate_set} and \cref{lem:num_candidate_ideals}, as to decide whether $(\Event_1, \Event_2)$ is a predictable data race of $\Trace$, it suffices to examine $O(1)$ trace ideals of $\Trace$.
\end{proof}

In the remaining of this section we prove \cref{lem:reversal_realizability}.
We first define the notion of read extensions of graphs.
Afterwards, we present the algorithm for the lemma, and show its correctness and complexity.

\Paragraph{Read extensions.}
Consider a digraph $G=(X,E)$ where $X$ is a set of events.
Given two events $\Event_1, \Event_2\in G$, we write $\Event_1\Path\Event_2$ to denote that $\Event_2$ is reachable from $\Event_1$.
We call $G$ \emph{write-ordered} if for every two distinct conflicting write or lock-acquire events $\Write_1, \Write_2\in \WritesAcquires{X}$, we have $\Write_1\Path \Write_2$ or $\Write_2\Path \Write_1$ in $G$.
Given an acyclic write-ordered graph $G_1=(X,E_1)$,
the \emph{read extension} of $G_1$ is the digraph $G_2=(X,E_2)$ where $E_2=E_1\cup A \cup B$,
where the sets $A$ and $B$ are defined as follows.
\begin{align*}
A=&\setpred{ (\Read, \Write)\in \ReadsReleases{X}\times \WritesAcquires{X}}{\Confl{\Read}{\Write} \text{ and } (\Observation_{\Trace}(\Read), \Write) \in E_1}\ ,\\
B=&\setpred{ (\Write, \Read)\in \WritesAcquires{X} \times \ReadsReleases{X}}{\Confl{\Read}{\Write} \text{ and } (\Write, \Observation_{\Trace}(\Read)) \in E_1}\ .
\end{align*}

\Paragraph{A fast algorithm for distance-bounded rf-poset realizability.}
Let $\OPoset=(X,P,\Observation)$ be the canonical rf-poset of $X$,
and the task is to decide the realizability of $\OPoset$ with $\ell$ reversals.
We describe a recursive algorithm for solving the problem for some rf-poset $\OPosetQ=(X,Q,\Observation)$  with $\ell'$ reversals, for some $\ell'\leq \ell$, where initially $Q=P$ and $\ell'=\ell$.

\SubParagraph{Algorithm and correctness.}
Consider the set
\begin{align*}
C=&\setpred{ (\Write_1, \Write_2)\in \WritesAcquires{X}\times \WritesAcquires{X}}{\Confl{\Write_1}{\Write_2}
\quad\text{and}\\
&\Unordered{\Write_1}{Q}{\Write_2} \text{ and } \Write_1<_{\Trace}\Write_2}\ .
\end{align*}
We construct a graph $G_1=(X,E_1)$, where $E_1=(\TOO\Project X) \cup C$.
Note that $G_1$ is write-ordered. 
If it is acyclic, we construct the read extension $G_2$ of $G_1$.
Observe that if $G_2$ is acyclic then any linearization $\Trace^*$ of $G$ realizes $\OPosetQ$, hence we are done.
Now consider that either $G_1$ or $G_2$ is not acyclic,
and let $G=G_1$ if $G_1$ is not acyclic, otherwise $G=G_2$.
Given a cycle $\Cycle$ of $G$, represented as a collection of edges, 
define the set of \emph{cross-edges} of $\Cycle$ as $\Cycle \setminus Q$.
Note that, since there are $k$ threads, $G$ has a cycle with $\leq k$ cross edges.
In addition, any trace $\Trace^*$ that realizes $\OPosetQ$ must linearize 
an rf-poset $(X,Q_a, \Observation)$
 where $a=(\Event_1, \Event_2)$ ranges over the cross-edges of $\Cycle.$
In particular, we take $Q_a=Q\cup \{ b \}$, where
\begin{align*}
b=
\begin{cases}
(\Event_2, \Event_1), & \text{ if } a\in \WritesAcquires{X}\times \WritesAcquires{X}\\
(\Observation(\Event_2), \Event_1), & \text{ if } a \in \WritesAcquires{X} \times \ReadsReleases{X}\\
(\Event_2, \Observation(\Event_1)), & \text{ if } a \in \ReadsReleases{X} \times \WritesAcquires{X}\ .
\end{cases}
\end{align*}

Observe that any such choice of $b$ reverses the order of two conflicting write events or lock-acquire events in $\Trace$.
Since there are $\leq k$ cross edges in $\Cycle$, there are $\leq k$ such choices for $Q_a$.
Repeating the same process recursively for the rf-poset $(X, Q_a, \Observation)$ for $\ell'-1$ levels solves the $\ell'$-distance-bounded realizability problem for $\OPosetQ$.
Since initially $\ell'=\ell$ and $Q=P$, this process solves the same problem for $\OPoset$ and thus for $X$.

\SubParagraph{Complexity.}
The recursion tree above has branching $\leq k$ and depth $\leq \ell$, hence there will be at most $k^{\ell}$ recursive instances.
In \cref{sec:proofs_small_distance}, we provide some lower-level algorithmic details which show that each instance can be solved in $O(k^{O(1)}\cdot n)$ time.
The main idea is that each of the graphs $G_1$ and $G_2$ have a sparse transitive reduction~\cite{Aho72} of size $O(k\cdot n)$, 
and thus each graph can be analyzed in $O(k\cdot n)$ time.


\section{Conclusion}\label{sec:conclusion}

In this work, we have studied the complexity of dynamic data-race prediction,
and have drawn a rich complexity landscape depending on various parameters of the input trace.
Our main results indicate that the problem is in polynomial time when the number of threads is bounded,
however, it is unlikely to be FPT wrt this parameter.
On the other hand, we have shown that the problem can be solved in, essentially, quadratic time, when the communication topology is acyclic.
We have also proved a quadratic lower bound for this case, which shows that our algorithm for tree communication topologies is optimal.
Finally, motivated by practical techniques, we have shown that a distance-bounded version of data-race prediction can be solved in linear time under mild assumptions on the input parameters.

\begin{acks}
We thank anonymous reviewers for their constructive feedback on an earlier draft of this manuscript.
\cref{rem:eth} is due to anonymous reviewer.
Umang Mathur is partially supported by a Google PhD Fellowship.
Mahesh Viswanathan was partially supported by grant NSF SHF 1901069.
\end{acks}

%

\bibliography{bibliography}

\clearpage
\appendix

\section{Details of {\cref{sec:preliminaries}}}\label{sec:proofs_preliminaries}

In this section we provide the proof of \cref{lem:decision_given_pair}.

\smallskip
\lemdecisiongivenpair*
\begin{proof}
We outline the construction of $\Trace'$.
We introduce two locks $\ell_1, \ell_2$, and for each $i\in [2]$, we surround  $\Event_i$ with the lock $\ell_i$,
i.e., we replace $\Event_i$ with $\Acquire(\ell_i), \Event_i, \Release(\ell_i)$.
For every other event $\Event\in \Events{\Trace} \cap \WritesReads{\Events{\Trace}}\setminus\{ \Event_1, \Event_2 \}$,
we replace $\Event$ with the following sequence
\[
\Acquire(\ell_1), \Acquire(\ell_2), \Event, \Release(\ell_2), \Release(\ell_1)\ .
\] 
Observe that the resulting sequence $\Trace'$ is a valid trace.
It is easy to see that $(\Event_1, \Event_2)$ 
can be the only predictable data race of $\Trace'$,
and any correct reordering of $\Trace$ that witnesses the data race $(\Event_1, \Event_2)$
can be transformed to a correct reordering of $\Trace'$ that witnesses the same data race, and vice versa.

The desired result follows.
\end{proof}


\section{Details of {\cref{sec:ideals}}}\label{sec:proofs_ideals}

Here we prove \cref{lem:candidate_set} and \cref{lem:num_candidate_ideals}.

\smallskip
\lemcandidateset*
\begin{proof}
The ($\Leftarrow$) direction of the statement is straightforward, and here we focus on the ($\Rightarrow$) direction.
Let $\Trace^*$ be a correct reordering that witnesses $(\Event_1, \Event_2)$, and $X^*=\Events{\Trace^*}$.
We show that there exists an ideal $X\in \CandidateSet_{\Trace}(\Event_1, \Event_2)$ such that 
(i)~$X\subseteq X^*$, and
(ii)~$\OpenAcquires(X)\subseteq \OpenAcquires(X^*)$.
Observe that the two conditions imply the lemma:
(i)~since $X\subseteq X^*$, we have that $\Event_1, \Event_2\not \in X$, while both events are enabled in $X$, and
(ii)~since $X\subseteq X^*$ and $\OpenAcquires(X)\subseteq \OpenAcquires(X^*)$, we have that $\Trace^*\Project X$ is a correct reordering of $\Trace$, and hence $X$ is realizable.

Consider any ideal $Y\in \CandidateSet_{\Trace}(\Event_1, \Event_2)$ such that $Y\subseteq X^*$.
Clearly, at least one such $Y$ exists, by taking $Y=\Past_{\Trace}(\{\Event_1, \Event_2\})$ and noticing that $Y\subseteq X^*$.
If $\OpenAcquires(Y)\not \subseteq \OpenAcquires(X^*)$, there exists some lock-acquire event $\Acquire\in \OpenAcquires(Y)$ such that $\Release\in X^*$, where $\Release=\Match{\Trace}{\Acquire}$.
But then $Y'\in \CandidateSet_{\Trace}(\Event_1, \Event_2)$, where $Y'=Y\cup \Past_{\Trace}(\{ \Release \} )\cup \{\Release\}$.
Note that $Y'\subseteq X^*$, and repeat the process for $Y=Y'$.
Since $Y\subset Y'$, this process can be repeated at most $n$ times, thus at some point we have chosen an ideal $Y\in \CandidateSet_{\Trace}(\Event_1, \Event_2)$ with the desired properties of $X$.

The desired result follows.
\end{proof}

\smallskip
\lemnumcandidateideals*
\begin{proof}
Let $Z=\CandidateSet_{\Trace}(\Event_1, \Event_2)$ and $X=\Past_{\Trace}(\{\Event_1, \Event_2\})$.
For an ideal $Y\in Z\setminus\{ X \}$, let $\Release_Y$ be the lock-release event that lead to $Y\in Z$ according to \cref{item:candidate_set2} of the definition of $\CandidateSet_{\Trace}(\Event_1, \Event_2)$, and $\Acquire_Y=\Match{\Trace}{\Release_{Y}}$.
In addition, we call the ideal $Y'\in Z$ with $\Acquire_Y\in \OpenAcquires(Y')$ that lead to adding $Y\in Z$
the parent of $Y$.
We define inductively 
$A_X=\emptyset$, and 
$A_Y=A_{Y'}\cup \{ \Acquire_Y \}$, where $Y'$ is the parent of $Y$.
Note that every ideal $Y\in Z$ is uniquely characterized by $A_Y$.

Let $G_{\Trace}$ be the lock-dependence graph of $\Trace$.
We show by induction that for every $Y\in Z\setminus \{ X \}$ there exists a lock-acquire event $\Acquire\in \OpenAcquires(X)$ such that $\Acquire$ is reachable from $\Acquire_Y$ in $G_{\Trace}$.
Let $Y_1$ be the smallest (wrt set inclusion) ancestor of $Y$ such that $\Acquire_Y\in \OpenAcquires(Y_1)$.
The statement holds if $Y_1=X$, by taking $\Acquire=\Acquire_Y$.
Otherwise, let $Y_2$ the parent of $Y_1$, and we have $\Acquire_{Y_1}\in \OpenAcquires(Y_2)$.
Note that
(i)~$\Acquire_Y\not <_{\TOO} \Acquire_{Y_1}$ (since $\Acquire_Y\not \in Y_2$), 
(ii)~$\Acquire_Y<_{\TOO} \Release_{Y_1}$ (since $\Acquire_Y\in \Past_{\Trace}(\Release_{Y_1})$), and
(iii)~$\Release_Y\not<_{\TOO} \Release_{Y_1}$ (since $\Release_Y\not \in Y_1$).
It follows that $(\Acquire_Y, \Acquire_{Y_1})$ is an edge in $G_{\Trace}$.
By the induction hypothesis, we have that there exists some lock-acquire event $\Acquire\in \OpenAcquires(X)$ that is reachable from $\Acquire_{Y_1}$ in $G_{\Trace}$.
Hence $\Acquire$ is reachable by $\Acquire_Y$ in $G_{\Trace}$, as desired.

Now, let $A=\bigcup_{Y\in Z} A_Y$, and by the previous paragraph, for every lock-acquire event $\Acquire'\in A$, there exists a lock-acquire event $\Acquire\in \OpenAcquires(X)$ that is reachable from $\Acquire'$ in $G_{\Trace}$.
Hence, $|A|\leq \OpenAcquires(X) \cdot \LockFactor$.
In addition, since there are $k$ threads and the lock-nesting depth is $\NestingDepth$, we have  $|\OpenAcquires(X)|\leq k\cdot \NestingDepth$,
thus $|A|\leq k\cdot \NestingDepth \cdot \LockFactor=\alpha$.
Moreover, we trivially have $|A|\leq n$.

Finally, since 
(i)~we have $k$ threads, and 
(ii)~in each ideal $Y\in Z$ the events $\Event_1$  and $\Event_2$ are enabled, 
we have $|Z|\leq |A|^{k-2}\leq \min(n,\alpha)^{k-2}$ such ideals.

The desired result follows.
\end{proof}


\section{Details of {\cref{sec:general}}}\label{sec:proofs_general}

\subsection{Details of {\cref{subsec:constant_threads}}}\label{subsec:proofs_constant_threads}

Here we provide formal proofs to \cref{lem:ideal_tree}, \cref{lem:idealgraph_size} and, using these, \cref{lem:oposet_realizability}.

\smallskip
\lemidealtree*
\begin{proof}
We prove each direction separately.

\noindent{\em ($\Rightarrow$).}
We prove by induction that every canonical trace $\Trace_Y$ of $\Tree_{\OPoset}$ realizes $Y$.
The statement clearly holds if $Y=\epsilon$.
Otherwise, let $\Trace_{Y}=\Trace_{Y'} \circ \Sequence{\Event}$, i.e., $Y$ is the extension of $Y'$ by the event $\Event$, and by the induction hypothesis we have that $\Trace_{Y'}$ realizes $Y'$.
The statement clearly holds if $\Event\in \WritesAcquires{\Trace}$, so we focus on the case where $\Event\in \ReadsReleases{\Trace}$.
Consider the pair $(\Write, \Event)\in \ReadPairs{X}$, and note that $\Write <_{\TOO}\Event$ and thus $\Write \in Y'$.
It remains to argue that for every triplet $(\Write, \Event, \Write')\in \ReadTriplets{\OPoset}$, 
if $\Write'\in Y'$ then $\Write'<_{\Trace_{Y'}} \Write$.
Assume towards contradiction otherwise, thus there exist two ancestors $Y_1,Y_2\in V^\Tree_{\OPoset}$ of $Y'$ such that
(i)~$Y_2=Y_1\cup\{\Write'\}$ and
(ii)~$\Write\in Y_1$.
In that case we have $(\Write, \Read)\in \Frontier{\OPoset}{Y_1}$, hence $\Write'$ could not have been executable in $Y_1$, a contradiction.

\noindent{\em ($\Leftarrow$).}
Let $\Trace^*$ be a witness trace that realizes $\OPoset$.
We argue that for every prefix $\Trace'$ of $\Trace^*$, there exists a node $X'\in \TreeNodes_{\OPoset}$ such that $X'=\Events{\Trace'}$.
The proof is by induction on the prefixes $\Trace'$.
The statement clearly holds if $\Trace'=\epsilon$, by taking $X'$ to be the root of $\Tree_{\OPoset}$.
Otherwise, let $\Trace'=\Trace'', \Event$, for some event $\Event\in X$,
and by the induction hypothesis, there exists $X''\in\TreeNodes_{\OPoset}$ such that $X''=\Events{\Trace''}$.
It suffices to argue that $\Event$ extends $X''$.
The statement is trivial if $\Event\in \ReadsReleases{X}$, hence we focus on the case where $\Event\in \WritesAcquires{X}$.
It suffices to argue that for every pair $(\Write, \Read)\in \Frontier{\OPoset}{X''}$ we have $(\Write, \Read, \Event)\not\in \ReadTriplets{\OPoset}$.
Consider any such triplet, and since $\Trace^*$ is a witness of the realizability of $X$, we have $\Read\in \Events{\Trace^*}$.
By the induction hypothesis, we have $\Write\in \Events{\Trace''}$ and $\Read\not \in \Events{\Trace''}$.
But then $\Observation_{\Trace^*}(\Read)\neq \Write$, a contradiction.

The desired result follows.
\end{proof}

\smallskip
\lemidealgraphsize*
\begin{proof}
The statement follows directly from the definition of poset ideals and the fact that the poset $(X,P)$ has width at most $k$.
\end{proof}

We are now ready to prove \cref{lem:oposet_realizability}.

\smallskip
\lemoposetrealizability*
\begin{proof}
Consider an rf-poset $\OPoset$ of size $n$, $k$ threads and $d$ variables.
By \cref{lem:ideal_tree}, to decide the realizability of $\OPoset$ it suffices to construct the ideal graph $G_{\OPoset}$ and test whether $X\in V_{\OPoset}$.
By \cref{lem:idealgraph_size}, $G_{\OPoset}$ has $O(n^k)$ nodes,
and since $\OPoset$ has $k$ threads, there each node of $G_X$ has $\leq k$ outgoing edges.
Hence $G_X$ can be constructed in $O(k\cdot n^k)=O(\beta)$ time.

The desired result follows.
\end{proof}

\subsection{Details of {\cref{subsec:w1_hardness}}}\label{subsec:proofs_w1_hardness}

Here we prove formally \cref{lem:oposet_w1hardness} and, using this, \cref{them:general_w1hard}.

\smallskip
\lemoposetwhardness*
\begin{proof}
We show that $\OPoset_G$ is realizable iff $G$ has an independent set of size $c$.
We prove each direction separately.

\noindent{($\Rightarrow$)}
Let $\Trace$ be a trace that realizes $\OPoset_G$.
For each $i\in [c]$, let $m_i$ be the maximum integer $j$ 
such that the event $\Read(z_{i}^{j})$ has a predecessor $\Event_i$ in $\SeqTrace_i$
with  $\Event_i<_{\Trace}\Read(x)$.
Note that for each such $i$ we have $\Write(s_i)<_\Trace\Read(x)$, and since $\Write(s_i)$ is the predecessor of $\Read(z_{i}^{1})$ in $\SeqTrace_i$, 
the index $m_i$ is well-defined.
We argue that $A=\{l_i\}_{i \in [c]}$ is an independent set of $G$, where $l_i=m_i$ if $\Read(x)<_{\Trace}\Read(z_{i}^{m_i})$ and $l_i=m_i+1$ otherwise.
We note that in the second case, $l_i=n$, as, otherwise, we would also have $\Acquire_{l_i}(\ell_i)<_{\Trace} \Read(x)$,
and since $\Read(x)$ is protected by lock $\ell_i$, we would also have that $\Release_{l_i}(\ell_i)<_{\Trace} \Read(x)$.
But then, $\Write(y_1^{l_1+1})<_{\Trace} \Read(x)$, which would contradict our choice of $m_i$.

Indeed, consider any distinct $i_1,i_2\in [c]$ and any neighbor $v_1$ and $v_2$ of $m_{i_1}$ and $m_{i_2}$, respectively.
Note that our choice of $A = \set{l_{i}}_{i\in[c]}$ implies that
both
$\Acquire_{i_1}(\ell_{\{l_{i_1}, v_1\}}) <_{\Trace} \Read(x)$
and
$\Acquire_{i_2}(\ell_{\{l_{i_2}, v_2\}}) <_{\Trace} \Read(x)$.
It suffices to argue that both
$\Read(x)<_{\Trace} \Release_{i_1}(\ell_{\{l_{i_1}, v_1\}})$
and
$\Read(x)<_{\Trace} \Release_{i_2}(\ell_{\{l_{i_2}, v_2\}} )$,
as then we have that $(l_{i_1}, l_{i_2})\not \in E$, which concludes that $A$ is an independent set of $G$.

Assume towards contradiction that $\Release_{i_1}(\ell_{\{l_{i_1}, v_1\}})<_{\Trace} \Read(x)$.
Clearly $l_{i_1}< n$, as $\Read(x)<_{P} \Release_{i_1}(\ell_{\{n, v_1\}})$.
But then, $\Acquire^{l_{i_1}}(\ell_{i_1})<_{\Trace} \Release_{i_1}(\ell_{\{l_{i_1}, v_1\}})$ and thus 
$\Acquire^{l_{i_1}}(\ell_{i_1}) <_{\Trace} \Read(x)$.
Since $\Read(x)$ appears in a critical section on lock $\ell_{i_1}$, we have 
$\Release^{l_{i_1}}(\ell_1) <_{\Trace} \Read(x)$.
But then $\Event_{i_1}<_{\Trace} \Read(x)$, where $\Event_{i_1}$ is the predecessor of $\Read(z_{i}^{l_1+1})$ in $\SeqTrace_{i_1}$, 
which contradicts our definition of $l_1$.
Hence, $\Read(x)<_{\Trace} \Release_{i_1}(\ell_{\{l_{i_1}, v_1\}}), \Release_{i_2}(\ell_{\{l_{i_2}, v_2\}} )$,
and thus $A$ is an independent set of $G$.

\noindent{($\Leftarrow$)}
Let $A$ be an independent set of $G$ of size $c$, and $l_1, \dots, l_c$ some arbitrary ordering of $A$.
For each $i\in [c]$ let $Y_i$ be the following events of threads $i$ and $c+i$.
\begin{compactenum}
\item All strict predecessors of $\Event_i$ in $\SeqTrace_i$, where  $\Event_i=\Read(z_i^{l_i})$ if $l_i<n$ and $\Event_i=\Read_i(x)$ otherwise.
\item If $l_i>1$, all predecessors of $\Release^{l_i-1}(\ell_i)$ in $\SeqTrace_{c+i}$ (including $\Release^{l_i-1}(\ell_i)$). 
\end{compactenum}
For each $i\in [c]$, let 
\[
Z_i = \setpred{\Event\in Y_i}{\exists \Acquire\in \OpenAcquires(Y_i) \text{ s.t. } \Acquire\leq_{P} \Event}\ ,
\]
i.e., $Z_i$ contains all events of $\SeqTrace_i$ and $\SeqTrace_{c+i}$ that succeed some lock-acquire event that is open in $Y_i$.
We argue that for any two distinct $i_1, i_2\in [c]$, any two lock-acquire events $\Acquire_1\in Z_{i_1}$ and $\Acquire_2\in Z_{i_2}$ access a different lock.
Note that the statement follows easily if one $\Acquire_1$ or $\Acquire_2$ belongs to $\SeqTrace_{c+i}$, for some $i\in[c]$,
as the locks accessed by each such total ordered are only also accessed by $\SeqTrace_{2\cdot c + 2}$.
Hence, we focus on the case where each of $\Acquire_1$ and $\Acquire_2$ belong to $\SeqTrace_{i}$, for some $i\in[c]$.
Assume towards contradiction otherwise, hence there exist distinct $i_1, i_2\in [c]$ such that
(i)~$\Acquire_{i_1}(\ell_{\{ l_{i_1}, l_{i_2} \}})\in Z_{i_1}$ and $\Acquire_{i_2}(\ell_{\{ l_{i_1}, l_{i_2} \}})\in Z_{i_2}$, while
(ii)~$\Release_{i_1}(\ell_{\{ l_{i_1}, l_{i_2} \}})\not \in Z_{i_1}$ and $\Release_{i_2}(\ell_{\{ l_{i_1}, l_{i_2} \}})\not \in Z_{i_2}$
It follows that $(l_{i_1}, l_{i_2})\in E$, which contradicts the fact that $A$ is an independent set of $G$.

We now construct a trace $\Trace$ that realizes $\OPoset_G$ in five phases, where initially, we have $\Trace=\epsilon$.
\begin{compactenum}
\item\label{item:phase1} {\em Phase~1:} For each $i\in [c]$, we linearize the partial order $P\Project(Y_i\setminus Z_i)$ arbitrarily, and append it to $\Trace$.
\item\label{item:phase2} {\em Phase~2:} For each $i\in [c]$, we linearize the partial order $P\Project Z_i$ arbitrarily, and append it to $\Trace$.
\item\label{item:phase3} {\em Phase~3:} We append to $\Trace$ the sequence $\Trace_1\Concat \Write(x), \Read(x)\Concat \Trace_2$,
where $\Trace_1$ and $\Trace_2$ are sequences over the events of $\SeqTrace_{2\cdot c + 2}$, as follows.
\begin{align*}
\Trace_1&=\Read(s_1),\dots, \Read(s_c), \Acquire(\ell_1),\dots \Acquire(\ell_c)
\qquad \text{and}\\
\Trace_2&= \Release(\ell_c),\dots, \Release(\ell_1)
\end{align*}
\item\label{item:phase4} {\em Phase~4:} For each $i\in [c]$, we linearize the partial order $P\Project(S_i\setminus Y_i)$, where $S_i$ is the smallest ideal of $(X,P)$ that contains the matching lock-release events of all lock-acquire events that are open in $Y_i$
\item\label{item:phase5} {\em Phase~5:} For each $i\in [c]$ we linearize $P$ over the remaining events of $\SeqTrace_i$ and $\SeqTrace_{c+i}$ arbitrarily, and append them to $\Trace$.
\end{compactenum}

Finally, we argue that $\Trace$ is a valid witness trace.
It is straightforward to verify that $\Trace$ is a linearization of $P$.
Moreover, since every memory location is written exactly once in $X$,
for every read event $\Read \in \Reads{X}$ we have $\Observation_{\Trace}(\Read)=\Observation(\Read)$.
It remains to argue that $\Trace$ respects the critical sections of $\OPoset_G$.
Observe that the only phase in which we interleave open critical sections between total orders $\SeqTrace_i$ that access the same lock is in Phase~\ref{item:phase2}.
However, as we have shown, for every  two distinct $i_1, i_2\in [c]$, any two lock-acquire events $\Acquire_1\in Z_{i_1}$ and $\Acquire_2\in Z_{i_2}$ access a different lock.
It follows that $\Trace$ respects the critical sections of $\OPoset_G$.

The desired result follows.
\end{proof}

We are now ready to prove \cref{them:general_w1hard}.

\smallskip
\themgeneralwhard*
\begin{proof}
We show that $(\Write(x), \Read(x))$ is a predictable data race of $\Trace$ iff $\OPoset_G$ is realizable.
If $\OPoset_G$ is realizable, then $(\Write(x), \Read(x))$ is a predictable data-race of $\Trace$ witnessed by 
the witness $\Trace^*$ of the realizability of $\OPoset_G$, as constructed in the proof of \cref{lem:oposet_w1hardness} (direction $\Leftarrow$), restricted to the events $\{\Event\in X\colon \Event<_{\Trace^*} \Write(x) \}$.

For the inverse direction, let $\Trace^*$ be a correct reordering of $\Trace$ that witnesses the data race $(\Write(x), \Read(x))$.
We construct a trace $\Trace'$ that realizes $\OPoset_G$ as 
\[
\Trace'=\Trace^*\Concat \Write(x), \Read(x), \Release(\ell_c), \dots, \Release(\ell_1) \Concat \Trace_4 \Concat \Trace_5\ ,
\]
where $\Trace_4$ and $\Trace_5$ are analogous to the linearizations in phase~\ref{item:phase4} and Phase~\ref{item:phase5}, respectively,
of the construction in the proof of \cref{lem:oposet_w1hardness}.
In particular, let $\Trace''$ be the prefix of $\Trace'$ until the event $\Release(\ell_1)$.
\begin{compactenum}
\item We construct $\Trace_4$ as follows.
For each $i\in [c]$, we linearize the set $S_i\setminus \Events{\Trace''}$ and append it to $\Trace_4$, where $S_i$ is the smallest ideal of $X$ that contains the matching lock-release events of all lock-acquire events that are open in $\Trace''$.
It is easy to see that each $S_i$ contains the event $\Read_i(x)$, hence $\Trace'$ respects the orderings $\Read(x)<_{P} \Read_i(x)$.
Indeed, as $\Observation(\Read_i(x))=\Write(x)$ and $\Write(x)\not \in \Trace^*$, we have that $\Read_i(x)\not \in \Trace^*$, and hence by definition, $\Read_i(x)\in S_i$.
\item We construct $\Trace_5$ as follows.
For each $i\in [c]$ we linearize the remaining events of $\SeqTrace_i$ and $\SeqTrace_{c+i}$ arbitrarily, and append them to $\Trace_5$.
\end{compactenum}
The correctness is established similarly to the proof of \cref{lem:oposet_w1hardness}.

The desired result follows.
\end{proof}


\section{Details of {\cref{sec:trees}}}\label{sec:proofs_trees}

\subsection{Details of {\cref{subsec:tree_topologies_ub}}}\label{subsec:proofs_tree_topologies_ub}

Here we provide details of \cref{subsec:tree_topologies_ub}.
We start with the proof of \cref{lem:tree_topologies}.
First, consider the construction of witness trace $\Trace$, by linearizing the poset $(X,Q)$.
We have the following lemma, which states that $(X,Q)$ is well-defined.

\smallskip
\begin{restatable}{lemma}{lemqpo}\label{lem:q_po}
$(X,Q)$ is a poset.
\end{restatable}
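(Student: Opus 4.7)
The statement asserts $(X,Q)$ is a poset, i.e., reflexive, antisymmetric, and transitive. Reflexivity is inherited from $P \subseteq Q$, and transitivity holds by construction since $Q$ is explicitly defined as a transitive closure. The substantive obligation is antisymmetry, equivalently acyclicity of the directed graph on $X$ whose edges are $P$ together with the set $N$ of parent-to-child orderings inserted during the top-down traversal of $\Tree$.

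The plan is to prove acyclicity by induction on the traversal order of tree edges. Let $Q_0 = P$ and let $Q_r$ denote the relation obtained after processing the first $r$ tree edges and transitively closing. The base case holds since $P$ is a partial order by hypothesis. For the inductive step, when processing tree edge $(i,j)$ with $j$ a child of $i$, it suffices to verify that no inserted ordering $\Event_1 <_Q \Event_2$ (with $\Event_1\in X_i$, $\Event_2 \in X_j$, $\Confl{\Event_1}{\Event_2}$, and $\Event_2 \not<_P \Event_1$) is already contradicted by $Q_{r-1}$, i.e., that $\Event_2 \not<_{Q_{r-1}} \Event_1$, since then adjoining finitely many edges whose reverses are absent preserves acyclicity.

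Assume for contradiction that a witnessing chain $\Event_2 = f_0 <_{Q_{r-1}} f_1 <_{Q_{r-1}} \dots <_{Q_{r-1}} f_m = \Event_1$ exists. Each link is either a $P$-edge or was inserted while processing an ancestor tree edge of $(i,j)$; crucially, every inserted edge in $N$ descends strictly in $\Tree$. I would analyze this chain by tracking the projection of its events onto $\Tree$: using the fact that $P\Project X_t$ is a total order on each thread $t$ to consolidate consecutive intra-thread links, and using the third condition of tree-inducibility to replace every cross-subtree $P$-edge by a chain of $P$-edges passing through separator threads (in particular through $X_i$ whenever the chain crosses from the subtree of $j$ into the complementary part of $\Tree$).

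The hardest step will be reducing this analysis to a direct $P$-ordering contradicting the hypothesis $\Event_2 \not <_P \Event_1$. The expected endpoint of the reduction is a $P$-chain from $\Event_2$ to some $\Event'_1 \in X_i$. If $\Event'_1 = \Event_1$ or $\Event'_1 <_P \Event_1$ the contradiction is immediate by transitivity of $P$. Otherwise, I would invoke closedness of $\OPoset$ on a triplet containing $\Event_1$ together with a suitable read event appearing on the chain (and the analogous argument for write-write or acquire-acquire conflicts) to propagate the ordering all the way to $\Event_2 <_P \Event_1$, yielding the desired contradiction and completing the induction.
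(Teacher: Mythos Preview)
Your setup is essentially the paper's: process the new orderings one by one, look at the first one whose insertion creates a cycle, and analyze the witnessing $Q_{r-1}$-chain from $\Event_2$ back to $\Event_1$ using tree-inducibility. Your observation that every inserted edge descends in $\Tree$, so that the chain's first exit from the subtree rooted at $j$ must be along a $P$-edge (and hence, by condition~(3) of tree-inducibility, factors through some $\Event'_1\in X_i$ with $\Event_2<_P \Event'_1$), is exactly the crux of the paper's argument.

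The gap is your treatment of the ``otherwise'' case $\Event_1 <_P \Event'_1$. You propose to invoke closedness on a triplet involving $\Event_1$ and a read on the chain to force $\Event_2 <_P \Event_1$. That plan is both unnecessary and unlikely to go through: closedness only relates $\Write,\Read,\Write'$ inside a single triplet, and there is no reason your $\Event_1,\Event_2$ participate in any triplet with the right orientation. The paper handles this case without closedness at all. The point is that your $\Event'_1$ is not just $P$-above $\Event_2$; it also satisfies $\Event'_1 <_{Q_{r-1}} \Event_1$, because $\Event'_1$ is $P$-below an event that lies on the tail of the original chain to $\Event_1$. Thus $\Event_1 <_P \Event'_1 <_{Q_{r-1}} \Event_1$ is already a cycle in $Q_{r-1}$, contradicting your inductive hypothesis. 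That is the whole argument; drop the appeal to closedness.

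A smaller point: your sentence ``adjoining finitely many edges whose reverses are absent preserves acyclicity'' is false as a general principle (think $a\to b$ plus new edges $b\to c$ and $c\to a$). It happens to be true in your setting because all new edges at step $r$ go from the $P$-totally-ordered set $X_i$ into $X_j$: if a cycle used several of them, take the one whose source is $P$-maximal in $X_i$ and observe that its target already $Q_{r-1}$-reaches its source, reducing to the single-edge case. Either add this one-line argument, or simply switch to the paper's finer granularity of processing individual orderings rather than whole tree edges.
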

\begin{proof}
Assume towards contradiction otherwise.
Consider the process in which we insert the orderings in $Q$ in sequence, and examine the first pair of conflicting events $(\Event_1, \Event_2)$ such that we try to order $\Event_1<_{Q}\Event_2$ whereas it already holds that $\Event_2<_{Q}\Event_1$.
For $j\in [2]$, let $i_j$ be such that $\Event_j\in X_{i_j}$, and it follows that $i_1$ is the parent of $i_2$ in $\Tree$.
Observe that, by the construction of $Q$, we have $\Unordered{\Event_1}{P}{\Event_2}$.
Consider the sequence of orderings 
\[
\Event_2=\ov{\Event}_1\Prec_{Q} \ov{\Event}_2 \Prec_{Q} \dots \Prec_{Q} \ov{\Event}_{\ell}=\Event_1
\]
that witnesses (transitively) that $\Event_2<_{Q}\Event_1$.
Since $\OPoset$ is tree-inducible, there exists some $j\in[\ell]$ such that $\ov{\Event}_{j}\in X_{i_1}$ and $\Event_2<_{P}\ov{\Event}_{j}$.
Since $X_{i_1}$ is totally ordered in $P$, the events $\ov{\Event}_j$ and $\Event_1$ are ordered in $P$.
Clearly $\Event_1<_{P}\ov{\Event}_j$, otherwise we would have $\Event_2<_{P} \Event_1$.
But then we already have $\ov{\Event}_j<_{Q}\Event_1$, and hence a cycle already exists in $Q$, 
contradicting our assumption about $(e_1, e_2)$ being the first pair where a cycle is encountered.
The desired result follows.
\end{proof}

As the above lemma establishes that $\Trace$ is well-defined,
we can prove \cref{lem:tree_topologies} by showing that $\Trace$ is a witness of the realizability of $\OPoset$.

\smallskip
\lemtreetopologies*
\begin{proof}
Consider any  triplet $(\Write, \Read, \Write')\in \ReadTriplets{\OPoset}$ and we argue that
(i)~$\Write<_{\Trace}\Read$ and
(ii)~if $\Write<_{\Trace} \Write'$ then $\Read<_{\Trace}\Write'$.
For (i), by the definition of rf-posets, we have $\Write<_{P}\Read$ and thus $\Write<_{\Trace}\Read$.
We now turn our attention to (ii).
If $\Ordered{\Write'}{P}{\Read}$ or $\Ordered{\Write'}{P}{\Write}$, since $\OPoset$ is closed, we have either $\Write'<_{P} \Write$ or $\Read<_{P}\Write'$, and hence $\Write'<_{Q} \Write$ or $\Read<_{Q}\Write'$.
Since $\Trace$ is a linearization of $Q$, the first case leads to a contradiction, whereas the second case leads to $\Read<_{\Trace}\Write'$, as desired.
Now assume that $\Unordered{\Write'}{P}{\Read}$ and $\Unordered{\Write'}{P}{\Write}$, and let $i\in[k]$ be such that $\Read\in X_i$.
Since $\OPoset$ is tree-inducible, it follows that $\Write\in X_i$ and $\Write'\in X_j$ for some $i\neq j$, and such that $(i,j)$ is an edge of $\Tree$.
Since $\Write<_{\Trace}\Write'$, we have $\Write<_{Q}\Write'$, and thus, by construction, t $j$ is a child of $i$ in $\Tree$.
Again, by construction, we have  $\Read<_{Q} \Write$. 
Since $\Trace$ is a linearization of $Q$, we have that $\Observation_{\Trace}(\Read)\neq \Write'$.

The desired result follows.
\end{proof}

Next, we prove \cref{lem:closure_tree}, which states that tree-inducibility is preserved under taking closures.

\smallskip
\lemclosuretree*
\begin{proof}
Let $\Tree=([k], \setpred{(i,j)}{\Confl{X_i}{X_j}})$ be a rooted tree such that $\OPoset$ is tree-inducible to $\Tree$,
and we argue that $\OPosetQ$ is also tree-inducible to $\Tree$.
Since $Q\Refines P$, clearly the condition~\ref{item:tree_ind1} and condition~\ref{item:tree_ind2} of tree-inducibility are met.
Condition~\ref{item:tree_ind3} follows from the fact that
(i)~$Q$ is the transitive closure of $P\cup Q'$, where $Q'$ is a relation between conflicting events, and
(ii)~since $\Tree$ is a tree, for any two conflicting events $\Event_1, \Event_2$ we have either $\Event_1, \Event_2\in X_i$ for some $i\in[k]$, or $\Event_1\in X_i$ and $\Event_2\in X_j$ for some $i,j\in[k]$ such that $(i,j)$ is an edge of $\Tree$.

The desired result follows.
\end{proof}

Now we can conclude the proof of \cref{lem:oposet_realizability_trees}, which is step~(i) towards the proof of \cref{them:tree_topologies_ub}.

\smallskip
\lemoposetrealizabilitytrees*
\begin{proof}
Consider a tree-inducible rf-poset $\OPoset$.
In $O(k^2\cdot d\cdot n^2\cdot \log n)$ time, we can decide whether the closure of $\OPoset=(X,P,\Observation)$ exists~\cite{Pavlogiannis19}.
By \cref{rem:closure}, if the closure does not exist, $\OPoset$ is not realizable.
On the other hand, if the closure exists, denote it by $\OPosetQ=(X,Q,\Observation)$.
By \cref{lem:closure_tree}, $\OPosetQ$ is tree-inducible, and by \cref{lem:tree_topologies}, it is realizable by a witness $\Trace^*$.
Since $Q\Refines P$, we have that $\Trace^*\Refines P$, and thus $\Trace^*$ also realizes $\OPoset$.

The desired result follows.
\end{proof}

We now turn our attention to \cref{lem:tree_ideal} which forms step~(ii) towards \cref{them:tree_topologies_ub}.
We start with the following lemma, which, in high level,
guarantees that if $\Trace^*$ is a trace in which $\Event$ is enabled,
then $\Trace^*\Project \LPast_{\Trace}(\Event)$ also has this property.

\smallskip
\begin{restatable}{lemma}{lemlpastevents}\label{lem:lpast_events}
Let $X=\LPast_{\Trace}(\Event)$, and $\Trace^*$ be any correct reordering of $\Trace$ in which $\Event$ is enabled.
The following assertions hold.
\begin{compactenum}
\item\label{item:lpast_events1} $X\subseteq \Events{\Trace^*}$.
\item\label{item:lpast_events2} Consider any thread $\Process_1\neq \Proc{\Event}$, and thread $\Process_2$ such that
(i)~$X\Project \Process_1 \neq \emptyset$, and
(ii)~$\Process_2$ is a parent of $\Process_1$ in the tree topology $G_{\Trace}$ rooted at $\Proc{\Event}$.
Let $\Event_2$ be the unique maximal event of $X\Project \Process_2$ in $\Trace^*$,
and $\Event_1$ any event in $X\Project \Process_1$.
We have that $\Event_1<_{\Trace^*} \Event_2$.
\end{compactenum}
\end{restatable}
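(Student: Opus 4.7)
The plan is to establish both assertions simultaneously by induction on the sequence of insertions prescribed by the construction of $\LPast_{\Trace}(\Event)$. The key recurring fact will be that, restricted to events common to $X$ and $\Events{\Trace^*}$, the order $<_{\TOO}$ is contained in $<_{\Trace^*}$: thread order is preserved since each $\Trace^*\Project\Process_i$ is a prefix of $\Trace\Project\Process_i$, and reads-from is preserved since $\Observation_{\Trace^*}\subseteq \Observation_{\Trace}$. Consequently, whenever $\Event_2\in \Events{\Trace^*}$ and $\Event_1<_{\TOO}\Event_2$, one can walk the witnessing chain of $\TO$-steps and reads-from steps backwards inside $\Trace^*$ to obtain both $\Event_1\in \Events{\Trace^*}$ and $\Event_1<_{\Trace^*}\Event_2$.

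The base case of the induction handles the initial insertions in \cref{item:lpast1}: every such event is a $\TO$-predecessor of $\Event$ in $\Proc{\Event}$ and hence lies in $\Events{\Trace^*}$ by the enabling hypothesis, while \cref{item:lpast_events2} is vacuous since the only thread touched so far is the root $\Proc{\Event}$. For an insertion performed by \cref{item:lpast2} while processing a child thread $\Process_1$ of $\Process_2$, the newly inserted event $\Event_1$ satisfies $\Event_1<_{\TOO}\Event_2$ for $\Event_2$ the $\TO$-maximum of $X\Project\Process_2$; since $\Process_2$ is fully processed before $\Process_1$, this maximum agrees with that of the final $X\Project\Process_2$, so by induction $\Event_2\in \Events{\Trace^*}$, and the key fact yields both conclusions for $\Event_1$ at once.

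The hard part will be the lock-closure step \cref{item:lpast3}, where $\Release_1$ is added even though it need not be $<_{\TOO}$-below $\Event_2$. It is handled as follows. By induction, both $\Acquire_1\in X\Project\Process_1$ and $\Acquire_2\in X\Project\Process_2$ lie in $\Events{\Trace^*}$, and since they conflict on the same lock, the validity of $\Trace^*$ forces either (A) $\Release_1<_{\Trace^*}\Acquire_2$ or (B) $\Release_2<_{\Trace^*}\Acquire_1$. Alternative (B) is ruled out by noting that $\Release_2\notin X$ while $X\Project\Process_2$ is a $\TO$-prefix of $\Trace\Project\Process_2$, so $\Event_2<_{\TO}\Release_2$, giving $\Event_2<_{\Trace^*}\Release_2<_{\Trace^*}\Acquire_1$; this contradicts the inductive hypothesis of \cref{item:lpast_events2} applied to the already-present $\Acquire_1$, which yields $\Acquire_1<_{\Trace^*}\Event_2$. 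Hence (A) holds, which delivers $\Release_1<_{\Trace^*}\Acquire_2\leq_{\Trace^*}\Event_2$ and in particular $\Release_1\in \Events{\Trace^*}$; every $\TO$-predecessor of $\Release_1$ then lies in $\Events{\Trace^*}$ by the thread-prefix property, and \cref{item:lpast_events2} for those events follows from $\TO$-preservation below $\Release_1<_{\Trace^*}\Event_2$.
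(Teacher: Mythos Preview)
Your proof is correct and follows essentially the same inductive scheme as the paper: induct over the insertions of \cref{item:lpast1}--\cref{item:lpast3}, handle \cref{item:lpast2} via $\TOO$-preservation in $\Trace^*$, and for \cref{item:lpast3} use lock semantics on $\Acquire_1,\Acquire_2\in\Events{\Trace^*}$ together with the fact that $\Event_2$ lies inside the critical section of $\Acquire_2$. Your treatment of \cref{item:lpast3} is slightly more streamlined than the paper's, since you obtain $\Acquire_1<_{\Trace^*}\Event_2$ directly from the inductive hypothesis of \cref{item:lpast_events2}, whereas the paper re-derives this by a case split on whether $\Acquire_1$ was inserted in \cref{item:lpast2} or in an earlier iteration of \cref{item:lpast3}.
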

\begin{proof}
The proof is by induction on the steps of the process that constructs $\LPast_{\Trace}(\Event)$.
Clearly, both statements hold after \cref{item:lpast1} of the process has been executed.
Similarly, both statements hold easily after each time \cref{item:lpast2}  of the process has been executed.
We now proceed with \cref{item:lpast3} of the process.
For each $i\in[2]$, consider the lock-acquire events $\Acquire_i\in X$ as identified in this step, and $\Release_i=\Match{\Trace}{\Acquire_i}$.
Observe that $\Event_2$ appears in the critical section of $\Acquire_2$, while by the induction hypothesis, we have $\Acquire_1,\Acquire_2\in \Events{\Trace^*}$.
We distinguish the step that led to $\Acquire_1\in X$.
\begin{compactenum}
\item[Step~\ref{item:lpast2}.] 
Then $\Acquire_1<_{\TOO} \Event_2$, and thus
(i)~all predecessors of $\Release_1$ (including $\Release_1$) appear in $\Trace^*$, and 
(ii)~$\Release_1<_{\Trace*} \Acquire_2$, hence $\Release_1<_{\Trace^*} \Event_2$,
and thus $\Event_1<_{\Trace^*}\Event_2$ for all predecessors of $\Release_1$ inserted to $X$.
\item[Step~\ref{item:lpast3}.]
For each $i\in[2]$, consider the lock-acquire events $\Acquire'_i$ as identified in that step, and $\Release'_i=\Match{\Trace}{\Acquire'_i}$.
Note that $\Acquire_1<_{\TO}\Release'_1$,
and since, by the induction hypothesis, we have $\Release'_1<_{\Trace^*} \Event_2$, we obtain $\Acquire_1<_{\Trace^*}\Event_2$.
Thus, again,
(i)~all predecessors of $\Release_1$ (including $\Release_1$) appear in $\Trace^*$, and 
(ii)~$\Release_1<_{\Trace*} \Acquire_2$, hence $\Release_1<_{\Trace^*} \Event_2$,
and thus $\Event_1<_{\Trace^*}\Event_2$ for all predecessors of $\Release_1$ inserted to $X$.
\end{compactenum}

The desired result follows.
\end{proof}

Next we have a technical lemma, which will allow us to conclude that 
if $(\Event_1, \Event_2)$ is a predictable data race of $\Trace$,
the ideal $\LPast_{\Trace}(\Event_1)\cup \LPast_{\Trace}(\Event_2)$ is lock-feasible.

\smallskip
\begin{restatable}{lemma}{lemlpastdouble}\label{lem:lpast_double}
For any two conflicting events $\Event_1, \Event_2$, consider the set $X=\LPast_{\Trace}(\Event_1)\cup \LPast_{\Trace}(\Event_2)$.
For any two conflicting events $\Event'_1, \Event'_2\in X$ such that $\Proc{\Event'_i}\neq \Proc{\Event_1}$ for each $i\in [2]$,
we have that $\Event'_1, \Event'_2\in \LPast_{\Trace}(\Event_i)$, for some $i\in[2]$.
\end{restatable}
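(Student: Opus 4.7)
The plan is to reduce the statement to a structural analysis of the tree topology and then invoke a monotonicity argument on cone projections. Throughout let $\Process_a = \Proc{\Event_1}$, $\Process_b = \Proc{\Event_2}$, $\Process_c = \Proc{\Event'_1}$, $\Process_d = \Proc{\Event'_2}$. First I would establish a useful prefix property: for every event $\Event$ and every thread $\Process$, the projection $\LPast_\Trace(\Event) \Project \Process$ is a $\TO$-prefix of $\Trace \Project \Process$. This follows by a direct induction on the construction of the cone, since each of steps~\ref{item:lpast1}--\ref{item:lpast3} only adds events together with their $\TO$-predecessors in the corresponding thread. A consequence, used repeatedly, is that for any thread $\Process$ the two projections $\LPast_\Trace(\Event_1)\Project\Process$ and $\LPast_\Trace(\Event_2)\Project\Process$ are nested: one is contained in the other.

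Next I would dispatch the case analysis afforded by the tree structure. Because $G_\Trace$ is a tree and $\Event'_1, \Event'_2$ conflict, either $\Process_c = \Process_d$ or $(\Process_c, \Process_d)$ is an edge of $G_\Trace$. In the same-thread case $\Process_c = \Process_d$, both $\Event'_1, \Event'_2 \in X$ lie in the nested pair of projections onto $\Process_c$, so both sit in the larger one and we are done. In the distinct-thread case, if both events lie in a single $\LPast_\Trace(\Event_i)$ we are again done; otherwise, WLOG $\Event'_1 \in \LPast_\Trace(\Event_1) \setminus \LPast_\Trace(\Event_2)$ and $\Event'_2 \in \LPast_\Trace(\Event_2) \setminus \LPast_\Trace(\Event_1)$, and I plan to derive a contradiction.

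The key structural observation is that in the tree rooted at $\Process_a$ and in the tree rooted at $\Process_b$, the same endpoint of $\{\Process_c, \Process_d\}$ is the parent of the other. Indeed, since $\Event_1, \Event_2$ conflict in a tree topology, $\Process_a$ and $\Process_b$ are either equal or adjacent; and since the hypothesis forces $\Process_a \notin \{\Process_c, \Process_d\}$, the edge $(\Process_a, \Process_b)$ (if distinct) cannot coincide with $(\Process_c, \Process_d)$. Hence removing $(\Process_c, \Process_d)$ leaves $\Process_a$ and $\Process_b$ in the same connected component, and so the closer endpoint to $\Process_a$ is also the closer endpoint to $\Process_b$. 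WLOG $\Process_c$ is the parent of $\Process_d$ in both rootings.

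With this symmetry in place, I would prove a monotonicity property: if $\LPast_\Trace(\Event_i)\Project\Process_c \supseteq \LPast_\Trace(\Event_{3-i})\Project\Process_c$, then the same inclusion holds on $\Process_d$. The base case, before the lock-step, is immediate: step~\ref{item:lpast2} populates $\Process_d$ with the $\TOO$-predecessors of the maximal event of $\Process_c$'s projection, and a $\TO$-larger maximum yields a $\TOO$-larger predecessor set. The inductive step must show that each application of step~\ref{item:lpast3} in the smaller cone is already matched in the larger cone. Combining this monotonicity with the nested inclusion on $\Process_c$ (which must go one way or the other) contradicts either $\Event'_2 \notin \LPast_\Trace(\Event_1)$ or $\Event'_1 \notin \LPast_\Trace(\Event_2)$, as desired. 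The hard part is the inductive step for the lock-closure: an acquire $\Acquire_2 \in \Process_c$ can be open in the smaller cone yet closed in the larger cone, and one must argue that whenever step~\ref{item:lpast3} fires in the smaller cone on a pair $(\Acquire_1, \Acquire_2)$, the corresponding release in $\Process_d$ is already in the larger cone, either because the same acquire was open earlier in the larger construction, or because the presence of the matching $\Release_2$ in the larger $\Process_c$-projection forces the propagation via a different conflicting-acquire chain.
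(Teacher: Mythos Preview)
Your approach differs from the paper's: you cut the tree at the edge $(\Process_c,\Process_d)$ between the threads of $\Event'_1,\Event'_2$ and argue via a parent-to-child monotonicity of cone projections, while the paper cuts at the edge between $\Proc{\Event_1}$ and $\Proc{\Event_2}$ and routes through an intermediary event in thread $\Proc{\Event_2}$. But the monotonicity you isolate as ``the hard part'' is genuinely false, and not for a reason you can patch locally --- the lemma itself, as stated, does not hold.

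Here is a counterexample. Take threads $a$, $c$, $d$ forming a path $a$--$c$--$d$. Thread $d$ executes $\Acquire_d(\ell),\Write_d(x),\Acquire_d(\ell'),\Release_d(\ell'),\Release_d(\ell)$; thread $c$ executes $\Read_c(x),\Acquire_c(\ell),\Write_c(w),\Acquire_c(\ell'),\Release_c(\ell'),\Release_c(\ell),\Write_c(v)$; thread $a$ executes $\Read_a(w),\Write_a(v)$; the trace runs $d$, then $c$, then $a$. Set $\Event_1=\Write_a(v)$ and $\Event_2=\Write_c(v)$. One computes $\LPast_\Trace(\Event_1)\Project c=\{\Read_c(x),\Acquire_c(\ell),\Write_c(w)\}$ with $\Acquire_c(\ell)$ open, so step~\ref{item:lpast3} fires when processing $d$ and $\LPast_\Trace(\Event_1)\Project d$ is all of $d$. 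By contrast $\LPast_\Trace(\Event_2)\Project c$ extends through $\Release_c(\ell)$, so step~\ref{item:lpast3} does not fire for $d$ and $\LPast_\Trace(\Event_2)\Project d=\{\Acquire_d(\ell),\Write_d(x)\}$. Now $\Event'_1=\Acquire_d(\ell')\in\LPast_\Trace(\Event_1)\setminus\LPast_\Trace(\Event_2)$ and $\Event'_2=\Acquire_c(\ell')\in\LPast_\Trace(\Event_2)\setminus\LPast_\Trace(\Event_1)$ are conflicting, their threads are both distinct from $\Proc{\Event_1}=a$, yet neither cone contains both. Neither of your two escape routes applies: $\Acquire_c(\ell)$ is never open at any stage of the larger construction, and there is no alternative conflicting-acquire chain because $\Acquire_c(\ell')$ is simply absent from the larger cone's $c$-projection. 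The paper's proof, which asserts $\LPast_\Trace(\Event)\subseteq\LPast_\Trace(\Event_2)$ and $\LPast_\Trace(\Event_2)\subseteq\LPast_\Trace(\Event_1)$ in its two cases without justification, breaks on the same example. The downstream \cref{lem:tree_ideal} is unaffected here (the union $X$ is lock-feasible and realizable, and $\Event_1,\Event_2\notin X$), so the fix presumably lies in a weaker claim tailored to what that proof actually uses.
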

\begin{proof}
We assume that $\Proc{\Event_1}\neq \Proc{\Event_2}$, as the statement clearly holds otherwise.
Assume w.l.o.g. that $\Event'_i\in \LPast_{\Trace}(\Event_i)$ for each $i\in [2]$, and consider the tree 
$\Tree=([k], \setpred{ (i,j)}{\Confl{\Events{\Trace}\Project \Process_i}{\Events{\Trace}\Project \Process_j}})$.
Let $i_1, i_2, i'_1, i'_2\in [k]$ be such that $\Proc{\Event_1}=\Process_{i_1}$, $\Proc{\Event_2}=\Process_{i_2}$, $\Proc{\Event'_1}=\Process_{i'_1}$ and $\Proc{\Event'_2}=\Process_{i'_2}$, and since $\Confl{\Event_1}{\Event_2}$, we have that $(i_1,i_2)$ is an edge of $\Tree$.
Consider the two components $C_1, C_2$ that are created in $\Tree$ by removing the edge $(i_1, i_2)$, such that $i_1\in C_1$ and $i_2\in C_2$.
Since $\Confl{\Event'_1}{\Event'_2}$ and $i'_1, i'_2\neq i_1$,  we have that $i'_1, i'_2\in C_1$ or $i'_1, i'_2\in C_2$.
We only consider  $i'_1, i'_2\in C_2$, as the other case is similar.

Since $\Event'_1\in \LPast_{\Trace}(\Event_1)$, there exists event $\Event \in \LPast_{\Trace}(\Event_1)$ such that $\Proc{\Event}=\Process_{i_2}$ and either $\Event'_1=\Event$ or  $\Event'_1\in\LPast_{\Trace}(\Event)$.
If $\Event<_{\TO}\Event_2$, then $\LPast_{\Trace}(\Event)\subseteq  \LPast_{\Trace}(\Event_2)$ and thus $\Event'_1\in \LPast_{\Trace}(\Event_2)$.
Otherwise, $\Event_2\in \LPast_{\Trace}(\Event_1)$, hence $\LPast_{\Trace}(\Event_2)\subseteq \LPast_{\Trace}(\Event_1)$ and thus $\Event'_2\in \LPast_{\Trace}(\Event_1)$.

The desired result follows.
\end{proof}

We are now ready to prove \cref{lem:tree_ideal}.

\smallskip
\lemtreeideal*
\begin{proof}
For the $(\Leftarrow)$ direction, notice that $\Event_1$ and $\Event_2$ are enabled in $X$,
and since $X$ is realizable, we have that  $(\Event_1, \Event_2)$ is a predictable data race of $\Trace$.
We now focus on the $(\Rightarrow)$ direction.

Let $\Trace^*$ be a witness of the data race $(\Event_1, \Event_2)$.
As $\Event_1$ and $\Event_2$ are enabled in $\Trace^*$, we have $\{\Event_1, \Event_2 \}\cap \Events{\Trace^*}=\emptyset$.
By \cref{lem:lpast_events}, we have $X\subseteq \Events{\Trace^*}$, and thus $\{\Event_1, \Event_2 \}\cap X=\emptyset$.

We now argue that $\Trace'=\Trace^*\Project X$ realizes $X$.
Since $X\subseteq \Events{\Trace^*}$ and $X$ is a trace ideal, it only remains to show that $\Trace'$ respects the critical sections.
Consider any two lock-acquire events $\Acquire_1,\Acquire_2\in X$ such that $\Confl{\Acquire_1}{\Acquire_2}$ and $\Acquire_1<_{\Trace^*}\Acquire_2$, and we argue that $\Release_1\in X$.
where $\Release_1=\Match{\Trace}{\Acquire_1}$.
Let $Y=\Events{\Trace}\Project \{\Proc{\Event_1},\Proc{\Event_2}\}$, and observe that
$X\Project \{\Proc{\Event_1},\Proc{\Event_2}\} = Y$,
thus the statement is true if $\{\Proc{\Acquire_1}, \Proc{\Acquire_2}\}\subseteq \{\Proc{\Event_1}, \Proc{\Event_2} \}$.
Otherwise, we have $\Proc{\Acquire_j}\neq \Proc{\Event_i}$ for some $i\in[2]$ and each $j\in[2]$.
By \cref{lem:lpast_double}, we have that $\Acquire_1, \Acquire_2\in \LPast_{\Trace}(\Event_l)$ for some $l\in [2]$.
Assume towards contradiction that $\Release_1\not \in X$.
By the definition of $\LPast_{\Trace}(\Event_l)$, we have that $\Proc{\Acquire_1}$ is a parent of $\Proc{\Acquire_2}$ in the tree $G_{\Trace}$ rooted at $\Proc{\Event_l}$.
By \cref{lem:lpast_events}, we have $\Acquire_2<_{\Trace^*} \Event$, where $\Event$ is the last event of $X\Project \Proc{\Acquire_1}$ in $\Trace^*$. 
Note that $\Event$ belongs to the critical section of $\Acquire_1$, hence we must have $\Acquire_2<_{\Trace^*} \Acquire_1$, a contradiction.

The desired result follows.
\end{proof}

\subsection{Details of {\cref{subsec:tree_topologies_lb}}}\label{subsec:proofs_tree_topologies_lb}

Here we provide the proof of \cref{lem:closure_ov}.
We first develop some helpful notation.

Given integers $j,l\in [n/2]$, we denote by $X_{j}^{l}$ the events of $X_A$ and $X_B$ that correspond to vectors $a_j$ and $b_l$,
that is, events that have superscript $a_j$ or $b_l$, with the following exception:~if the observation $\Observation(\Read)$ of a read event $\Read$ is not in $X_{j}^{l}$ then $\Read$ is also not in $X_{j}^{l}$.
The notation carries over to partial orders $Q_{j}^{l}$ and reads-from functions $\Observation_{j}^{l}$.
We also use inequalities for the subscripts and superscripts of $X$ (e.g., $X_{j}^{\leq n/2}$) to denote events that corresponds to vectors $a_{j'}$ and $b_{l'}$ for $j'$ and $l'$ that satisfy the inequalities.

Consider a partial order $Q'$ over a subset $Y\subseteq X$ such that $Q'\Refines \SeqTrace_A\Project Y$ and $Q'\Refines \SeqTrace_B\Project Y$.
Given two events $\Event_1, \Event_2\in Y$ such that $\{\Event_1, \Event_2\}\not \subseteq X_A$ and $\{\Event_1, \Event_2\}\not \subseteq X_B$
(i.e., the events belong to different sets among $X_A$ and $X_B$) we say that a $Q'$ has a \emph{cross edge} $\Event_1<_{Q'}\Event_2$ to mean that it has the ordering $\Event_1<_{Q'}\Event_2$ and the orderings that are introduced transitively through it.

Before the final proof of \cref{lem:closure_ov}, we present some technical, but conceptually simple, lemmas.
Consider the two total orders $\SeqTrace_A$ and $\SeqTrace_B$ without any cross edges.
The first lemma reasons at the coordinate level of two vectors $a_j$ and $b_l$.
It states that if we start with a cross edge $\Write^{a_j}_1(x_1)< \Read^{b_l}_1(x_1)$,
the closure rules eventually lead to an ordering $\Read^{a_j}_1(x_2)< \Write^{b_l}_1(x_2)$ iff $a_j$ and $b_l$ are not orthogonal.

\smallskip
\begin{restatable}{lemma}{lemovproof1}\label{lem:ovproof1}
For any $j,l\in [n/2]$, consider the rf-poset $\OPosetQ_{j}^{l}=(X_{j}^{l}, Q_{j}^{l}, \Observation_{j}^{l})$,
where $Q_{j}^{l}$ has a single cross edge $\Write^{a_j}_1(x_1)<_{Q_{i}^{j}} \Read^{b_l}_1(x_1)$.
Let  $\OPosetS_{j}^{l}=(X_{j}^{l}, S_{j}^{l}, \Observation_{j}^{l})$ be the closure of $\OPosetQ_{j}^{l}$.
We have $\Read^{a_j}_1(x_2)<_{S_{j}^{l}} \Write^{b_l}_1(x_2)$ iff $a_j$ and $b_l$ are not orthogonal.
\end{restatable}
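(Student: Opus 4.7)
The plan is to prove both directions by carefully tracking how the closure rules propagate orderings through the chain of variables $x_1, x_2, x_3, x_6$ that encode the coordinate-wise test. The forward direction is done by direct derivation; the reverse direction is done by exhibiting a concrete closed extension witnessing the absence of the unwanted ordering.

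For the forward direction, assume $a_j$ and $b_l$ are not orthogonal, and let $i^*\in[D]$ be the smallest coordinate with $a_j[i^*]\cdot b_l[i^*]=1$. I would first prove by upward induction on $i$ from $1$ to $i^*$ that the closure derives $\Write_i^{a_j}(x_1)<_{S_j^l}\Read_i^{b_l}(x_1)$. The base case $i=1$ is the given cross edge. For the inductive step with $i-1<i^*$ (so $a_j[i-1]\cdot b_l[i-1]=0$), from $\Write_{i-1}^{a_j}(x_1)<\Read_{i-1}^{b_l}(x_1)$ closure first gives $\Write_{i-1}^{a_j}(x_1)<\Write_{i-1}^{b_l}(x_1)$; composing with the $\SeqTrace_A$- and $\SeqTrace_B$-orderings on $x_3$ in \cref{eq:x3_1,eq:x3_2} yields $\Write_i^{a_j}(x_3)<\Read_{i-1}^{b_l}(x_3)$, hence by closure $\Write_i^{a_j}(x_3)<\Write_i^{b_l}(x_3)$, and finally $\Write_i^{a_j}(x_1)<\Read_i^{b_l}(x_1)$. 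At $i=i^*$, since $a_j[i^*]=b_l[i^*]=1$, the orderings in \cref{eq:x1x2_3} chain $\Write_{i^*}^{a_j}(x_2)<_{\SeqTrace_A}\Write_{i^*}^{a_j}(x_1)<\Write_{i^*}^{b_l}(x_1)<_{\SeqTrace_B}\Write_{i^*}^{b_l}(x_2)$, and closure produces $\Read_{i^*}^{a_j}(x_2)<\Write_{i^*}^{b_l}(x_2)$.

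The last step of the forward direction is a symmetric downward induction on $i$ from $i^*$ to $1$, showing $\Read_i^{a_j}(x_2)<_{S_j^l}\Write_i^{b_l}(x_2)$ by propagating through the $x_6$-gadget. Specifically, from $\Read_{i+1}^{a_j}(x_2)<\Write_{i+1}^{b_l}(x_2)$ and the orderings in \cref{eq:x6_1,eq:x6_2,eq:x1x2_2}, we obtain $\Write_{i+1}^{a_j}(x_6)<\Write_{i+1}^{b_l}(x_6)$; closure then forces $\Read_i^{a_j}(x_6)<\Write_{i+1}^{b_l}(x_6)$, and composing via $\SeqTrace_A,\SeqTrace_B$ gives $\Write_i^{a_j}(x_2)<\Write_i^{b_l}(x_2)$, and a final closure step produces $\Read_i^{a_j}(x_2)<\Write_i^{b_l}(x_2)$. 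At $i=1$ this is the desired ordering.

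For the reverse direction, assume $a_j$ and $b_l$ are orthogonal. Since $S_j^l$ is (by definition of closure) the weakest partial order that refines $Q_j^l$ and yields a closed rf-poset, it suffices to exhibit one closed rf-poset $\OPosetK=(X_j^l,T,\Observation_j^l)$ with $T\Refines Q_j^l$ and $\Read_1^{a_j}(x_2)\not<_T\Write_1^{b_l}(x_2)$. I would build $T$ by giving an explicit linearization $\Trace$ of $X_j^l$ that realizes $\Observation_j^l$, has $\Write_1^{b_l}(x_2)<_\Trace\Read_1^{a_j}(x_2)$, and then let $T$ be the total order induced by $\Trace$ (which is trivially closed, as every read observes the immediately preceding write to its location in $\Trace$). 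The construction schedules events coordinate by coordinate from $D$ down to $1$: within the block of coordinate $i$, the orthogonality assumption $a_j[i]\cdot b_l[i]=0$ means that at least one of the inner orderings in \cref{eq:x1x2_3} is a free choice, so we can always place $\Write_i^{b_l}(x_2)$ before $\Read_i^{a_j}(x_2)$ while respecting $\SeqTrace_A,\SeqTrace_B$ and the one given cross edge (which only constrains $\Write_1^{a_j}(x_1)$ to precede $\Read_1^{b_l}(x_1)$, a constraint easily accommodated at the bottom block). The main obstacle, and where the technical work lies, is verifying that this coordinate-wise interleaving is globally consistent with all $\SeqTrace_A$- and $\SeqTrace_B$-orderings introduced by the $x_3$- and $x_6$-gadgets, which link adjacent coordinates; the orthogonality condition is precisely what allows the scheduler to never be forced into the ``propagation'' chain that the forward direction exploits.
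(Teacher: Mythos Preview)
Your forward direction is essentially the paper's argument: propagate the $x_1$-ordering upward through the $x_3$-gadget, fire the $x_2$-ordering at a coordinate with product $1$, then propagate downward through the $x_6$-gadget. The only cosmetic difference is that the paper first establishes $\Write_i^{a_j}(x_1)<_{S_j^l}\Write_i^{b_l}(x_1)$ for \emph{all} $i\in[D]$ (the $x_3$-propagation works regardless of the bit values), whereas you stop at the first $i^\ast$; either way suffices.

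Your reverse direction differs from the paper's. The paper argues the converse by directly tracking which cross-edges the closure inserts, claiming the coordinate-level ``iff'' ($\Read_i^{a_j}(x_2)<\Write_i^{b_l}(x_2)$ is added iff $a_j[i]\cdot b_l[i]=1$) and then the induction on $x_2,x_6$ as an ``iff''. In other words, the paper implicitly characterizes the closure exactly and reads off that the unwanted ordering is absent when the vectors are orthogonal. You instead exploit that $S_j^l$ is the \emph{weakest} closed refinement of $Q_j^l$: any realizing linearization with $\Write_1^{b_l}(x_2)<\Read_1^{a_j}(x_2)$ immediately certifies $\Read_1^{a_j}(x_2)\not<_{S_j^l}\Write_1^{b_l}(x_2)$. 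This is a legitimate and arguably more transparent route, since it avoids the delicate task of proving no \emph{other} closure rule fires; the price is that you must actually exhibit the interleaving and check it respects $\Observation_j^l$ globally (in particular the reads on $x_3$ and $x_6$, which couple adjacent coordinate blocks). You correctly identify this as the remaining technical work; it goes through, using orthogonality at each coordinate to choose which side's write-pair order is ``free'' per \cref{eq:x1x2_3}.
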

\begin{proof}[Proof (sketch)]
First, observe that since $\Write^{a_j}_1(x_1)<_{Q_{j}^{l}} \Read^{b_l}_1(x_1)$, by closure, we will have
$\Write^{a_j}_1(x_1)<_{S_{j}^{l}} \Write^{b_l}_1(x_1)$.
By a simple induction on the events on variables $x_1$ and $x_3$, we have that 
$\Write^{a_j}_i(x_1)<_{S_{j}^{l}} \Write^{b_l}_i(x_1)$ for each $i\in[D]$.
In turn, one of these orderings leads to
$\Write^{a_j}_i(x_2)<_{S_{j}^{l}}\Write^{b_l}_i(x_2)$ and thus
$\Read^{a_j}_i(x_2)<_{S_{j}^{l}}\Write^{b_l}_i(x_2)$
iff $a_j[i]\cdot b_l[i]=1$.
By a simple induction on the events on variables $x_2$ and $x_6$, we have that
$\Read^{a_j}_1(x_2)<_{S_{j}^{l}} \Write^{b_l}_1(x_2)$
iff $a_j$ and $b_l$ are not orthogonal.
\end{proof}

The next lemma reasons at the vector $B$ level.
Consider any fixed $j\in[n/2]$.
The lemma states that if we order $\Write^{a_1}_1(x_1)< \Read^{b_1}_1(x_1)$,
if $a_j$ is not orthogonal to $b_{l'}$ for any $l' \leq n/2-1$,
the closure rules eventually lead to $\Write^{a_j}_1(x_1)< \Read^{b_{n/2}}_1(x_1)$.
On the other hand, if there exists a smallest $l_1$ such that $a_j$ is orthogonal to $b_{l_1}$,
the closure rules will stop inserting cross edges between events of vectors $a_j$ and all $b_{l'}$, for $l'\geq l_1+1$.

\begin{restatable}{lemma}{lemovproof2}\label{lem:ovproof2}
For any $j\in[n/2]$ and $l\leq n/2-1$, consider the rf-poset $\OPosetQ_{j}^{\leq l+1}=(X_{j}^{\leq l+1}, Q_{j}^{\leq l+1}, \Observation_{j}^{\leq l+1})$,
where $Q_{j}^{\leq l+1}$ has a single cross edge $\Write^{a_j}_1(x_1)<_{Q_{j}^{\leq l+1}} \Read^{b_1}_1(x_1)$.
Let  $\OPosetS_{j}^{\leq l+1}=(X_{j}^{\leq l+1}, S_{j}^{\leq l+1}, \Observation_{j}^{\leq l+1})$ be the closure of $\OPosetQ_{j}^{\leq l+1}$.
The following assertions hold.
\begin{compactenum}
\item\label{item:ovproof2A} If $a_j$ is not orthogonal to $b_{l'}$ for any $l'\in [l]$, then $\Write^{a_j}_1(x_1)<_{S_{j}^{\leq l+1}} \Read^{b_{l+1}}_1(x_1)$.
\item\label{item:ovproof2B} If $a_j$ is orthogonal to $b_{l'}$, for some $l'\in [l]$, then there are no cross edges in $(X_{j}^{l_1+1\leq l'\leq l+1}, S_{j}^{l_1+1\leq l'\leq l+1})$,
where $l_1$ is the smallest $l'$ such that $a_j$ is orthogonal to $b_{l'}$.
\end{compactenum}
\end{restatable}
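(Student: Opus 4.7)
I would prove part (\ref{item:ovproof2A}) by induction on $l$, using \cref{lem:ovproof1} as the engine that converts a vector-level cross edge into a coordinate-level witness, and then obtain part (\ref{item:ovproof2B}) as a near-corollary of the ``only if'' direction of that same lemma.

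For part (\ref{item:ovproof2A}), the base case $l=0$ is immediate: the statement reduces to the single cross edge $\Write^{a_j}_1(x_1) < \Read^{b_1}_1(x_1)$ already present in $Q_{j}^{\leq 1}$. For the inductive step, applying the induction hypothesis to $l-1$ gives $\Write^{a_j}_1(x_1) <_{S_{j}^{\leq l}} \Read^{b_{l}}_1(x_1)$. Restricting to the events of $a_j$ and $b_l$ and invoking \cref{lem:ovproof1} with the hypothesis that $a_j$ and $b_l$ are not orthogonal yields $\Read^{a_j}_1(x_2) <_{S_{j}^{\leq l+1}} \Write^{b_l}_1(x_2)$. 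Propagation to $b_{l+1}$ then uses the $x_4$-events: chaining this ordering with the thread-order relations from \eqref{eq:x4_1}, \eqref{eq:x3_2} and \eqref{eq:x4_2} produces $\Write^{a_j}(x_4) <_{S_{j}^{\leq l+1}} \Read^{b_l}(x_4)$, and since $\Observation(\Read^{b_l}(x_4)) = \Write^{b_{l+1}}(x_4)$, closure forces $\Write^{a_j}(x_4) <_{S_{j}^{\leq l+1}} \Write^{b_{l+1}}(x_4)$. Chaining once more with $\Write^{a_j}_1(x_1) <_{\SeqTrace_A} \Write^{a_j}(x_4)$ and $\Write^{b_{l+1}}(x_4) <_{\SeqTrace_B} \Read^{b_{l+1}}_1(x_1)$ delivers the required cross edge.

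For part (\ref{item:ovproof2B}), applying the argument of (\ref{item:ovproof2A}) iteratively up to $b_{l_1}$ produces the cross edge $\Write^{a_j}_1(x_1) <_{S_{j}^{\leq l+1}} \Read^{b_{l_1}}_1(x_1)$. By the ``only if'' direction of \cref{lem:ovproof1}, however, the closure does not force $\Read^{a_j}_1(x_2) < \Write^{b_{l_1}}_1(x_2)$, because $a_j$ and $b_{l_1}$ are orthogonal. I would then show that the $x_4$-channel is the unique route by which a cross edge can reach $b_{l_1+1}$-events from within $X_j^{\leq l_1}$, and that this route is triggered precisely by the missing ordering $\Read^{a_j}_1(x_2) < \Write^{b_{l_1}}_1(x_2)$. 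A straightforward induction on $l' \geq l_1$ then establishes that no cross edge involving a $b_{l'}$-event for $l' \geq l_1 + 1$ can arise in the closure.

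The main obstacle is the uniqueness claim in part (\ref{item:ovproof2B}): ruling out every alternative propagation path. The construction orchestrates seven variables, and a complete argument requires checking that, for each variable $x_i$, the only inter-vector read triplets capable of closing an ordering between $X_j^{\leq l_1}$ and $X_j^{l_1+1 \leq l' \leq l+1}$ are those mediated by $x_4$. I expect to carry this out by grouping events into coordinate-level blocks (managed by $x_1, x_2, x_3, x_6$) and vector-level blocks (managed by $x_4, x_5, x_7$), and verifying that, within a fixed thread $a_j$, only the $x_4$-events connect events of consecutive vectors of $B$ under the closure rules.
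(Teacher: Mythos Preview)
Your proposal is correct and follows essentially the same approach as the paper's proof: induction on $l$, using \cref{lem:ovproof1} to obtain the $x_2$-ordering when $a_j$ and $b_{l'}$ are not orthogonal, and then propagating to $b_{l'+1}$ via the $x_4$-triplet $(\Write^{b_{l'+1}}(x_4),\Read^{b_{l'}}(x_4),\Write^{a_j}(x_4))$. The only organizational difference is that the paper interleaves both assertions in a single induction (with base case $l=1$) rather than proving part~(\ref{item:ovproof2A}) first and then deriving part~(\ref{item:ovproof2B}); your explicit identification of the ``unique propagation channel'' obstacle in part~(\ref{item:ovproof2B}) is in fact more detailed than the paper's sketch, which dismisses this step with ``it can be easily seen.''
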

\begin{proof}[Proof (sketch)]
The proof is by induction on $l$.
For the base case $(l=1)$, 
consider the rf-poset $\OPosetQ_{j}^{1}$, and by \cref{lem:ovproof1}, 
we have $\Read^{a_j}_1(x_2)<_{S_{j}^{1}} \Write^{b_1}_1(x_2)$ iff $a_j$ and $b_1$ are not orthogonal.
Observe that if $\Read^{a_j}_1(x_2)<_{S_{j}^{1}} \Write^{b_1}_1(x_2)$, we also have
$\Read^{a_j}_1(x_2)<_{S_{j}^{\leq 2}} \Write^{b_1}_1(x_2)$, and thus
$\Write^{a_j}(x_4)<_{S_{j}^{\leq 2}}\Read^{b_1}(x_4)$.
Then, by closure we have $\Write^{a_j}(x_4)<_{S_{j}^{\leq 2}}\Write^{b_2}(x_4)$ and thus transitively
$\Write^{a_j}_2(x_1)<_{S_{j}^{\leq 2}} \Read^{b_2}_1(x_1)$.
On the other hand, it can be easily seen that if 
$\Read^{a_j}_1(x_2)\not <_{S_{j}^{1}} \Write^{b_1}_1(x_2)$
then the cross edges in $(X_{j}^{\leq 2}, S_{j}^{\leq 2})$ are precisely the cross edges in $(X_{j}^{\leq 1}, S_{j}^{\leq 1})$,
and thus there are no cross edges in $(X_{j}^{2},S_{j}^{2})$.

Now assume that the claim holds for $l$, and we argue that it holds for $l+1$.
By the induction hypothesis, we have that if $a_j$ is not orthogonal to $b_{l'}$ for any $l'\in [l]$, then
$\Write^{a_j}_1(x_1)<_{S_{j}^{\leq l+1}} \Read^{b_{l+1}}_1(x_1)$
and thus
$\Write^{a_j}_1(x_1)<_{S_{j}^{\leq l+2}} \Read^{b_{l+1}}_1(x_1)$.
A similar analysis to the base case shows that if $a_j$ is not orthogonal to $b_{l+1}$ then
$\Write^{a_j}_1(x_1)<_{S_{j}^{\leq l+2}} \Read^{b_{l+2}}_1(x_1)$.
On the other hand, if $a_j$ is orthogonal to $b_{l+1}$, it can be easily seen that the cross edges in $(X_{j}^{\leq l+2}, S_{j}^{\leq l+2})$ are precisely the cross edges in $(X_{j}^{\leq l+1}, S_{j}^{\leq l+1})$,
and thus there are no cross edges in $(X_{j}^{l_1+1}, S_{j}^{l_1+1})$, where $l_1=l+1$.
Finally, if $a_j$ is orthogonal to $b_{l'}$ for some $l'\in[l]$, the statement follows easily by the induction hypothesis.

The desired result follows.
\end{proof}

The next lemma reasons at the vector $A$ level.
The lemma states that if we order $\Write^{a_1}_1(x_1)< \Read^{b_1}_1(x_1)$,
if $a_{j'}$ is not orthogonal to $b_{l'}$ for any $j'\leq n/2-1$ and $l'\leq n/2$,
the closure rules eventually lead to $\Write^{a_{n/2}}_1(x_1)< \Read^{b_{n/2}}_1(x_1)$.
On the other hand, if there exists a smallest $j_1$ such that  $a_{j_1}$ is orthogonal to $b_{l'}$, for some $l'\in [n/2]$,
the closure rules will stop inserting cross edges between events of all vectors $a_{j'}$ and all vectors $b_{l}$, for $j'\geq j_1+1$ and $l\in [n/2]$.

\begin{restatable}{lemma}{lemovproof3}\label{lem:ovproof3}
For any $j\in [n/2-1]$, consider the rf-poset $\OPosetQ_{\leq j+1}^{\leq n/2}=(X_{\leq j+1}^{\leq n/2}, Q_{\leq j+1}^{\leq n/2}, \Observation_{\leq j+1}^{\leq n/2})$,
where $Q_{\leq j+1}^{\leq n/2}$ has a single cross edge $\Write^{a_1}_1(x_1)<_{Q_{\leq j+1}^{\leq n/2}} \Read^{b_1}_1(x_1)$.
Let  $\OPosetS_{\leq j+1}^{\leq n/2}=(X_{\leq j+1}^{\leq n/2}, S_{\leq j+1}^{\leq n/2}, \Observation_{\leq j+1}^{\leq n/2})$ be the closure of $\OPosetQ_{\leq j+1}^{\leq n/2}$.
The following assertions hold.
\begin{compactenum}
\item\label{item:ovproof3A} If $a_{j'}$ is not orthogonal to $b_{l'}$ for any $j'\in[j]$ and  $l'\in [n/2]$, then $\Write^{a_{j+1}}_1(x_1)<_{S_{\leq {j+1}}^{\leq n/2}} \Write^{b_{1}}_1(x_1)$.
\item\label{item:ovproof3B} If $a_{j'}$ is orthogonal to $b_{l'}$ for some $j'\in[j]$ and $l'\in[n/2]$, then there are no cross edges in $(X_{j_1+1\leq j'\leq j+1}^{\leq n/2}, S_{j_1+1\leq j'\leq j+1}^{\leq n/2})$, where $j_1$  is the smallest $j'$ such that $a_{j'}$ is orthogonal to $b_{l'}$, for some $l'\in [n/2]$.
\end{compactenum}
\end{restatable}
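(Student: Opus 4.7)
The plan is to prove the lemma by induction on $j \in [n/2-1]$, using \cref{lem:ovproof2} as the inner engine (propagation along $b$-vectors for a fixed $a$-vector) and contributing a new outer propagation step through the events on $x_5$ and $x_7$ that transitions from $a_{j+1}$ to $a_{j+2}$. The base case $j = 1$ starts directly from the unique cross edge $\Write^{a_1}_1(x_1) < \Read^{b_1}_1(x_1)$; the inductive step assumes the result for $j$ and derives it for $j+1$. Both cases share the same argument shape, described below.

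For Part~A, assuming no orthogonality in $[j+1] \times [n/2]$: the inductive hypothesis (or the starting cross edge in the base case) provides $\Write^{a_{j+1}}_1(x_1) < \Read^{b_1}_1(x_1)$ in the restricted closure. Invoking \cref{lem:ovproof2} with $l = n/2 - 1$ gives $\Write^{a_{j+1}}_1(x_1) < \Read^{b_{n/2}}_1(x_1)$, and then \cref{lem:ovproof1} with $l = n/2$ (using non-orthogonality of $a_{j+1}$ with $b_{n/2}$) yields $\Read^{a_{j+1}}_1(x_2) < \Write^{b_{n/2}}_1(x_2)$. I then propagate through $x_5$ using the construction orderings $\Write^{a_{j+1}}(x_5) <_{\SeqTrace_A} \Read^{a_{j+1}}_1(x_2)$ and $\Write^{b_{n/2}}_1(x_2) <_{\SeqTrace_B} \Write^{b_{n/2}}(x_5)$ to obtain $\Write^{a_{j+1}}(x_5) < \Write^{b_{n/2}}(x_5)$, whence closure on the triplet $(\Write^{a_{j+1}}(x_5), \Read^{a_{j+1}}(x_5), \Write^{b_{n/2}}(x_5))$ gives $\Read^{a_{j+1}}(x_5) < \Write^{b_{n/2}}(x_5)$. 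An analogous step through $x_7$ (using $\Write^{a_{j+1}}(x_7) <_{\SeqTrace_A} \Read^{a_{j+1}}(x_5)$ and $\Write^{b_{n/2}}(x_5) <_{\SeqTrace_B} \Write^{b_1}(x_7)$, since $\SeqTrace_B$ orders $b_{n/2}$-events before $b_1$-events) produces $\Write^{a_{j+1}}(x_7) < \Write^{b_1}(x_7)$, and closure on the triplet with $\Read^{a_{j+2}}(x_7)$ yields $\Read^{a_{j+2}}(x_7) < \Write^{b_1}(x_7)$. Combining with $\Write^{a_{j+2}}_1(x_1) <_{\SeqTrace_A} \Read^{a_{j+2}}_1(x_6) <_{\SeqTrace_A} \Read^{a_{j+2}}(x_7)$ and $\Write^{b_1}(x_7) <_{\SeqTrace_B} \Read^{b_1}(x_4) <_{\SeqTrace_B} \Read^{b_1}_1(x_1)$ yields $\Write^{a_{j+2}}_1(x_1) < \Read^{b_1}_1(x_1)$, and a final closure step gives $\Write^{a_{j+2}}_1(x_1) < \Write^{b_1}_1(x_1)$, as required.

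For Part~B, let $j_1$ be the smallest witness of orthogonality. If $j_1 \leq j$, the inductive hypothesis Part~B already gives no cross edges for $a_{j_1+1}, \ldots, a_{j+1}$, so in particular $\Read^{a_{j+1}}_1(x_2) \not < \Write^{b_{n/2}}_1(x_2)$, the $x_5, x_7$ chain does not fire, and no cross edges involving $a_{j+2}$ are introduced. If $j_1 = j+1$, by inductive hypothesis Part~A I still have $\Write^{a_{j+1}}_1(x_1) < \Read^{b_1}_1(x_1)$, and I use \cref{lem:ovproof1} and \cref{lem:ovproof2} to rule out $\Read^{a_{j+1}}_1(x_2) < \Write^{b_{n/2}}_1(x_2)$: either $a_{j+1}$ is orthogonal to some $b_{l'}$ with $l' < n/2$, in which case \cref{lem:ovproof2} Part~B blocks the cross edge $\Write^{a_{j+1}}_1(x_1) < \Read^{b_{n/2}}_1(x_1)$ (and hence \cref{lem:ovproof1} cannot apply), or the orthogonality is only with $b_{n/2}$, in which case \cref{lem:ovproof1} directly withholds the needed ordering. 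Either way, the $x_5, x_7$ chain does not fire and no cross edges to $a_{j+2}$ are introduced.

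The main obstacle is to rigorously verify, in Part~B, that the $x_5, x_7$ chain is the \emph{only} mechanism by which closure can introduce a cross edge incident on an $a_{j+2}$-event; this will require enumerating every read-triplet involving such an event and confirming that each potential propagation path is blocked by the absence of the ordering $\Read^{a_{j+1}}_1(x_2) < \Write^{b_{n/2}}_1(x_2)$.
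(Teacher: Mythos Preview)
Your proposal is correct and follows essentially the same approach as the paper's proof sketch: induction on $j$, invoking \cref{lem:ovproof2} and \cref{lem:ovproof1} to reach $\Read^{a_{j+1}}_1(x_2) < \Write^{b_{n/2}}_1(x_2)$, then chaining through $x_5$ and $x_7$ to advance to $a_{j+2}$. Your flagged obstacle---that the $x_5,x_7$ chain is the only way closure can introduce a cross edge touching an $a_{j+2}$-event---is exactly what the paper's sketch elides with ``it can be easily seen that the cross edges in $(X_{\leq j+2}^{\leq n/2}, S_{\leq j+2}^{\leq n/2})$ are precisely the cross edges in $(X_{\leq j+1}^{\leq n/2}, S_{\leq j+1}^{\leq n/2})$''; so you are being more careful than the paper, not missing anything it supplies.
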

\begin{proof}[Proof (sketch)]
The proof is by induction on $j$.
For the base case $(j=1)$, consider the rf-poset $\OPosetQ_{1}^{\leq n/2}=(X_{1}^{\leq n/2},Q_{1}^{\leq n/2},\Observation_{1}^{\leq n/2})$, and by \cref{lem:ovproof2}, we have that 
if $a_1$ is not orthogonal to $b_{l'}$ for any $l'\in [n/2-1]$, then $\Write^{a_j}_1(x_1)<_{S_{j}^{\leq n/2}} \Read^{b_{n/2}}_1(x_1)$.
Consider the rf-poset $\OPosetQ_{1}^{n/2}$, and by \cref{lem:ovproof1}, 
we have $\Read^{a_1}_1(x_2)<_{S_{1}^{n/2}} \Write^{b_{n/2}}_1(x_2)$ iff $a_1$ and $b_{n/2}$ are not orthogonal.
Observe that if 
$\Read^{a_1}_1(x_2)<_{S_{1}^{n/2}} \Write^{b_{n/2}}_1(x_2)$, we also have 
$\Read^{a_1}_1(x_2)<_{S_{1}^{\leq n/2}} \Write^{b_{n/2}}_1(x_2)$, and thus
$\Read^{a_1}_1(x_2)<_{S_{\leq 2}^{\leq n/2}} \Write^{b_{n/2}}_1(x_2)$, hence, transitively,
$\Write^{a_1}(x_5)<_{S_{\leq 2}^{\leq n/2}}\Write^{b_{n/2}}(x_5)$.
Hence, by closure, we have 
$\Read^{a_1}(x_5)<_{S_{\leq 2}^{\leq n/2}}\Write^{b_{n/2}}(x_5)$ and thus transitively
$\Write^{a_1}(x_7)<_{S_{\leq 2}^{\leq n/2}}\Write^{b_{n/2}}(x_7)$.
Then, by closure we have
$\Read^{a_2}(x_7)<_{S_{\leq 2}^{\leq n/2}}\Write^{b_{n/2}}(x_7)$ and thus transitively
$\Write^{a_2}_1(x_1)<_{S_{\leq 2}^{\leq n/2}} \Read^{b_{1}}_1(x_1)$.
On the other hand, it can be easily seen that if 
$\Read^{a_1}_1(x_2)\not <_{S_{1}^{n/2}} \Read^{b_{n/2}}_1(x_2)$
then the cross edges in $(X_{\leq 2}^{\leq n/2}, S_{\leq 2}^{\leq n/2})$ are precisely the cross edges in $(X_{1}^{\leq n/2}, S_{1}^{\leq n/2})$,
and thus there are no cross edges in $(X_{2}^{\leq n/2}, S_{2}^{\leq n/2})$.

Now assume that the claim holds for $j$, and we argue that it holds for $j+1$.
By the induction hypothesis, we have that if $a_{j'}$ is not orthogonal to $b_{l'}$ for any $j'\in[j]$ and $l'\in [n/2]$, then 
$\Read^{a_{j+1}}_1(x_2)<_{S_{\leq {j+1}}^{\leq n/2}} \Write^{b_{1}}_1(x_2)$ and thus
$\Read^{a_{j+1}}_1(x_2)<_{S_{\leq {j+2}}^{\leq n/2}} \Write^{b_{1}}_1(x_2)$.
A similar analysis to the base case shows that if $a_{j+1}$ is not orthogonal to $b_{l'}$ for any $l'\in [n/2]$, then 
$\Read^{a_{j+2}}_1(x_2)<_{S_{\leq {j+2}}^{\leq n/2}} \Write^{b_{1}}_1(x_2)$.
On the other hand, if $a_{j+1}$  is orthogonal to some $b_{l'}$, it can be easily seen that the cross edges in $(X_{\leq j+2}^{\leq n/2}, S_{\leq j+2}^{n/2})$ are precisely the cross edges in $(X_{\leq j+1}^{\leq n/2}, S_{\leq j+1}^{n/2})$.
Thus, there are no cross edges in $(X_{j_1+1}^{\leq n/2}, S_{j_1+1}^{\leq n/2})$, where $j_1=j'+1$.
Finally, if $a_{j'}$ is orthogonal to $b_{l'}$ for some $j'\in [j]$ and $l'\in [n/2]$, the statement follows easily.

The desired result follows.
\end{proof}

We next have three lemmas that each is symmetric to \cref{lem:ovproof1}, \cref{item:ovproof2B} of \cref{lem:ovproof2}, and \cref{item:ovproof3B} of \cref{lem:ovproof3}, respectively.
The proof of each lemma is analogous to its symmetric lemma, and is omitted here.

\smallskip
\begin{restatable}{lemma}{lemovproof1b}\label{lem:ovproof1b}
For any $j,l\in [n/2]$, consider the rf-poset $\ov{\OPosetQ}_{j}^{l}=(X_{j}^{l}, \ov{Q}_{j}^{l}, \Observation_{j}^{l})$,
where $\ov{Q}_{j}^{l}$ has a single cross edge $\Write^{b_l}_1(x_2)<_{\ov{Q}_{i}^{j}} \Read^{a_j}_1(x_2)$.
Let  $\ov{\OPosetS}_{j}^{l}=(X_{j}^{l}, \ov{S}_{j}^{l}, \Observation_{j}^{l})$ be the closure of $\ov{\OPosetQ}_{j}^{l}$.
We have $\Read^{b_l}_1(x_1)<_{\ov{S}_{j}^{l}} \Write^{a_j}_1(x_1)$ iff $a_j$ and $b_l$ are not orthogonal.
\end{restatable}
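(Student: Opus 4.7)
The plan is to mirror the proof of \cref{lem:ovproof1}, exploiting the symmetry of the construction under the swap $A \leftrightarrow B$, $x_1 \leftrightarrow x_2$, $x_3 \leftrightarrow x_6$: the single cross edge of $\ov{\OPosetQ}_{j}^{l}$ is precisely the dual of the cross edge of $\OPosetQ_{j}^{l}$, and every coordinate-level ordering used in \cref{lem:ovproof1} has a dual under these swaps. The first step is immediate: applying closure to the triplet $(\Write^{a_j}_1(x_2), \Read^{a_j}_1(x_2), \Write^{b_l}_1(x_2))$ yields $\Write^{b_l}_1(x_2) <_{\ov{S}_{j}^{l}} \Write^{a_j}_1(x_2)$.

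Next, I would prove by induction on $i\in[D]$ that $\Write^{b_l}_i(x_2) <_{\ov{S}_{j}^{l}} \Write^{a_j}_i(x_2)$. The inductive step propagates \emph{upward} on $x_2$ via the $x_6$ gadgets, dualizing the downward $x_2$-propagation of \cref{lem:ovproof1}: from the inductive ordering at coordinate $i$, the thread-orderings $\Write^{b_l}_{i+1}(x_6) <_{\SeqTrace_B} \Write^{b_l}_i(x_2)$ and $\Write^{a_j}_i(x_2) <_{\SeqTrace_A} \Read^{a_j}_i(x_6)$ give $\Write^{b_l}_{i+1}(x_6) < \Read^{a_j}_i(x_6)$; closure on the triplet involving the observation $\Write^{a_j}_{i+1}(x_6)$ forces $\Write^{b_l}_{i+1}(x_6) < \Write^{a_j}_{i+1}(x_6)$, and chaining through $\Write^{b_l}_{i+1}(x_2) <_{\SeqTrace_B} \Write^{b_l}_{i+1}(x_6)$ and $\Write^{a_j}_{i+1}(x_6) <_{\SeqTrace_A} \Read^{a_j}_{i+1}(x_2)$, followed by one more closure on $x_2$, yields the ordering at coordinate $i+1$.

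At every coordinate $i$ with $a_j[i]\cdot b_l[i]=1$, the thread-orderings $\Write^{b_l}_i(x_1) <_{\SeqTrace_B} \Write^{b_l}_i(x_2)$ and $\Write^{a_j}_i(x_2) <_{\SeqTrace_A} \Write^{a_j}_i(x_1)$ transfer the ordering on $x_2$ to $\Write^{b_l}_i(x_1) < \Write^{a_j}_i(x_1)$; closure on the $x_1$-triplet then produces $\Read^{b_l}_i(x_1) <_{\ov{S}_{j}^{l}} \Write^{a_j}_i(x_1)$. A second induction propagates this \emph{downward} on $x_1$ via the $x_3$ gadgets, using the triplet $(\Write^{b_l}_{i+1}(x_3), \Read^{b_l}_i(x_3), \Write^{a_j}_{i+1}(x_3))$ in the dual role to the upward $x_1$-propagation of \cref{lem:ovproof1}, and carries the ordering down to coordinate $1$, establishing $\Read^{b_l}_1(x_1) <_{\ov{S}_{j}^{l}} \Write^{a_j}_1(x_1)$ in the non-orthogonal case. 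The converse is handled exactly as in \cref{lem:ovproof1}: if $a_j$ and $b_l$ are orthogonal, no coordinate admits the $x_2\to x_1$ transfer, so closure produces no cross edges on $x_1$. I do not anticipate real obstacles thanks to the symmetry; the only care required is to track which closure condition, (i) or (ii), applies at each step, as the roles of observation writes and conflicting writes are interchanged when the direction of a cross edge is reversed.
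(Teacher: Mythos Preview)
Your proposal is correct and is exactly what the paper intends: the paper omits the proof entirely, stating only that it is analogous to the symmetric \cref{lem:ovproof1}, and your argument carries out that symmetry by swapping $x_1\leftrightarrow x_2$ and $x_3\leftrightarrow x_6$, reversing the direction of propagation in each induction and applying the dual closure condition at each step. The chain of orderings you trace (upward propagation on $x_2$ via $x_6$, the $x_2\to x_1$ transfer at a coordinate with product $1$, then downward propagation on $x_1$ via $x_3$) checks out against the construction.
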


\begin{restatable}{lemma}{lemovproof2b}\label{lem:ovproof2b}
For any $j\in[n/2]$ and $l\geq 2$, consider the rf-poset $\ov{\OPosetQ}_{j}^{\geq l-1}=(X_{j}^{\geq l-1}, \ov{Q}_{j}^{\geq l-1}, \Observation_{j}^{\geq l-1})$,
where $\ov{Q}_{j}^{\geq l-1}$ has a single cross edge $\Write^{b_{n/2}}_l(x_2)<_{\ov{Q}_{j}^{\geq l-1}} \Read^{a_j}_1(x_2)$.
Let  $\ov{\OPosetS}_{j}^{\geq l-1}=(X_{j}^{\geq l-1}, \ov{S}_{j}^{\geq l-1}, \Observation_{j}^{\geq l-1})$ be the closure of $\ov{\OPosetQ}_{j}^{\geq l-1}$.
If $a_j$ is orthogonal to $b_{l'}$, for some $l'\in [l]$, then there are no cross edges in $(X_{j}^{l_2-1\geq l'\geq l-1}, S_{j}^{l_2-1\geq l'\geq l-1})$,
where $l_2$ is the largest $l'$ such that $a_j$ is orthogonal to $b_{l_2}$.
\end{restatable}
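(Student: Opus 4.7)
The plan is to mirror the proof of \cref{lem:ovproof2}(\ref{item:ovproof2B}), exploiting the symmetry of the closure rules for variables $x_1$, $x_2$, $x_3$, and $x_6$ between the directions $A \to B$ and $B \to A$. I would proceed by induction, descending from $l = n/2$ rather than ascending from $l = 1$, and invoke \cref{lem:ovproof1b} (the symmetric counterpart of \cref{lem:ovproof1}) as the single-pair base argument.

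For the base case ($l = n/2$, i.e., the range $\{b_{n/2-1}, b_{n/2}\}$), \cref{lem:ovproof1b} applied to the pair $(a_j, b_{n/2})$ yields that if $a_j \perp b_{n/2}$ then no cross edge of the form $\Read^{b_{n/2}}_1(x_1) < \Write^{a_j}_1(x_1)$ is introduced by the closure. Without such an ordering, the only avenue for propagation into $b_{n/2-1}$---namely the $x_4$-chain from $\Write^{b_{n/2}}(x_4)$ to $\Read^{b_{n/2-1}}(x_4)$ via the $\SeqTrace_B$-orderings of \cref{eq:x4_2}---is blocked, hence no cross edges appear in $X_j^{n/2-1}$. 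For the inductive step, assume the claim for smaller ranges. The closure of $\ov{\OPosetQ}_{j}^{\geq l-1}$ either halts within the pair $(a_j, b_{n/2})$ when $a_j \perp b_{n/2}$ (so $l_2 = n/2$ and the claim follows immediately), or, by \cref{lem:ovproof1b}, produces a reverse cross edge within $(a_j, b_{n/2})$ which then propagates to $b_{n/2-1}$ via $x_4$, analogously to the forward propagation in \cref{lem:ovproof2}; applying the induction hypothesis to the smaller range ending at $l-1$ concludes.

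The main obstacle will be verifying that the auxiliary variables $x_5$ and $x_7$---used in the construction only to move across $A$-vectors---do not create unexpected closure-induced cross edges among the $B$-vectors in the reverse direction. I expect this to follow from the observation that the $\SeqTrace_B$-orderings for $x_5$ and $x_7$ each involve a single $B$-vector ($b_{n/2}$ for $x_5$ and $b_1$ for $x_7$, per \cref{eq:x5x7_2}), so they cannot mediate a closure-induced cross edge strictly within a $B$-sub-range; only the $x_4$-chain links consecutive $B$-vectors, and its behaviour is symmetric to the forward case handled in \cref{lem:ovproof2}.
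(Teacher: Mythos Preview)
Your proposal is correct and matches the paper's approach: the paper itself omits the proof entirely, stating only that it is analogous to its symmetric counterpart \cref{lem:ovproof2}. Your plan to mirror that argument by descending induction from $l=n/2$, invoking \cref{lem:ovproof1b} for the single-pair step and using the $x_4$-chain (via the triplet $(\Write^{b_l}(x_4),\Read^{b_{l-1}}(x_4),\Write^{a_j}(x_4))$ and the orderings of \cref{eq:x4_2}) for reverse propagation between consecutive $B$-vectors, is exactly the intended symmetry; your observation that $x_5$ and $x_7$ touch only $b_{n/2}$ and $b_1$ respectively on the $B$-side and hence cannot link distinct $B$-vectors is the correct justification that no spurious cross edges arise.
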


\smallskip
\begin{restatable}{lemma}{lemovproof3b}\label{lem:ovproof3b}
For any $j\geq 2$ with $j>1$, consider the rf-poset $\ov{\OPosetQ}_{\geq j-1}^{\geq 1}=(X_{\geq j-1}^{\geq 1}, \ov{Q}_{\geq j-1}^{\geq 1}, \Observation_{\geq j-1}^{\geq 1})$,
where $\ov{Q}_{\geq j-1}^{\geq 1}$ has a single cross edge $\Write^{b_{n/2}}_1(x_2)<_{\ov{Q}_{\geq j-1}^{\geq 1}} \Read^{a_{n/2}}_1(x_2)$.
Let  $\ov{\OPosetS}_{\geq j-1}^{\geq 1}=(X_{\geq j-1}^{\geq 1}, \ov{S}_{\geq j-1}^{\geq 1}, \Observation_{\geq j-1}^{\geq 1})$ be the closure of $\ov{\OPosetQ}_{\geq j-1}^{\geq 1}$.
If $a_{j'}$ is orthogonal to $b_{l'}$ for some $j'\in[j]$ and $l'\in[n/2]$, then there are no cross edges in $(X_{j_2-1\geq j'\geq j-1}^{\geq 1}, \ov{S}_{j_2-1\geq j'\geq j-1}^{\geq 1})$, where $j_2$  is the largest $j'$ such that $a_{j'}$ is orthogonal to $b_{l'}$, for some $l'\in [n/2]$.
\end{restatable}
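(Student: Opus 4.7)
The plan is to prove \cref{lem:ovproof3b} by induction on $j$ in descending order from $n/2$ down to $2$, mirroring exactly the inductive structure of the proof of \cref{item:ovproof3B} of \cref{lem:ovproof3} but applied to the reverse propagation direction in which the closure operates on the reversed seed cross-edge $\Write^{b_{n/2}}_1(x_2) <_{\ov{Q}_{\geq j-1}^{\geq 1}} \Read^{a_{n/2}}_1(x_2)$. The technical inputs are \cref{lem:ovproof1b}, which analyses cross-edges within a single pair of rows $(a_{j'}, b_{l'})$, and \cref{lem:ovproof2b}, which analyses propagation within the row of a fixed $a_{j'}$ across decreasing indices $l'$; these are the symmetric counterparts of \cref{lem:ovproof1} and \cref{lem:ovproof2} used in the forward case.

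For the base case $j = n/2$, the seed cross-edge is confined to rows of $a_{n/2}$ and $b_{n/2}$, so \cref{lem:ovproof2b} immediately yields that no cross-edges appear in $X_{n/2}^{\leq l_2-1}$, where $l_2$ is the largest index at which $a_{n/2}$ is orthogonal to some $b_{l'}$. The only channel by which a cross-edge can escape from row $a_{n/2}$ into row $a_{n/2-1}$ flows through the events on $x_5$ and $x_7$ in reverse, and this channel is activated only when $\Read^{a_{n/2}}_1(x_2)$ is ordered below some $\Write^{b_{l'}}_1(x_2)$ for $l' < l_2$; hence the orthogonality barrier at $b_{l_2}$ blocks the escape, yielding the claim with $j_2 = n/2$. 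For the inductive step, a case split on whether $j_2 > j$ (in which case the inductive hypothesis applied to the restriction $\ov{\OPosetQ}_{\geq j}^{\geq 1}$ already gives the claim since no cross-edge can enter row $a_j$ from above) or $j_2 = j$ (in which case the symmetric analogue of the propagation analysis in \cref{lem:ovproof3} carries cross-edges down to row $a_j$, and then \cref{lem:ovproof2b} followed by \cref{lem:ovproof1b} block further descent into row $a_{j-1}$) closes the induction.

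The main obstacle will be verifying rigorously that each of the backward propagation channels associated with the variables $x_1, \ldots, x_7$ behaves as the faithful mirror of its forward counterpart, and that no transitive consequence of previously inserted closure edges in rows above $a_{j_2}$ can sneak around the orthogonality barrier via an already-completed ordering; this requires a careful channel-by-channel case analysis, but is routine once the symmetry between \cref{lem:ovproof1}/\cref{lem:ovproof1b} and \cref{lem:ovproof2}/\cref{lem:ovproof2b} is exploited.
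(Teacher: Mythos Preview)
Your proposal is correct and takes essentially the same approach as the paper: the paper explicitly states that the proof of \cref{lem:ovproof3b} is analogous to that of its symmetric counterpart (\cref{item:ovproof3B} of \cref{lem:ovproof3}) and omits it, and your plan is precisely to carry out that symmetric argument via descending induction on $j$, using \cref{lem:ovproof1b} and \cref{lem:ovproof2b} in place of \cref{lem:ovproof1} and \cref{lem:ovproof2}. The case split in your inductive step (on whether $j_2>j$ versus $j_2=j$) mirrors the ``finally, if $a_{j'}$ is orthogonal \dots'' clause and the preceding analysis in the paper's proof sketch of \cref{lem:ovproof3}.
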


Using \cref{lem:ovproof1}, \cref{lem:ovproof2}, \cref{lem:ovproof3}  we can now prove \cref{lem:closure_ov}.

\smallskip
\lemclosureov*
\begin{proof}
We show that $\OPoset$ is realizable iff there exist $j,l\in [n/2]$ such that $a_j$ is orthogonal to $b_l$.
We prove each direction separately.

\noindent{$(\Rightarrow)$.}
Since $\OPoset$ is realizable, by \cref{rem:closure}, we have that the closure of $\OPoset$ exists.
Let $\OPosetK=(X, K, \Observation)$ be the closure of $\OPoset$.
Let $\OPosetQ=(X,Q,\Observation)$ where $Q$ has a single cross edge 
$\Write^{a_1}_1(x_1)<_{Q} \Read^{b_1}_1(x_1)$,
and $\OPosetS=(X,S,\Observation)$ be the closure of $\OPosetQ$.
Note that $P\Refines Q$ and thus $K\Refines S$, and we have that 
$\Write^{a_{n/2}}_1(x_2)\not<_{S} \Read^{b_{n/2}}_1(x_2)$.
If $\Write^{a_{n/2}}_1(x_1)<_{S} \Write^{b_{n/2}}_1(x_1)$, by \cref{lem:ovproof1}, we have that $a_{n/2}$ is orthogonal to $b_{n/2}$.
Otherwise, if $\Write^{a_{n/2}}_1(x_1)<_{S} \Read^{b_{1}}_1(x_1)$,
by \cref{item:ovproof2A} of \cref{lem:ovproof2}, we have that $a_{n/2}$ is orthogonal to $b_{l}$, for some $l\in [n/2]$.
Finally, if $\Write^{a_{n/2}}_1(x_1)\not <_{S} \Read^{b_{1}}_1(x_1)$, by \cref{item:ovproof3A} of \cref{lem:ovproof3}, we have that
$a_j$ is orthogonal to $b_l$, for some $j\leq n/2-1$ and $l\in [n/2]$.

\noindent{$(\Leftarrow)$.}
Let $\OPosetQ=(X,Q,\Observation)$ where $Q$ has a single cross edge 
$\Write^{a_1}_1(x_1)<_{Q_{\leq j+1}^{\leq n/2}} \Read^{b_1}_1(x_1)$, and
$\ov{\OPosetQ}=(X,\ov{Q},\Observation)$ where $\ov{Q}$ has a single cross edge
$\Write^{b_{n/2}}_1(x_2)<_{\ov{Q}}\Read^{a_{n/2}}_1(x_2)$.
Let $\OPosetS=(X, S, \Observation)$ and $\ov{\OPosetS}=(X,\ov{S}, \Observation)$ be the closures of $\OPosetQ$ and $\ov{\OPosetQ}$, respectively.
We argue that $(X, S\cup \ov{S})$ is a poset.
Note that this implies the lemma, as $(X, S\cup \ov{S}, \Observation)$ is the closure of $\OPoset$.

Let $(j_1, l_1)$ (resp., $(j_2, l_2)$) be the lexicographically smallest (resp., largest) pair of integers in $[n/2]\times [n/2]$ such that $a_{j_1}$ is orthogonal to $b_{l_1}$ (resp., $a_{j_2}$ is orthogonal to $b_{l_2}$).
We assume wlog that $j_1=j_2=j$ and $l_1=l_2=l$, as in any other case, the relation $S\cup \ov{S}$ is strictly smaller.
First, assume that $(X_{j}^{l}, S_{j}^{l}\cup \ov{S}_{j}^{l})$ is a poset.
By \cref{item:ovproof2B} of \cref{lem:ovproof2}, we have that there are no cross edges in 
$\OPosetS_{l}^{l+1\leq l'\leq n/2}$, 
while by \cref{lem:ovproof2b}, we have that there are no cross edges in
$\ov{\OPosetS}_{j}^{l-1\geq l'\geq 1}$.
Similarly, by \cref{item:ovproof3B} of \cref{lem:ovproof3}, we have that there are no cross edges in 
$\OPosetS_{j+1\leq j'}^{\leq n/2}$, 
while by \cref{lem:ovproof3b}, we have that there are no cross edges in
$\ov{\OPosetS}_{j-1\geq j'}^{\geq 1}$.
It follows that $(X, S\cup \ov{S})$ is a poset.

It remains to argue that $(X_{j}^{l}, S_{j}^{l}\cup \ov{S}_{j}^{l})$ is a poset.
The relation $S_{j}^{l}\cup \ov{S}_{j}^{l}$ is the transitive closure of a relation
that consists of the total orders $\Trace_A\Project X_{j}^{l}$ and $\Trace_B\Project X_{j}^{l}$,
together with the following relations, for each $i\in [D]$.
\begin{align*}
\Write^{a_j}_i(x_1)<_{S_{j}^{l}} \Write^{b_l}_i(x_1)
\quad \text{and}\quad
&\Write^{a_j}_i(x_3)<_{S_{j}^{l}} \Write^{b_l}_i(x_3)
\quad\text{and}\quad\\
\Write^{b_l}_i(x_2)<_{\ov{S}_{j}^{l}} \Write^{a_j}_i(x_2)
\quad \text{and}\quad
&\Write^{b_l}_i(x_6)<_{\ov{S}_{j}^{l}} \Write^{a_j}_i(x_6)
\end{align*}
It is straightforward to verify that if $S_{j}^{l}\cup \ov{S}_{j}^{l}$ has a cycle then, for some $i\in[D]$, we have
\begin{align*}
\Write^{a_j}_i(x_2) <_{\SeqTrace_A} \Write^{a_j}_i(x_1)
\quad\text{and}\quad
\Write^{b_l}_i(x_1) <_{\SeqTrace_B} \Write^{b_l}_i(x_2)\ .
\end{align*}
Thus by construction, $a_j[i]=1$ and $b_l[i]=1$, which contradicts the fact that $a_j$ is orthogonal to $b_l$.

The desired result follows.
\end{proof}

Finally, we are ready to prove \cref{them:tree_topologies_lb}.

\smallskip
\themtreetopologieslb*
\begin{proof}
We complete the proof of \cref{them:tree_topologies_lb} by showing that, in the above construction, $\OPoset$ has a closure iff $(\Write(z), \Read(z))$ is a data race of $\Trace$, witnessed by a correct reordering $\Trace^*$.
Observe that $\OPosetQ$ has $2$ threads, and is thus tree-inducible.
Due to \cref{lem:tree_topologies}, it suffices to show that $\OPoset$ is realizable iff $(\Write(z), \Read(z))$ is a data race of $\Trace$.

\noindent{\em $(\Rightarrow)$.}
Assume that $\OPoset$ is realizable and $\Trace'$ is a witness trace.
We construct the correct reordering $\Trace^*$ as follows.
We have $\Events{\Trace^*}=\Events{\Trace}\setminus \{ \Release_2(\ell), \Write(z), \Read(z) \}$.
We make $\Trace^*$ identical to $\Trace'$, i.e., $\Trace^*\Project X = \Trace'$.
For the events $\Events{\Trace^*}\setminus X$, we make
(i) $\Write(y)$ appear right after $\Write^{b_{n/2}}_1 (x_2)$,
(ii) $\Read(y)$ appear right before $\Read_1^{a_{n/2}}(x_2)$,
(iii)~$\Acquire_A(\ell), \Release_A(\ell)$ appear right after $\Write^{a_1}(x_1)$, and
(iv)~$\Acquire_B(\ell)$ appear last in $\Trace^*$.
Observe that $\Trace^*$ is a correct reordering in which $\Write(z)$ and $\Read(z)$ are enabled, hence $\Trace^*$ is a witness of the data race $(\Write(z), \Read(z))$.

\noindent{\em $(\Leftarrow)$.}
Assume that $(\Write(z), \Read(z))$ is a predictable data race of $\Trace$,
and let $\Trace^*$ be a correct reordering of $\Trace$ witnessing the data race.
We construct a trace $\Trace'$ as $\Trace'=\Trace^*\Project X$, and argue that $\Trace'$ realizes $\OPoset$.
Clearly $\Observation_{\Trace'}=\Observation$.
To see that $\Trace'$ is a linearization of $(X,P)$,
it suffices to argue that 
\begin{align*}
\Write^{b_{n/2}}_1 (x_2) <_{\Trace'} \Read^{a_{n/2}}_1(x_2)
\quad\text{and}\quad
\Write^{a_1}_1(x_1) <_{\Trace'} \Read^{b_1}_1(x_1)
\end{align*}
The first ordering follows from the fact that
\begin{align*}
\Write^{b_{n/2}}_1 (x_2) <_{\Trace^*} \Write(y) <_{\Trace^*} \Read_y <_{\Trace^*} \Read^{a_{n/2}}_1(x_2)\ .
\end{align*}
For the second ordering, observe that
$\Events{\Trace^*}=\Events{\Trace}\setminus \{ \Release_2(\ell), \Write(z), \Read(z) \}$.
Hence, 
\begin{align*}
\Write^{a_1}(x_1) <_{\Trace^*} \Release_1(\ell)<_{\Trace^*}\Acquire_2(\ell) <_{\Trace^*} \Read_a^{b_1}(x_1)\ .
\end{align*}

The desired result follows.
\end{proof}
\section{Details of {\cref{sec:small_distance}}}\label{sec:proofs_small_distance}

In this section we present the full proof of \cref{lem:reversal_realizability}.

\smallskip
\lemreversalrealizability*
\begin{proof}
Let $\OPoset=(X,P,\Observation)$ be the canonical rf-poset of $X$,
and the task is to decide the realizability of $\OPoset$ with $\ell$ reversals.
We describe a recursive algorithm for solving the problem for some rf-poset $\OPosetQ=(X,Q,\Observation)$  with $\ell'$ reversals, for some $\ell'\leq \ell$, where initially $Q=P$ and $\ell'=\ell$.
We first give a high-level description and argue about its correctness.
Afterwards, we describe some low-level details that allow us  to reason about the complexity.

\noindent{\em Algorithm and correctness.}
Consider the set
\begin{align*}
C=&\{ (\Write_1, \Write_2)\in \WritesAcquires{X}\times \WritesAcquires{X}\colon \Confl{\Write_1}{\Write_2}
\quad\text{and}\\
&\Unordered{\Write_1}{Q}{\Write_2} \text{ and } \Write_1<_{\Trace}\Write_2 \}\ .
\end{align*}
We construct a graph $G_1=(X,E_1)$, where $E_1=(\TOO\Project X) \cup C$.
Note that $G_1$ is write-ordered. 
If it is acyclic, we construct the read extension $G_2$ of $G_1$.
Observe that if $G_2$ is acyclic then any linearization $\Trace^*$ of $G$ realizes $\OPosetQ$, hence we are done.
Now consider that either $G_1$ or $G_2$ is not acyclic,
and let $G=G_1$ if $G_1$ is not acyclic, otherwise $G=G_2$.
Given a cycle $\Cycle$ of $G$, represented as a collection of edges, 
define the set of \emph{cross-edges} of $\Cycle$ as $\Cycle \setminus Q$.
Observe that since there are $k$ threads, $G$ has a cycle with $\leq k$ cross edges.
Indeed, if there are more than $k$ cross edges in $\Cycle$, then there are two cross edges that go out of two nodes that belong to the same thread and hence are connected by a path in $G$.
Then, we can simply remove one of these edges and $G$ will still have a cycle.
Hence, we can repeat this process until we end up with a cycle that has $\leq k$ cross edges.
In addition, any trace $\Trace^*$ that realizes $\OPosetQ$ must linearize 
an rf-poset $(X,Q_a, \Observation)$
 where $a=(\Event_1, \Event_2)$ ranges over the cross-edges of $\Cycle.$
In particular, we take $Q_a=Q\cup \{ b \}$, where
\begin{align*}
b=
\begin{cases}
(\Event_2, \Event_1), & \text{ if } a\in \WritesAcquires{X}\times \WritesAcquires{X}\\
(\Observation(\Event_2), \Event_1), & \text{ if } a \in \WritesAcquires{X} \times \ReadsReleases{X}\\
(\Event_2, \Observation(\Event_1)), & \text{ if } a \in \ReadsReleases{X} \times \WritesAcquires{X}\ .
\end{cases}
\end{align*}

Observe that any such choice of $b$ reverses the order of two conflicting write events or lock-acquire events in $\Trace$.
Since there are $\leq k$ cross edges in $\Cycle$, there are $\leq k$ such choices for $Q_a$.
Repeating the same process recursively for the rf-poset $(X, Q_a, \Observation)$ for $\ell'-1$ levels solves the $\ell'$-distance-bounded realizability problem for $\OPosetQ$.
Since initially $\ell'=\ell$ and $Q=P$, this process solves the same problem for $\OPoset$ and thus for $X$.

\noindent{\em Complexity.}
The recursion tree above has branching $\leq k$ and depth $\leq \ell$, hence there will be at most $k^{\ell}$ recursive instances.
We now provide some lower-level algorithmic details which show that each instance can be solved in $O(k^{O(1)}\cdot n)$ time.
The main idea is that each of the graphs $G_1$ and $G_2$ have a sparse transitive reduction~\cite{Aho72} of size $O(k\cdot n)$, and thus we can represent $G_1$ and $G_2$ with few edges.

For the graph $G_1$, we construct a sparse graph $G'_1$ that preserves the reachability relationships of $G_1$ as follows.
We traverse the trace $\Trace$ top-down. 
For the current event $\Write$ such that $\Write\in \WritesAcquires{X}$,
for every $i\in [k]$, let $\Write_1$ be the last event of thread $i$ which precedes $\Write$ in $\Trace$, and such that $\Write_1\in \WritesAcquires{X}$ and $\Confl{\Write}{\Write_1}$.
Let $\Write'_1$ be the last event of $\Trace$ such that $\Write'_1\leq_{\TO} \Write_1$
and $\Write\not <_{Q} \Write'_1$ (note that possibly $\Write'_1=\Write_1$).
If such $\Write'_1$ exists, since $X$ is an ideal, we have $\Write'_1\in X$.
We introduce the edge $\Write'_1\to \Write$ in $G'_1$.

Similarly, for the graph $G_2$, we construct a sparse graph $G'_2$ that preserves the reachability relationships of $G_2$ as follows.
We iterate over all read and lock-release events $\Read\in X$.
For each such $\Read$ and thread $i\in [k]$, we insert two edges $(\Write_2, \Read)$ and $(\Read, \Write_3)$ in $G'_2$, where $\Write_2$ (resp., $\Write_3$) is the latest predecessor (resp., earliest successor) of $\Observation(\Read)$ in thread $i$ that conflicts with $\Read$.

We now argue that all events $\Write_1$, $\Write'_1$, $\Write_2$ and $\Write_3$ above can be identified in $O(n)$ total time.
For every $i\in [k]$ and memory location $x$, we construct a total order $\SeqTrace_i^{x}$ of all write events or lock-acquire events on location $x$ of thread $\Process_i$.
Clearly all such total orders can be constructed in $O(n)$ time.
\begin{compactenum}
\item Events $\Write_1$:~as we traverse $\Trace$ top-down, we simply remember for each memory location $x$, the last write event or lock-acquire event on location $x$ for each thread.
Hence, the total time for identifying all such events is $O(n)$.
\item Events $\Write'_1$:~Given the event $\Write_1$ on memory location $x$, we simply traverse the total order $\SeqTrace_i^{x}$ from $\Write'_1$ backwards until we find either
(i)~an event $\Write'_1$ with the desired properties, or
(ii)~an event $\Write''_1$ that has been examined before when inserting edges to thread $\Process_i$.
In the first case, we simply add the edge $(\Write'_1\to \Write)$ as described above.
In the second case we do nothing, as the desired ordering is already present due to transitivity through a write event $\Write'$ of thread $\Process_i$ that has been examined before (thus $\Write'<_{\TO}\Write$).
Hence, every event in every total order $\SeqTrace_{i}^{x}$ is examined $O(1)$ times, thus the total time for identifying all $\Write'_1$ events is $O(n)$.
\item Events $\Write_2$ and $\Write_3$:~after we have constructed $G'_1$, each pair of such events can be retrieved from $G'_1$ in $O(1)$ time. Hence the total time for identifying all such events is $O(n)$.
\end{compactenum}

Finally, the above process creates graphs $G'_1$ and $G'_2$ that have $O(k^{O(1)}\cdot n)$ edges.
Detecting a cycle $\Cycle'$ in either $G'_1$ or $G'_2$ is done by a simple DFS, which takes linear time in the number of edges.
Converting $\Cycle'$ to a cycle $\Cycle$ such that $\Cycle$ has at most $k$ cross edges can be done in $O(|\Cycle'|)=O(k^{O(1)}\cdot n)$ time, by removing multiple edges whose endpoints are $\TO$-ordered.

The desired result follows.
\end{proof}

\end{document}